\documentclass[11pt,letterpaper]{article}
\usepackage{amsmath,amsfonts,amsthm,amssymb}
\usepackage{setspace}
\usepackage{fancyhdr}
\usepackage{lastpage}
\usepackage{extramarks}
\usepackage{xspace}
\usepackage{chngpage}
\usepackage{comment}

\usepackage[ruled,linesnumbered]{algorithm2e} 

\usepackage{soul,color}
\usepackage{graphicx,float,wrapfig}
\usepackage[font=small,labelfont=bf]{caption}
\usepackage[margin=1in]{geometry}
\usepackage{enumitem}
\usepackage{thmtools,thm-restate}
\usepackage{bbm}

\setlist{nosep}

\allowdisplaybreaks

\usepackage[dvipsnames]{xcolor}
\usepackage[linktocpage=true,breaklinks,colorlinks,citecolor=blue,linkcolor=BrickRed]{hyperref}

\usepackage{wrapfig}
  
\title{\textmd{\bf Online Advertising with Spatial Interactions  }}

\date{\today}
\author{
Gagan Aggarwal\thanks{Google Research (\texttt{gagana@google.com}). }
\and
Yifan Wang\thanks{School of Computer Science, Georgia Tech (\texttt{ywang3782@gatech.edu}). This work was done while the author was visiting Google as a Student Researcher.}
\and
Mingfei Zhao\thanks{Google Research (\texttt{mingfei@google.com}).}
}

\usepackage[usenames,dvipsnames]{xcolor}
\usepackage[linktocpage=true,breaklinks,colorlinks,citecolor=blue,linkcolor=BrickRed]{hyperref}
\usepackage{cleveref}

\newlist{constraints}{enumerate}{1}
\setlist[constraints]{%
  label=\textbullet,             
  leftmargin=2em,
  labelsep=0.5em,
  align=left,
  ref=\Roman*                    
}

\crefname{constraintsi}{Constraint}{Constraints}
\Crefname{constraintsi}{Constraint}{Constraints}

\crefformat{constraintsi}{Constraint~(#2#1#3)}
\Crefformat{constraintsi}{Constraint~(#2#1#3)}

\DeclareUnicodeCharacter{2217}{*}

\newtheorem{Theorem}{Theorem}[section]
\newtheorem{Lemma}[Theorem]{Lemma}

\newtheorem{Definition}[Theorem]{Definition}

\newtheorem{Claim}[Theorem]{Claim}
\newtheorem{Result}[Theorem]{Result}

\newcommand{\parta}{(\text{\uppercase\expandafter{\romannumeral1}})}
\newcommand{\partb}{(\text{\uppercase\expandafter{\romannumeral2}})}
\newcommand{\partc}{(\text{\uppercase\expandafter{\romannumeral3}})}

\newcommand{\alg}{\mathsf{Alg}}

\newcommand{\X}{\mathcal{X}}

\newcommand{\E}{\mathbb{E}}

\newcommand{\sw}{\mathsf{SW}}
\newcommand{\val}{\mathsf{Val}}
\renewcommand{\deg}{\mathsf{deg}}
\newcommand{\mul}{\text{mul}}

\newcommand{\OTild}{\widetilde{O}}
\newcommand{\one}{\mathbf{1}}%
\newcommand{\pr}{\mathbf{Pr}} 

\newcommand{\ignore}[1]{{}}
\newcommand{\R}{\mathbb{R}}

\newcommand{\D}{\mathcal{D}}

\newcommand{\poly}{\mathsf{poly}}

\newcommand{\IGNORE}[1]{}
\newcounter{note}[section]

\begin{document}


\maketitle \thispagestyle{empty}

\begin{abstract}

Online advertising platforms must decide how to allocate multiple ads across limited screen real estate, where each ad’s effectiveness depends not only on its own placement but also on nearby ads competing for user attention. Such spatial externalities — arising from proximity, clutter, or crowding — can significantly alter welfare and revenue outcomes, yet existing auction and allocation models typically treat ad slots as independent or ordered along a single dimension.

We introduce a new framework for spatial externalities in online advertising, in which the value of an ad depends on both its slot and the configuration of surrounding ads. We model ad slots as points in a metric space, and model an advertiser's value as a function of both their bid and a discount factor determined by the configuration of other displayed ads. Within this framework, we analyze two natural models. For the Nearest-Neighbor model, where the value suppression depends only on the closest neighboring ad, we present a polynomial-time algorithm that achieves a constant approximation for the general case. We show that the allocation rule is monotone and can be implemented as a truthful mechanism. For a structured setting of 2D Euclidean space, we provide a PTAS. In contrast, for the Product-Distance model, where interference is aggregated multiplicatively across all neighbors, we establish a strong (and nearly-tight) hardness of approximation --  no polynomial-time algorithm can achieve any polynomial-factor approximation unless P=NP, via a reduction from Max-Independent-Set.

Our results provide a foundation for reasoning about spatial externalities in ad allocation and for designing efficient, truthful mechanisms under such interactions.

\end{abstract}

\newpage

\section{Introduction}




Online advertising, a half-trillion-dollar industry \cite{EMK-2023}, represents a cornerstone of the modern digital economy. Ads are typically presented as a slate -- multiple creatives shown simultaneously in distinct locations on a page -- and chosen via real-time auctions. Platforms aim to maximize revenue or welfare, commonly achieved through optimizing expected yield, a metric combining the advertiser's bid with the predicted  Click-Through Rate (CTR) \cite{AGM-EC06,EOS-AER07,Varian-2007}.


A standard approach is to auction off every available ad slot as if its yield were independent of other ad slots -- evaluating an ad placed in the slot independently of ads placed in other slots. 
In practice, webpage real estate and user attention are scarce: when multiple ads are shown together, they compete for clicks, and an ad’s realized performance depends not only on its own placement but also on the number and placement of neighboring ads. The result is a web of negative 
externalities across the slate. This raises a natural question: \emph{how should an ad platform allocate ads across slots when the presence of one ad can have negative externalities on the effectiveness of other ads?}

Prior work has studied several important forms of such externalities. \cite{GM-WWW08} initiates the study of externalities in online advertising and introduces audience-side choice models that show that high-quality ads can steal attention from others.  Cascade and Markovian scan models capture how users traverse positions and how early interactions depress engagement with lower positions \cite{KM-WINE08, AFMP-WINE08}. \cite{HM-WINE14} develops a brand-effect model in position auctions, where the presence of strong brand ads generates asymmetric externalities on other advertisers. 
There is also a lot of empirical work that provides evidence of interference across ads. For example, \cite{CZTR-WSDM08} provides an experimental analysis of click position-bias models, showing that the performance of an ad depends critically on the surrounding slate. \cite{HRK-JAR09} shows how visual clutter on a webpage reduces advertising effectiveness, offering empirical evidence of inter-ad interference. While these frameworks demonstrate that externalities matter, they typically treat position effects in one dimension or at the level of the co-displayed set, abstracting away the two-dimensional layout of the page and the role of spatial proximity.

In this paper, we introduce a complementary perspective to modeling externalities in online advertising that emphasizes \emph{spatial effects} across ad slots. Specifically, we introduce a framework in which the value of an ad depends not only on its own slot but also on the spatial configuration of neighboring ads on the page -- capturing how proximity, adjacency, and local crowding modulate ad effectiveness. This spatial-competition model departs from existing externality frameworks by making layout and distance first-class primitives. Within this framework, we design simple allocation and auction mechanisms with provable performance guarantees for both welfare and revenue, providing new insights into optimizing slates when ads can cannibalize one another through spatial crowding.

\subsection{Our Model} 
A platform has a webpage with $m$ ad slots. There are $n$ advertisers competing for these ad slots, where advertiser $i$ has a baseline valuation $v_{i,j}$ for ad slot $j$. We assume, without loss of generality, that $n \geq m$, since additional "virtual" advertisers with zero value ($v_{i, j} = 0$) can be introduced as needed. The platform's goal is to allocate ads to ad slots to maximize total social welfare.

If ad slots were independent, so that an advertiser’s value for a slot did not depend on which other slots are filled, the problem would reduce to the standard bipartite matching problem. However, in realistic settings, ads placed in nearby positions compete for user attention, creating negative externalities across the slate. To capture this negative effect while maintaining tractability, we make the following modeling assumptions:

\begin{itemize}
    \item The externality on an ad’s realized value depends only on the set of slots selected for display, and not on which advertiser is assigned to each slot. This reflects the intuition that user attention is shaped primarily by spatial crowding rather than by the identity of advertisers shown in others slots. 
    \item The ad slots are embedded in a metric space -- for example, a two-dimensional plane representing the webpage layout -- where the distance between two slots measures their spatial proximity. This captures the natural idea that the negative impact of crowding diminishes with distance, being stronger for nearby ads and weaker for distant ones.
\end{itemize}

Note that with the above assumptions, without loss we can further set up the objective function of our model as follows: let $M \subseteq [n] \times [m]$ denote a matching between advertisers and ad slots, where $(i, j) \in M$ represents that advertiser $i$ is assigned to ad slot $j$. Let $S(M) = \{j | (i, j) \in M\}$ be the set of slots occupied under matching $M$. Our social welfare objective can be defined as
\begin{align}
    \sw(M) ~:=~ \sum_{(i,j) \in M} v_{i,j} \cdot \delta(j, S(M)), \label{eq:sw}
\end{align}
where function $\delta(j, S(M))$ is a discount function capturing how the presence of other occupied slots $S \setminus \{j\}$ suppresses the value of placing advertiser $i$ at slot $j$.  Then, there must be $\delta(j, S) \in [0, 1]$, as it reflects a discount on each advertisement's value, and $\delta(j, S)$ should be a non-increasing function of $S$, as intuitively the more ads are placed in other slots, the greater the negative impact is on slot $j$. Furthermore, let $d(\cdot, \cdot): [m] \cdot [m] \to [0, 1]$ be the distance function of the metric space for ad slots $[m]$, where we rescale the metric to assume that the diameter (the maximum possible distance between any two slots) of the metric space is $1$. Then, $\delta(j, S)$ should only depend on the values of $d(j, j')$ for $j' \in S \setminus \{j\}$.


We remark that our use of a \emph{metric space} to build the objective function of the model enables a much broader modeling scope that goes beyond the physical distance between slots, as it may encode any notion of \emph{similarity or substitutability} between slots. For instance, two distant slots on a webpage may serve similar user contexts, leading to high competition for user attention. By properly establishing a suitable metric embedding, we can capture the phenomenon that placing ads simultaneously in these two similar slots induces significant negative externalities. Generally speaking, as the metric can capture semantic similarity, functional overlap, or even competition for the same user intent, this flexibility makes our model a unifying abstraction that can accommodate diverse sources of inter-ad interference in online advertising.

With this flexibility in mind, the next key modeling decision is how to specify the discount function $\delta(j, S)$. Different forms of $\delta$ capture different intuitions about how crowding affects ad effectiveness. We introduce two natural instantiations that serve as canonical examples for our framework.

\paragraph{Nearest-Neighbor model.}
We define the discount factor at slot $j$ under occupied set $S$ by its distance to the closest other occupied slot:
\[
\delta_{\min}(j,S)\;:=\;
\begin{cases}
1, & \text{if } S=\{j\},\\[6pt]
\displaystyle \min_{j'\in S\setminus\{j\}} d(j,j'), & \text{otherwise.}
\end{cases}
\]
This captures the intuition that the dominant interference comes from the nearest competing ad, while more distant ads have negligible effect. 
Since all distances $d(j,j')$ are in $[0,1]$, it follows that $\delta_{\min}(j,S)\in[0,1]$, and the monotonicity of $\delta_{\min}$ is guaranteed by the non-increasing property of the minimum operator. 
Analytically, the Nearest-Neighbor model is simpler and highlights local crowding effects.

\paragraph{Product-Distance model.}
We define the discount factor at slot $j$ under occupied set $S$ by the product of pairwise distances:
\[
\delta_{\mul}(j,S)\;:=\;
\begin{cases}
1, & \text{if } S=\{j\},\\[4pt]
\displaystyle \prod_{j'\in S\setminus\{j\}} d(j,j'), & \text{otherwise.}
\end{cases}
\]
Here, each additional occupied slot $j'$ imposes an independent attenuation $d(j,j')$ on the effective value at $j$, and the total discount is the product of all such effects. 
As $d(j, j') \in [0, 1]$, we have $\delta_{\mul}(j,S)$ lies in $[0,1]$, and  is monotone non-increasing in the set $S$.
This model reflects a more global view of spatial competition, where every occupied slot contributes to the reduction in value.

\paragraph{Remark.} We remark that the Nearest-Neighbor model and the Product-Distance model capture the two most natural extremes: one where interference is dominated by the closest neighbor, and another where every neighbor contributes multiplicatively. Other formulations are certainly possible—for example, weighted combinations or more complex attention-decay functions—but we leave them to future work. Moreover, as we show later, even the Product-Distance model already leads to significant analytical challenges, which makes studying more general forms of $\delta$ intractable at present.

\paragraph{Allocation and auction.} Finally, we note that our framework naturally gives rise to two distinct versions of the problem that we analyze in this paper. The first is the \emph{allocation} version, where all valuations $v_{i,j}$ are known to the mechanism and the task is to compute a matching that maximizes the social welfare objective defined above. The second is the \emph{auction} version, where advertisers’ valuations are private inputs, and the mechanism must elicit them truthfully while ensuring that the resulting allocation is incentive compatible. Our results will address both versions, highlighting the algorithmic and mechanism-design aspects of ad allocation under spatial externalities.

\subsection{Our Results}

In this paper, we study the allocation and auction problems under the two spatial models introduced above: the Nearest-Neighbor model and the Product-Distance model. Our objective is to understand the best outcomes that can be achieved in polynomial time. Since computing the exact optimal allocation may require exponential time, we measure the effectiveness of algorithms by their approximation ratio with respect to the true optimum. Throughout, we present both positive and negative results for these models, highlighting the gap between what is information-theoretically possible and what can be achieved efficiently in practice.

\paragraph{Nearest-Neighbor model: a constant approximation.} We first provide our main result for the Nearest-Neighbor model. The work most closely related to our Nearest-Neighbor model is  \cite{BGMS-NeurIPS16}, which aimed to find diverse elements (slots in our model) in metric spaces. It maps to the allocation problem in our model where all valuations $v_{i,j}$ are equal to $1$.
They established two key results: first, assuming the planted clique conjecture, this restricted problem admits no polynomial-time algorithm with approximation ratio better than $2$; and second, they proposed an LP relaxation that achieves a $2e$-approximation. Together, these results imply that for the Nearest-Neighbor model, constant-factor approximations are the best one can hope for.

Our first result for this model is a polynomial-time algorithm that achieves a constant approximation ratio. Moreover, the allocation rule is monotone and therefore can be converted into a truthful mechanism:


\begin{Result}[Informal \Cref{thm:nn-alloc-constant} and \Cref{thm:nn-auc}]
\label{res:constant}
    For Nearest-Neighbor model, there exists a polynomial time algorithm that achieves an $18$-approximation against the optimal social welfare. Furthermore, the algorithm is monotone in the bids of each advertiser and can be converted into a truthful mechanism.
\end{Result}

The main idea behind our approach is to generalize the LP relaxation of \cite{BGMS-NeurIPS16} from their simplified setting to the full Nearest-Neighbor matching model, and provide an efficient rounding algorithm for the optimal LP solution. 

An additional challenge for our setting is that the rounding algorithm in \cite{BGMS-NeurIPS16} does not lead to a truthful auction. To design a truthful auction, we carefully construct the probabilities in our rounding algorithm so that it can be further transformed into a truthful mechanism via Myerson’s Lemma \cite{Myerson1981}.



\paragraph{Stronger results for Nearest-Neighbor model under additional assumptions.} 
For the most general setting of the Nearest-Neighbor model, we established a constant-factor approximation algorithm. While this resolves the question of whether constant approximations are possible, the result has two limitations. 

The first limitation is about the simplicity and efficiency of the algorithm. Our algorithm is based on solving a linear programming relaxation, which makes it relatively slow in an auction environment: the LP needs to be solved polynomially many times to compute each advertiser’s payment. However, in reality, when the advertisers arrive with their bids,  a simpler algorithm is more appreciated to compute the matching between advertisers and slots together with the payments in a shorter time. Inspired by this, we present a simpler algorithm  Nearest-Neighbor model under an extra assumption that the advertisers' valuations can be factorized:

\begin{Result}[Informal \Cref{thm:simple-general} and \Cref{thm:simple-stochastic}]
\label{res:simultaneous}
    For Nearest-Neighbor modTel, assuming each advertiser’s valuation factorizes as $w_i \cdot u_j$, where $w_i$ is a private parameter of advertiser $i$ and $u_j$ is a public value associated with slot $j$.  Then, there exists a simple auction algorithm that achieves an $O(\log m)$-approximation. Furthermore, if each $w_i$ is drawn from a publicly known distribution $D_i$, the approximation ratio can be improved to a constant. 
\end{Result}

The auction algorithm we run for \Cref{res:simultaneous} can be decomposed into two steps. Firstly, we select a subset of slots that will ultimately be used for placing ads. 
Since this step does not depend on advertisers’ private information, the subset $S$ can be computed offline in advance. Secondly, once the advertisers’ private values $w_i$ are revealed, we apply a simple greedy allocation: assign the advertisers in decreasing order of $w_i$ to the slot in decreasing order of $u_j \cdot \min_{j'\in S} d(j, j')$. This allocation rule is monotone, and therefore can be converted into a truthful auction algorithm. Furthermore, the simplicity of the algorithm allows it to be efficiently implemented in practice, making it well-suited for online advertising platforms where speed is essential.

The second limitation of our $1/18$-approximation algorithm is that the approximation factor we obtain is a bit far from $1$. When the metric is captured by the Euclidean distance in the 2D plane, we provide a PTAS algorithm, which gives an approximation ratio arbitrarily close to $1$:

\begin{Result}[Informal \Cref{thm:ptas}]
\label{res:ptas}
    For Nearest-Neighbor model, assuming each $v_{i, j} = 1$ and the metric $d (\cdot, \cdot)$ is  given by Euclidean distances in the 2D plane, there exists a polynomial-time approximation scheme (PTAS) allocation algorithm that achieves $(1 - \epsilon)$-approximation against the optimal allocation.
\end{Result}

The main idea of \Cref{res:ptas} follows from the PTAS algorithm  for selecting independent disks in 2D Euclidean space \cite{EJS-SICOMP05}. Furthermore, the existence of this PTAS shows that the when the metric is the Euclidean distance in 2D plane, it is strictly easier than the general metric case -- this follows from the hardness result established in \cite{BGMS-NeurIPS16} ruling out approximations better than~2 under general metrics. 

\paragraph{Product-Distance model: No non-trivial approximation is achievable.} In sharp contrast to the Nearest-Neighbor model, we show that in the Product-Distance model, it is hard to approximate in polynomial time. We prove the hardness by reducing the \textsc{Max-Independent-Set} (MIS) problem to the allocation problem in this model. It is well known that finding the maximum independent set on a graph $G=(V, E)$ does not admit a polynomial time algorithm that gives a $|V|^{1 - \epsilon}$ approximation (e.g., see \cite{Zuckerman-ToC07}). Therefore, our reduction leads to the following:


\begin{Result}[Informal \Cref{thm:pd-hard}]
The allocation problem under the Product-Distance model admits no polynomial time approximation algorithm with ratio $m^{1-\varepsilon}$ for any $\varepsilon > 0$. 
\end{Result}

Observe that if we restrict the allocation to consist of a single ad slot, then the resulting welfare is clearly an $m$-approximation of the optimal allocation in the Product-Distance model, as it is already an $m$-approximation even when the discount factor $\delta_{\mul}(j, S(M))$ is not applied in the objective. Furthermore, the optimal allocation when restricting to only single slot is monotone in the bids of each advertiser, and therefore can be transformed into a truthful auction using standard Myerson's Lemma \cite{Myerson1981}. Therefore, the hardness result together with the trivial $m$-approximation shows that our understanding of the Product-Distance model is essentially tight. 

\subsection{Related Work}

\paragraph{Externalities in online advertising.} Externalities in online advertising have been studied along several complementary dimensions. \cite{GM-WWW08} initiated the study of externalities in online advertising and introduced audience-choice models to formalize how co-displayed ads compete for limited attention, with higher-quality ads siphoning clicks from others. Scan-based models capture sequential browsing behavior: in cascade and Markovian formulations, interaction with early positions depresses engagement with lower slots and breaks simple rank-by-(bid$\times$CTR) optimality \cite{KM-WINE08, AFMP-WINE08}. Empirically, position-bias studies document that click outcomes are shaped by the surrounding slate \cite{CZTR-WSDM08}, and work on visual clutter shows that crowding reduces advertising effectiveness \cite{HRK-JAR09} (see also recent evidence on digital clutter and memory effects by Williams and Lynch, 2024). On the mechanism-design side, externalities within keyword auctions have been analyzed both theoretically and empirically: \cite{GK-WINE08} study equilibria and efficiency of GSP with externalities, while \cite{GIM-WINE09} combine theory with data from a large search engine to quantify their impact. Together, these strands establish that inter-ad effects matter, though most frameworks emphasize sequential position effects or set-level dependencies rather than the spatial layout and proximity effects that motivate our model.

\paragraph{Diverse subsets in metric space and dispersion problem.}The algorithmic problem of finding the optimal allocation in our model is related to the well-studied problem of selecting diverse subsets in metric spaces and the closely related dispersion problems. Both problems aim to select a subset of $k$ facilities from $m$ locations to maximize some diversity objective. \cite{BGMS-NeurIPS16} introduce linear-programming relaxations that yield the first constant-factor approximation for the sum-min diversity objective. \cite{BorodinJLY17} analyze the max-sum diversification objective, providing approximation guarantees under matroid and knapsack constraints. \cite{CevallosEZ19} show that local search achieves a $(1-O(1/k))$-approximation for max-sum diversification over negative-type metrics (and under matroid constraints). For large-scale data, \cite{IndykMMM14} develop composable coresets that enable distributed or streaming approximation for both diversity and coverage objectives. Moreover, \cite{CeccarelloPPU17} design space and round-efficient MapReduce and streaming algorithms in doubling metrics. \cite{CeccarelloPP18} further give approximation schemes specialized to the doubling-dimension setting. 

In the closely related dispersion problem literature, \cite{FeketeM04} consider the objective of the average distance between selected locations. They provide a linear-time algorithm to solve the problem exactly in d-dimensional space when k is a constant, and a polynomial time approximation scheme (PTAS) when k is not fixed. \cite{DumitrescuJ10} obtain improved constant-factor approximation algorithms for dispersion in disks. \cite{Pisinger06} derives strong upper bounds and exact algorithms for the p-dispersion-sum problem, where the goal is to maximize the distance sum between selected locations. More recently, \cite{MishraD21} study the c-dispersion problem in general metrics, which aims to maximize the minimum distance from a point to its nearest $c$ points. They give a simple greedy $2c$-approximation, along with parameterized hardness results.

Unlike the aforementioned works that select a subset of locations (slots), our paper study the allocation problem between advertisers and slots. The added complexity arises because advertisers’ valuations are heterogeneous and, in mechanism design, private.

\paragraph{PTAS in disk graphs.} Our PTAS algorithm in the nearest-neighbor model in the 2D Euclidean space is closely-related to the algorithm for selecting independent disks. In disk (and disk-like) intersection graphs, a PTAS is known to exist in classic problems such as Maximum Weight Independent Set (MWIS) and Minimum Weight Vertex Cover \cite{ErlebachJS05}. For unit disk graphs, \cite{NiebergHK04} designed a robust PTAS for MWIS that does not require an explicit geometric embedding. \cite{Marx05} studied when PTASes can be upgraded to EPTAS and identified obstacles for such improvements on geometric problems, including hardness for MIS in unit disk and unit square graphs. Strengthening and systematizing these directions, \cite{vanLeeuwen06} build a general framework for designing EPTASes for vertex-deletion problems on disk graphs, including Vertex Cover. For more general families beyond disks, \cite{ChanHP12} obtained a PTAS for MIS of pseudo-disks via local search and a constant-factor approximation for the MWIS using an LP rounding scheme.

\section{Constant Approximation for Nearest-Neighbor Model}
\label{sec:LP}

In this section, we present our algorithm  for the Nearest-Neighbor model and show \Cref{res:constant}.

\subsection{Constant Approximation Allocation}

We first show that there exists an allocation algorithm that achieves a constant approximation for the Nearest-Neighbor model. To be specific, we prove the following theorem:
\begin{Theorem}
\label{thm:nn-alloc-constant}
    For Nearest-Neighbor model, there exists a polynomial time algorithm that achieves $18$-approximation against the optimal social welfare.
\end{Theorem}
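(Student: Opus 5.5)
We generalize the LP relaxation of \cite{BGMS-NeurIPS16} to the matching setting and round it. The starting observation is that in any allocation $M$, each occupied slot $j$ has a nearest other occupied slot at distance $r_j=\delta_{\min}(j,S(M))$ (or $r_j=1$ if it is alone). The open balls $B(j,r_j/2)$ then contain no other occupied slot's center closer than $r_j/2$. In particular, for any radius $r$, two occupied slots $j,j'$ with $r_j,r_{j'}\ge r$ satisfy $d(j,j')\ge r$. We discretize radii into geometric scales $r\in\{1,2^{-1},2^{-2},\dots\}$, stopping at $1/\poly(m)$; this loses a factor $2$ together with a negligible additive term. We introduce variables $x_{i,j,r}\ge 0$, meaning that advertiser $i$ is placed at slot $j$ with nearest-neighbor distance about $r$. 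The objective is $\max\sum_{i,j,r} v_{i,j}\, r\, x_{i,j,r}$. The constraints are:
\begin{constraints}
\item matching: $\sum_{j,r}x_{i,j,r}\le 1$ for each $i$, and $\sum_{i,r}x_{i,j,r}\le 1$ for each $j$;
\item packing: for every slot $c$ and every scale $r$, $\sum_{i}\sum_{j: d(j,c)< r/2}\sum_{r'\ge r} x_{i,j,r'}\le 1$.
\end{constraints}
The packing constraint holds for the integral optimum because at most one occupied slot of radius $\ge r$ lies in the ball of radius $r/2$ around $c$, by the triangle inequality. Hence the LP optimum is at least $\mathrm{OPT}/2$, up to the discretization loss, and the LP has polynomial size.

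The rounding is a randomized greedy by decreasing scale, in the spirit of \cite{BGMS-NeurIPS16}. Set $y_{j,r}=\sum_i x_{i,j,r}$. We process the scales from the largest to the smallest. At scale $r$, each slot $j$ independently proposes itself with probability proportional to $y_{j,r}$, scaled down by a constant $\alpha$. A proposed slot is accepted only if no already accepted slot and no other proposed slot at the same scale lies within distance $\beta r$, for some constant $\beta<1/2$. Once all slots are selected, we choose the advertiser assignment by a max-weight bipartite matching with weights $v_{i,j}\cdot\delta_{\min}(j,S)$ on the chosen set $S$. This step can only improve on the fractional assignment induced by the $x_{i,j,r}$, and the latter guarantees each chosen slot expected value about $\sum_i v_{i,j}x_{i,j,r}/y_{j,r}$ without double-using an advertiser. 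Here we use the matching constraints together with the observation that, conditioned on selection, we can sample a single advertiser; conflicts between advertisers are handled by dependent rounding or by losing another constant factor via a contention argument.

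The main obstacle is the following bound, which we must prove for each fractional unit $(j,r)$. With constant probability, $j$ is selected, and every slot selected afterwards at smaller scales lies at distance at least $\Omega(r)$ from $j$. The second part guarantees that the realized nearest-neighbor distance is $\Omega(r)$ and not just $\beta r'$ for some later $r'<r$. We will handle it by having smaller-scale slots also avoid the $\beta r'$-neighborhoods of accepted slots, and by charging against larger radii using the metric: if $j'$ is accepted at scale $r'<r$ near $j$, then $d(j,j')\ge \beta r'$ only. To control this, the acceptance rule at scale $r'$ must instead forbid distance less than $\beta\cdot r_{\text{accepted}}$ to any accepted slot. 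In the first part, the chance that $j$ is blocked at its own scale is bounded using the packing constraint summed over the ball of radius $\beta r$. The chance of blocking from larger scales, summed geometrically over scales, is at most a constant $\le 1/2$ once $\alpha$ is small. Choosing $\alpha$ and $\beta$ to balance these losses with the factor $2$ from discretization should give a constant; tuning the constants is the point where we aim for $18$.
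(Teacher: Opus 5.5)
There is a genuine gap. It sits in the step you flag as the main obstacle: the claim that the chance of $(j,r)$ being blocked from larger scales, ``summed geometrically over scales,'' is a constant.

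Under your modified acceptance rule, $j$ is killed by any accepted slot $j''$ at a scale $r''>r$ with $d(j,j'')<\beta r''$. Your packing constraint at center $j$ and scale $r''$ bounds the mass of such $j''$ \emph{at that one scale} by $1$. Nothing in your LP makes these contributions decay as $r''$ grows, so the union bound only gives $\alpha$ times the number of scales. This is not slack in the analysis. Take a tree metric with $j$ joined to leaves $J_1,\dots,J_K$, with $d(j,J_k)$ slightly below $\beta r_k$, where $r_k=2^{gk}r$ and $2^g\ge 1/\beta$.

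\begin{itemize}
\item Setting $y_{j,r}=1$ and $y_{J_k,r_k}=1$ for all $k$ satisfies all your constraints, since for every center and scale at most one of these units is counted.
\item The leaves are too far apart to block each other.
\item Every accepted leaf blocks $j$, so $j$ survives with probability only $(1-\alpha)^K$.
\end{itemize}

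Replacing $j$ by a cluster of many slots at pairwise distance $\Theta(r)$ around the center lets $K$ grow like $\log m$. It also makes these doomed small-scale units carry almost all the value of this fractional solution. Reverting to the unmodified rule makes blocking local, but then, as you note yourself, $j$'s realized nearest-neighbor distance can collapse to $\beta r'\ll r$.

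The missing idea is to aggregate the packing constraint over \emph{all} scales at once: $\sum_i\sum_{(j,r):\,d(j,c)<r/2}x_{i,j,r}\le 1$ for every point $c$. This is valid because the balls $B(j,\delta_j/2)$ of an integral solution are pairwise disjoint whatever their radii. It is exactly the paper's LP: pseudo-disks $(c_j,\,d(j,j')/2,\,v_{i,j}d(j,j'))$, one group per advertiser, and every point covered by total mass at most $1$. With this constraint no scale ordering is needed. The paper's rounding works as follows:
\begin{itemize}
\item Sample each disk independently with probability $x_i/3$.
\item Keep a sampled disk only if no other sampled disk covers its center or shares its advertiser. The per-point and per-group constraints bound these two conflict masses by $1/3$ each.
\item Apply a correction factor so that each disk survives with probability exactly $x_i/9$.
\item Convert the resulting Constraint (II) selection to a matching, losing a factor $2$.
\end{itemize}
This gives $2\cdot 9=18$.

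Two further points. Your proposal leaves the advertiser-conflict step unspecified, whereas the paper folds it into the same group constraint. And with dyadic radii and $\beta\le 1/2$ you already lose a factor $\ge 4$ before any rounding loss, against $2$ in the paper, so the constant $18$ is asserted rather than derived.
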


To begin, our first step is to reduce the problem to a geometric formulation. Formally, we define the following:

\begin{Definition}[Grouped Pseudo-Disk Selection Problem]
\label{def:gpds}
Let $(\X,d)$ be a metric space with $1$ being an upper bound of its diameter. 
A \emph{pseudo-disk} is a triple $(c,r,w)$ consisting of a center $c \in \X$, a radius $r \in [0,1]$, and a weight $w \geq 0$. 
A pseudo-disk $D = (c,r,w)$ corresponds to the set of points in $\X$ with distance \emph{strictly smaller} than $r$ from $c$, and is associated with weight $w$. 

Given a set of pseudo-disks partitioned into $K$ groups $T_1,T_2,\dots,T_K$. 
The \emph{Grouped Pseudo-Disk Selection Problem} asks to find a selection $T$ that chooses \emph{at most one} pseudo-disk from each group with the objective of maximizing
\[
\val(T) ~:=~ \sum_{D = (c_D, r_D, w_D) \in T} w_D,
\]
 such that one of the following constraints is satisfied:
\begin{constraints}
    \item \label{cons:I} Constraint (I): Every center $c \in \X$ in the metric space is covered by at most one selected disk.
    \item \label{cons:II} Constraint (II): The center of each selected disk is not covered by another selected disk.
\end{constraints}
\end{Definition}

The Grouped Pseudo-Disk Selection problem defined in \Cref{def:gpds} is not fully settled, as the problem may switch between \Cref{cons:I} and \Cref{cons:II}. Clearly, \Cref{cons:I} is stronger than \Cref{cons:II}, as every selection satisfying \Cref{cons:I} must satisfy \Cref{cons:II}. We remark that both constraints are needed in our reduction, as we show the following:

\begin{Lemma}
\label{lma:reduction}
The allocation problem under the Nearest-Neighbor model with $n$ advertisers and $m$ ad slots admits a polynomial-time reduction to a Grouped Pseudo-Disk Selection instance with $K = n$ groups of pseudo-disk and each group $T_k$ has size $|T_k| \leq m^2$, such that the instance is a $2$-approximation of the original problem. To be specific, the reduced instance satisfied the following:
\begin{enumerate}
    \item Let $M^*$ be the optimal matching of the Nearest-Neighbor instance and $T^* \subseteq \bigcup_{k \in [K]} T_k$ be the optimal selection of the Grouped Pseudo-Disk Selection instance that satisfies the stronger \emph{\textbf{\Cref{cons:I}}}. Then, $\sw(M^*) \leq \val(T^*)$.
    \item Given a selection $T$ of the Grouped Pseudo-Disk Selection instance that satisfies the weaker \emph{\textbf{\Cref{cons:II}}}, we can find a feasible matching $M$ for the Nearest-Neighbor instance in polynomial time, such that  $\sw(M) \geq 0.5 \cdot \val(T)$.
\end{enumerate}
\end{Lemma}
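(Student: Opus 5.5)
We reduce by letting the ground metric be the set of slots $\X=[m]$ with the given distance $d$, and letting each advertiser form one group. For advertiser $i$, let $R_j := \{d(j,j') : j' \in [m]\setminus\{j\}\} \cup \{1\}$ be the candidate discount values at slot $j$; note $|R_j| \le m$. The group $T_i$ contains one pseudo-disk $(j,\, r/2,\, v_{i,j}\cdot r)$ for every slot $j$ and every $r \in R_j$. Hence $|T_i| \le m^2$, $K=n$, and the construction clearly runs in polynomial time. The intuition is that a disk of radius $r/2$ around $j$ ``reserves'' a neighbourhood guaranteeing a discount of roughly $r$. Halving the radius is what makes \Cref{cons:I} attainable by the optimum, and it costs exactly the factor $2$ in the reverse direction.

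\emph{Part 1.} Take the optimal matching $M^*$ with occupied set $S^*=S(M^*)$. For each $(i,j)\in M^*$ set $r_j := \delta_{\min}(j,S^*)$. This lies in $R_j$: if $|S^*|=1$ then $r_j=1$, and otherwise it is a distance from $j$ to another slot. We select the disk $(j, r_j/2, v_{i,j} r_j)$ from group $T_i$. Since $M^*$ is a matching, this picks at most one disk per group, and $\val(T^*)\ge\val(T)=\sum v_{i,j}\delta_{\min}(j,S^*)=\sw(M^*)$.

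It remains to check \Cref{cons:I} for this selection. Suppose some point $c$ lies in both selected disks centered at distinct $j,j'\in S^*$. Then $d(c,j)<r_j/2\le d(j,j')/2$, because $r_j$ is the distance from $j$ to its nearest occupied neighbour and $j'$ is occupied. Symmetrically $d(c,j')<d(j,j')/2$. Adding these gives $d(j,j') \le d(c,j)+d(c,j') < d(j,j')$, contradicting the triangle inequality. Disks of radius $0$ contain no points, so they cause no trouble.

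\emph{Part 2.} Given $T$ satisfying \Cref{cons:II}, first discard every disk of weight $0$. This includes all radius-$0$ disks, since their weight is $v_{i,j}\cdot 0$. Discarding does not change $\val(T)$ and preserves \Cref{cons:II}.

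Every remaining disk has positive radius, so it contains its own center. Consequently no two remaining disks share a center, since each would then cover the other's center. Because $T$ takes at most one disk per group, setting $M=\{(i,j)\}$ over the remaining disks $(j,r/2,\cdot)\in T_i$ therefore gives a feasible matching.

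Now fix $(i,j)\in M$ with chosen radius $r/2$. By \Cref{cons:II}, every other occupied slot $j'$ lies outside this disk, so $d(j,j')\ge r/2$. Hence $\delta_{\min}(j,S(M))\ge r/2$; this also holds when $j$ is the only occupied slot, since then $\delta_{\min}=1\ge r/2$. Summing over $M$ gives $\sw(M)\ge \sum v_{i,j}r/2 = \tfrac12\val(T)$.

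The only delicate points are the degenerate cases: zero distances, singleton allocations, and coincident centers. These are handled by including radius $1$ in $R_j$ and by discarding the weight-zero disks before building $M$.
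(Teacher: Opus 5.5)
Your proof is correct and follows the paper's route. Both use one group per advertiser and disks centered at slot $j$ with radius $d(j,j')/2$ and weight $v_{i,j}\,d(j,j')$; Part 1 uses the triangle-inequality argument for \Cref{cons:I}, and Part 2 uses the fact that \Cref{cons:II} keeps every other occupied slot at distance at least the radius. Your extra radius-$1/2$ disk (the value $1\in R_j$) fixes a real gap: the paper indexes disks only by pairs $j\neq j'$, so when the optimum occupies a single slot $j$ with no slot at distance $1$ (or when $m=1$), it has no disk of weight $v_{i,j}$ and its Part 1 inequality can fail.
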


We defer the proof of \Cref{lma:reduction} to \Cref{sec:LP_Missing}. With the help of \Cref{lma:reduction}, to give a constant approximation, it remains to find a good selection for a Grouped Pseudo-Disk Selection instance that satisfies \Cref{cons:II}. To be specific, we show the following:

\begin{Lemma}
    \label{lma:nn-alloc-approx}
    For a Grouped Pseudo-Disk Selection instance, let $T^*$ be the optimal selection that satisfies \emph{\textbf{\Cref{cons:I}}}. There exists a polynomial time algorithm that finds a random selection $\widetilde T$, such that 
    \begin{itemize}
        \item $\widetilde T$ satisfies \emph{\textbf{\Cref{cons:II}}}.
        \item $\E[\val(\widetilde T)] \geq \frac{1}{9} \cdot \val(T^*)$.
    \end{itemize}
\end{Lemma}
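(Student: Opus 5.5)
We plan to prove this by an LP relaxation followed by a randomized ordered rounding, in the spirit of the LP of \cite{BGMS-NeurIPS16}.

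\textbf{The LP.} Introduce a variable $x_D\in[0,1]$ for every pseudo-disk $D$ and impose two families of constraints. The group constraints are $\sum_{D\in T_k} x_D \le 1$ for each $k\in[K]$. The packing constraints are $\sum_{D:\, c\in D} x_D \le 1$ for every point $c\in\X$ that is a center of some disk; it suffices to consider these centers, and there are polynomially many of them. The objective is $\max \sum_D w_D x_D$. The indicator vector of $T^*$ is feasible, because \Cref{cons:I} gives exactly the packing constraints at centers. Hence $\mathrm{LP}\ge \val(T^*)$, and the LP can be solved in polynomial time.

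\textbf{Rounding.} Fix a scaling parameter $\alpha\in(0,1)$, to be set to $1/3$ later. We proceed in three steps.
\begin{enumerate}
\item For each group $k$ independently, pick at most one disk, choosing $D\in T_k$ with probability $\alpha x_D$. This is valid because the group constraint keeps the total probability at most $\alpha\le 1$.
\item Draw a uniformly random order $\pi$ on the tentatively picked disks. As an alternative, we may order by decreasing radius; we will decide which ordering makes the charging work.
\item Process disks in order and accept $D$ if its center $c_D$ is not covered by any previously accepted disk, and no previously accepted disk has its center inside $D$.
\end{enumerate}
The output $\widetilde T$ satisfies \Cref{cons:II} by construction: for any two accepted disks, whichever was accepted later checked both directions of containment against the earlier one.

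\textbf{Analysis.} It suffices to show $\pr[D\text{ accepted}\mid D\text{ picked}]\ge 1/3$. Together with $\pr[D \text{ picked}]=\alpha x_D$, this gives $\E[\val(\widetilde T)]\ge \alpha\cdot\frac13\,\mathrm{LP}=\frac19\val(T^*)$ when $\alpha=1/3$. To prove the bound, we union-bound over the two ways $D$ can be blocked by another picked disk $D'$: either $c_D\in D'$, or $c_{D'}\in D$. The first event has total probability at most $\alpha\sum_{D'\ni c_D}x_{D'}\le\alpha$ by the packing constraint at $c_D$; independence across groups, with the same group excluded, makes conditioning on $D$ harmless. The second event is the main obstacle, because nothing in the LP directly bounds $\sum_{D':\,c_{D'}\in D}x_{D'}$.

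The plan for the second event is to use ordering by decreasing radius, so that only disks with $r_{D'}\ge r_D$ are processed before $D$. If $c_{D'}\in D$ and $r_{D'}\ge r_D$, then by the triangle inequality $d(c_D,c_{D'})<r_D\le r_{D'}$, so $c_D\in D'$. The second blocking event therefore collapses into the first. In that case the two checks overlap, and every blocker $D'$ that comes earlier satisfies $c_D\in D'$ or $c_{D'}\in D$ (the first implied by the second). Blockers that come later do not remove $D$, since $D$ is already accepted by then. The failure probability is then at most $\alpha\le 1/3$, which already exceeds the target bound, and the slack can absorb a careful treatment of ties in radius and of the group containing $D$, e.g.\ failure probability at most $2\alpha$ with $\alpha=1/3$ gives acceptance probability $1/3$. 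If the radius ordering turns out to be problematic for the later truthfulness argument (\Cref{thm:nn-auc}), the fallback is the random order combined with a second packing bound. Finally, we will double-check the pseudo-disk strictness conventions, in particular that $c_D\in D$ and that strict containment behaves correctly under the triangle inequality.
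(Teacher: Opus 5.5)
Your proof is correct and establishes the lemma, even with a better constant, but your rounding differs from the paper's.

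\textbf{The paper's route.} The paper puts each disk into $T$ independently with probability $x^*_i/3$. It keeps $i$ only if no other disk of $T$ shares its group or covers $c_i$. This test is order-free and already gives \Cref{cons:II}, since every survivor's center avoids \emph{all} other picked disks, so no radius ordering or containment argument is needed. It then keeps each survivor with the extra probability in Line~7 of \Cref{alg:rounding}, which makes $\pr[i\in\widetilde T]=x^*_i/9$ exactly.

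\textbf{Your route.} You pick at most one disk per group, so group conflicts never arise, and you resolve coverage conflicts greedily in non-increasing radius order. Your key observation is that an earlier, hence larger, disk whose center lies in $D$ must also cover $c_D$. This reduces blocking to the single event controlled by the packing constraint at $c_D$.

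\textbf{What each buys.} Your route gives a better constant, $\E[\val(\widetilde T)]\ge\alpha(1-\alpha)\cdot\mathrm{LP}$: that is $2/9$ at $\alpha=1/3$ and $1/4$ at $\alpha=1/2$. What it gives up is exactness. Your acceptance probabilities are only lower-bounded and depend on the whole LP solution, whereas the paper's proof of \Cref{thm:nn-auc} relies on $\E[\val(\widetilde T)]$ being exactly a fixed multiple of the LP optimum.

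You can have both. Keep your per-group picking and accept $D$ iff no other picked disk covers $c_D$, with no ordering. Then keep it with probability $(1-\alpha)/\prod_{k'\neq k}\bigl(1-\alpha\sum_{D'\in T_{k'}:\,c_D\in D'}x_{D'}\bigr)\le 1$. This gives $\pr[D\in\widetilde T]=\alpha(1-\alpha)x_D$ exactly.

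\textbf{Drop the random-order fallback.} It genuinely fails, because the LP does not bound $\sum_{D':\,c_{D'}\in D}x_{D'}$. For example, take a heavy large disk whose interior contains the centers of $t$ light, mutually compatible small disks, all with $x=1/2$. Conditioned on being picked, the large disk is accepted with probability only $O(1/t)$.
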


We defer the proof of \Cref{lma:nn-alloc-approx} to \Cref{sec:LPRelax}, and first prove \Cref{thm:nn-alloc-constant} via combining \Cref{lma:reduction} and \Cref{lma:nn-alloc-approx}:

\begin{proof}[Proof of \Cref{thm:nn-alloc-constant}]
    Given a Nearest-Neighbor instance, we first apply \Cref{lma:reduction} to reduce the problem to a Grouped Pseudo-Disk Selection instance. Then, we run the algorithm provided by \Cref{lma:nn-alloc-approx} to find $\widetilde T$, and use Property (2) of \Cref{lma:reduction} to convert it back to a feasible solution $\widetilde M$. By Lemmas \ref{lma:reduction} and \ref{lma:nn-alloc-approx}, we have
    \[
    \E[\sw(\widetilde M)] ~\geq~ 0.5 \cdot \E[\sw(\widetilde T)] ~\geq~ \frac{1}{18} \cdot \val(T^*) ~\geq~ \frac{1}{18} \cdot \sw(M^*),
    \]
    where $M^*$ is the optimal solution for the Nearest-Neighbor instance, and $T^*$ is the optimal selection of the Grouped Pseudo-Disk Selection instance that satisfies \Cref{cons:I}.
\end{proof}

\subsection{Proof of \Cref{lma:nn-alloc-approx} via LP Relaxation}
\label{sec:LPRelax}

In this subsection, we provide the proof of \Cref{lma:nn-alloc-approx}. Our main idea is to relax the Grouped Pseudo-Disk Selection problem with \Cref{cons:I} to an LP, and further round the LP to give a solution that satisfies \Cref{cons:II}.

Let $D_1, \cdots, D_N$ be the pseudo disks, where $D_i = (c_i, r_i, w_i)$. Let $T_1, \cdots, T_k$ be the group of disks, such that each $i \in [N]$ belongs to exactly one $T_k$. Consider the following LP:

\begin{align}
    \text{maximize } \quad \quad  &     \quad \sum_{i \in [N]} w_i \cdot x_i,\notag \\
    \text{s.t.}\qquad \forall c \in \X &   \quad   \sum_{i \in [N]: d(v, c_i) < r_i} x_i \leq  1\tag{$\text{LP}_{\textsc{relax}}$} \label{lp:relax}  \\
    \forall k \in [K] &    \quad \sum_{i \in [N]: i \in T_k} x_i \leq 1 \notag  \\
    \forall i \in [N] &    \quad x_i \in [0, 1] \notag
\end{align}

\begin{Lemma}
    \label{lma:LPRelax}
    \ref{lp:relax} is a feasible relaxation of the Grouped Pseudo-Disk Selection problem with \Cref{cons:I}, that is, given a Grouped Pseudo-Disk Selection instance, the objective of \ref{lp:relax} is at least $\val(T^*)$, where $T^*$ is the optimal selection that satisfies \Cref{cons:I}.
\end{Lemma}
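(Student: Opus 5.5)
We take the indicator vector of $T^*$ and check that it is feasible for \ref{lp:relax}. Define $x^*\in\{0,1\}^N$ by $x^*_i=1$ if and only if $D_i\in T^*$. Its LP objective is $\sum_{i} w_i x^*_i=\sum_{D\in T^*} w_D=\val(T^*)$. So once $x^*$ is shown feasible, the LP optimum is at least $\val(T^*)$, which is the statement of \Cref{lma:LPRelax}.

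Feasibility reduces to three checks, one per family of constraints.
\begin{itemize}
\item \emph{Box constraints.} These hold trivially, since $x^*$ is $0/1$.
\item \emph{Group constraints.} For every $k\in[K]$, the definition of a selection in \Cref{def:gpds} allows at most one pseudo-disk from $T_k$. Hence $\sum_{i\in T_k}x^*_i\le 1$.
\item \emph{Covering constraints.} Fix a point $c\in\X$. I read the constraint's $d(v,c_i)$ as $d(c,c_i)$, i.e.\ the condition that $c$ lies in the open pseudo-disk $D_i$, since the strict inequality matches the open-disk convention of \Cref{def:gpds}. Then $\sum_{i:\,d(c,c_i)<r_i}x^*_i$ counts exactly the selected disks that cover $c$. \Cref{cons:I} states that every point of $\X$ is covered by at most one selected disk, so this sum is at most $1$.
\end{itemize}

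One point needs care: \Cref{cons:I} as phrased speaks of every ``center $c\in\X$'', while the LP constrains every $c\in\X$. If $\X$ is understood as the full metric space, the two match directly. If instead the LP is meant to range only over the finite set of disk centers, which it must for the LP to have polynomially many constraints, then the covering constraints form a subset of those implied by \Cref{cons:I}, and feasibility still holds. In either reading $x^*$ is feasible.

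I expect no real obstacle. The only delicate step is aligning the index set and the strict inequality in the covering constraint with the wording of \Cref{cons:I}, as discussed above. Everything else is immediate from the definitions.
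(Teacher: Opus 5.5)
Your proposal is correct and is essentially the paper's proof: the paper also takes $\widetilde x_i = \one[i \in T^*]$, uses Constraint (I) for the covering constraints and the at-most-one-per-group rule for the group constraints, and concludes that the LP optimum is at least $\val(T^*)$. Your extra remarks are sound clarifications of points the paper leaves implicit: checking the box constraints, reading $d(v,c_i)$ as $d(c,c_i)$, and discussing the index set of the covering constraints.
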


\begin{proof}
    It's sufficient to show that solution $\{\widetilde x_i\}_{i \in [N]}$ with $\widetilde x_i = \one[i \in T^*]$ is a feasible integral solution of \ref{lp:relax}. For the first constraint, since $T^*$ is feasible with \Cref{cons:I}, for every $c \in \X$, $\sum_{i: d(v, c_i) } \widetilde x_i \leq 1$ holds. For the second constraint, since $T^*$ selects at most one disk from each group $T_k$, $\sum_{i \in T_k} \widetilde x_i \leq 1$ is satisfied. Therefore, $\widetilde x_i = \one[i \in T^*]$ is a feasible integral solution of \ref{lp:relax}.
\end{proof}

With \ref{lp:relax}, we now present our main \Cref{alg:rounding}, which is the desired algorithm for \Cref{lma:nn-alloc-approx}.

\begin{algorithm}[tbh]
\caption{\textsc{Constant Approximation for Group Pseudo-Disk Selection}}
\label{alg:rounding}
\KwIn{Disks $D_1, \cdots, D_N$ and groups $T_1, \cdots, T_k$}
Solve \ref{lp:relax}. Let $x^*_i$ be the optimal solution.\\
Initiate $T = \varnothing$ and $\widetilde T = \varnothing$. \\
For each $i \in N$, add $i$ into $T$ with probability $x^*_i/3$. \\
\For{$i \in T$}
{
Let $k\in [K]$ be the group that satisfies $i \in T_k$. \\
\If{$\nexists~ i' \in T: i' \neq i \land (i' \in T_k \lor d(c_i, c_{i'}) < r_{i'})$  }
{
Add $i$ into $\widetilde T$ with probability
\begin{align}
    \frac{1}{3} \cdot \frac{1}{\prod_{i' \neq i} \left(1 - \one[i' \in T_k \lor d(c_i, c_{i'}) < r_{i'}] \cdot x^*_{i'}/3\right)}. \label{eq:alg-prob}
\end{align}
}
}
\KwOut{Feasible selection $\widetilde T$}
\end{algorithm}

To show \Cref{alg:rounding} is the desired algorithm, we first show that \Cref{alg:rounding} is a feasible algorithm, and then show that the expectation of $\val(\widetilde T)$ is at least $\frac{1}{9} \cdot \val(T^*)$.

\paragraph{Feasibility of \Cref{alg:rounding}.} To show \Cref{alg:rounding} is feasible, we need to show
\begin{itemize}
    \item The probability defined in \eqref{eq:alg-prob} is a feasible probability that falls in $[0, 1]$.
    \item The output $\widetilde T$ satisfies \Cref{cons:II}.
\end{itemize}
For the feasibility of \eqref{eq:alg-prob}, clearly the probability in \eqref{eq:alg-prob} must be non-negative. On the other hand, note that
\begin{align*}
    &\prod_{i' \neq i} \left(1 - \one[i' \in T_k \lor d(c_i, c_{i'}) < r_{i'}] \cdot x^*_{i'}/3\right) \\
    ~\geq~& 1 - \sum_{i' \in T_k \setminus \{i\}} \frac{x^*_{i'}}{3} - \sum_{i' \neq i: d(c_i, c_{i'} < r_{i'}} \frac{x^*_{i'}}{3} \\
    ~\geq~& 1 - \frac{1}{3} - \frac{1}{3} ~\geq~ \frac{1}{3},
\end{align*}
where the first inequality follows from union bound, and the second inequality follows from the constraints in \eqref{lp:relax}. Therefore, the probability in \eqref{eq:alg-prob} is upper-bounded by $1$.

For the feasibility of $\widetilde T$, note that a disk $i$ is possible to be added into $\widetilde T$ only when it's in $T$, and its center is not covered by another disk in $T$. Therefore, the center of every disk in $\widetilde T$ is not covered by a second disk, and therefore \Cref{cons:II} is satisfied.

\paragraph{The value of $\widetilde T$.} Now, we show $\E[\val(\widetilde T)] \geq \frac{1}{9} \cdot  \val(T^*)$. To achieve this, we show a stronger claim that
\begin{align}
    \E[\val(\widetilde T)] ~=~ \frac{1}{9} \cdot \text{objective of \eqref{lp:relax}} ~\geq~ \frac{1}{9}\cdot  \val(T^*), \label{eq:equal}
\end{align}
where the inequality is guaranteed by \Cref{lma:LPRelax}. To show the equality in \eqref{eq:equal}, note that for each disk $i$, we have
\begin{align*}
    \pr[i \in \widetilde T] &= \pr[i \in T] \cdot \pr[\nexists i'\neq i \in T: i' \in T_k \lor d(c_i, c_{i'}) < r_{i'}] \cdot \eqref{eq:alg-prob} \\
    &= \frac{x^*_i}{3} \cdot \frac{1}{3} \cdot \frac{\prod_{i' \neq i} \left(1 - \one[i' \in T_k \lor d(c_i, c_{i'}) < r_{i'}] \cdot x^*_{i'}/3\right)}{\prod_{i' \neq i} \left(1 - \one[i' \in T_k \lor d(c_i, c_{i'}) < r_{i'}] \cdot x^*_{i'}/3\right)} \\
    &= \frac{x^*_i}{9},
\end{align*}
where the first equality follows from the fact that whether $i \in T$, whether $i$ satisfies the constraint in Line 6 of \Cref{alg:rounding}, and whether $i$ is added into $\widetilde T$ in Line 7 of \Cref{alg:rounding} are independent events. By multiplying $w_i$ to both sides of the above equality and summing over all disks $i \in [N]$, we have $\E[\val(\widetilde T)]$ equals to $\frac{1}{9}$ times the objective of \eqref{lp:relax}.

\paragraph{Efficiency of \Cref{alg:rounding}.} Finally, we need to show \Cref{alg:rounding} is efficient. Since there are $O(|\X| + K + N)$ constraints and $O(N)$ variables in \eqref{lp:relax}, it requires $O(\poly(|\X|, K, N))$ time to solve \eqref{lp:relax}. For the remaining steps in \Cref{alg:rounding}, it can be clearly verified that these steps run in $O(\poly(|\X|, K, N))$ time. Therefore, \Cref{alg:rounding} is efficient.

\subsection{Converting the Algorithm to a Truthful Mechanism}

Now, we show how to convert the allocation we discussed above to a truthful mechanism. More specifically, we prove that the allocation algorithm in \Cref{thm:nn-alloc-constant} is monotone, so that the Myerson's Lemma can be applied. We defer the proof of \Cref{thm:nn-auc} to \Cref{sec:LP_Missing}.

\begin{Theorem}
\label{thm:nn-auc}
    The allocation algorithm for Nearest-Neighbor model for \Cref{thm:nn-alloc-constant} can be converted into a truthful mechanism.
\end{Theorem}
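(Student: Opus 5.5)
The plan is to reduce truthfulness to monotonicity, via Myerson's Lemma, in a single-parameter formulation. Each advertiser $i$ reports a scalar bid $b_i$, and her value for slot $j$ is $b_i\cdot u_{i,j}$ with $u_{i,j}$ public; this is the setting in which ``monotone in the bids of each advertiser'' is meaningful. I will define advertiser $i$'s allocation as $a_i(b)$, the expected discounted quantity $\E[u_{i,j}\cdot\delta_{\min}(j,S(\widetilde M))\cdot\one[(i,j)\in\widetilde M]]$. Then it suffices to show that $a_i(b_i,b_{-i})$ is non-decreasing in $b_i$ for every fixed $b_{-i}$. Myerson's payment formula then yields a truthful mechanism, computable by solving the LP polynomially many times or via the standard random-sampling implementation.

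\textbf{Step 1: structure of the reduced instance.} In the reduction of \Cref{lma:reduction}, group $T_i$ collects the disks of advertiser $i$. I expect each disk $D=(c_D,r_D,w_D)\in T_i$ to have weight $w_D=b_i\cdot q_D$, where $q_D$ (essentially $u_{i,c_D}\cdot r_D$) does not depend on any bid. The centers and radii depend only on the metric, so the constraint polytope of \ref{lp:relax} is bid-independent. Changing $b_i$ therefore only rescales the objective coefficients of the variables in $T_i$.

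\textbf{Step 2: monotonicity of the LP solution.} Let $Q_i(x)=\sum_{D\in T_i} q_D x_D$. The standard exchange argument applies. Let $x$ be optimal at bid $b_i$ and $x'$ optimal at $b_i'>b_i$, with the same fixed tie-breaking rule. Adding the two optimality inequalities gives $(b_i'-b_i)\bigl(Q_i(x')-Q_i(x)\bigr)\ge 0$, so $Q_i$ of the optimal LP solution is non-decreasing in $b_i$. By the exact computation in the analysis of \Cref{alg:rounding}, $\pr[D\in\widetilde T]=x^*_D/9$ for every disk. Hence the expected disk-level allocation of advertiser $i$, namely $\sum_{D\in T_i} q_D\,\pr[D\in\widetilde T]=Q_i(x^*)/9$, is non-decreasing in $b_i$.

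\textbf{Step 3: transfer through the conversion (main obstacle).} Property~2 of \Cref{lma:reduction} only guarantees $\sw(M)\ge 0.5\cdot\val(T)$ in aggregate. The realized quantity $u_{i,j}\,\delta_{\min}$ that advertiser $i$ receives might not be an exact, bid-independent function of her selected disk. If it depends on other advertisers' disks in a way that varies with $b_i$, monotonicity could break.

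\begin{itemize}
\item I would first inspect the conversion, expecting it to be per-disk: a selected disk $D\in T_i$ becomes the assignment of $i$ to slot $c_D$, and \Cref{cons:II} ensures no other selected center lies within distance $r_D$ of $c_D$.
\item From this I would aim to show that advertiser $i$'s realized quantity is sandwiched between $q_D/2$ and something larger.
\item To obtain an exact monotone quantity, I would modify the mechanism to \emph{charge and allocate at disk level}. Concretely, I would randomly thin the advertiser's realized outcome so that her expected quantity equals exactly $\frac12\cdot Q_i(x^*)/9$. This costs only the already-accounted factor $1/2$, so the $18$-approximation is preserved.
\item Alternatively, I would argue that the realized discount is at least $r_D/2$ and the conversion can be made to hand advertiser $i$ exactly this quantity.
\end{itemize}

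Either way, $a_i$ becomes a bid-independent positive multiple of $Q_i(x^*)$, which is monotone by Step~2, and Myerson's Lemma completes the proof.
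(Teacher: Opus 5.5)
Your plan has the same skeleton as the paper's proof: the LP constraints do not depend on bids, \Cref{alg:rounding} selects each disk with probability exactly $x^*_D/9$, the conversion back to a matching is made to lose \emph{exactly} a factor $1/2$, and Myerson's Lemma finishes. You differ in the two substantive steps.

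\textbf{Monotonicity.} The paper only argues that the LP objective, and hence total expected welfare, does not decrease when disk weights increase. Your exchange argument gives the per-advertiser statement $Q_i(x^*(b_i)) \le Q_i(x^*(b_i'))$ for $b_i < b_i'$. That is what Myerson's Lemma actually uses. It holds for any choice of optimal solutions, so no tie-breaking rule is needed. Your explicit single-parameter form $v_{i,j} = b_i u_{i,j}$ also makes precise what the paper leaves implicit.

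\textbf{Exactness of the conversion.} The paper takes your second alternative. For each selected $D_{i,j,j'}$ it adds a new empty slot $\hat j$ at distance $d(j,j')/2$ from $j$, filled by a zero-value virtual advertiser, so the discount at $j$ becomes exactly $r_D$. This deterministic device requires enlarging the slot metric. It also requires, implicitly, that $\hat j$ come no closer than $r_{D'}$ to any other selected center. A given page need not permit this.

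Your first alternative, random thinning, can stay within the given instance. As written, however, it omits the one detail that makes it correct. If thinning means withholding advertiser $i$'s ad, slot $j$ is emptied. That can only raise the nearest-neighbor distances of nearby selected slots. The other advertisers' realized quantities then change in a way that depends on the rounded outcome, and so on $b_i$. This is exactly the dependence Step~3 is meant to remove.

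The fix is to keep the occupied set fixed. With probability $1 - r_D/\delta_{\min}(j, S(\widetilde M))$, replace advertiser $i$ in slot $j$ by a zero-value virtual advertiser; the model explicitly allows these.
\begin{itemize}
\item Constraint (II) gives $\delta_{\min}(j,S(\widetilde M)) \ge r_D$, exactly as in Property~2 of \Cref{lma:reduction}, so this is a valid probability.
\item Conditional on $D \in \widetilde T$, advertiser $i$'s expected quantity is then exactly $u_{i,j} r_D$.
\item Mind the constants: in the paper's reduction $r_D = d(j,j')/2$ and $w_D = b_i u_{i,j} d(j,j')$. So the guaranteed discount is $r_D$, not $r_D/2$, and $u_{i,j} r_D = q_D/2$ with $q_D = u_{i,j} d(j,j')$.
\end{itemize}
With this, $a_i(b) = Q_i(x^*)/18$ exactly and expected welfare is exactly one eighteenth of the LP optimum. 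Your argument then goes through, without modifying the slot set as the paper's device does.
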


\section{Simple Algorithms for Nearest-Neighbor Model}
\label{sec:simple}

In this section, we present efficient and simple algorithms for the special case of the Nearest-Neighbor model, assuming value $v_{i, j}$ can be factorized as $v_{i, j} = w_i\cdot u_j$, where $u_j$ is publicly known. Similar to \Cref{thm:nn-auc}, we also show that the algorithms can be converted into truthful mechanisms.

The main intuition for algorithms in this section is \emph{simultaneous optimization}. The concept of simultaneous optimization is defined for a combinatorial optimization problem, where the objective is to optimize an unknown norm. The goal of simultaneous optimization is to provide a feasible solution for the combinatorial optimization problem, such that the solution is approximately optimal for every norm in a certain class. We show that this \emph{factorized} Nearest-Neighbor model with the assumption that $v_{i, j} = w_i\cdot u_j$ can be modeled as a simultaneous optimization problem over a class of norms parameterized by the unknown advertisers' values $\{w_i\}$. When the values $\{w_i\}$ are arbitrary, we present a direct proof that gives an $O(\log m)$ approximation. When the value $\{w_i\}$ are stochastic such that $w_i \sim \D_i$, we show that it's sufficient to use the expectations of each distribution as parameters that define the norm of the problem, and the optimal selection of slots under this specific norm achieves a constant approximation.

\subsection{$O(\log m)$ Approximation for Arbitrary Advertiser Values}
\label{sec:simple-general}

We first give an $O(\log m)$ approximation algorithm when advertisers' values are adversarial. To be specific, we prove the following:

\begin{Theorem}
\label{thm:simple-general}
    For Nearest-Neighbor model, assuming each advertiser’s valuation factorizes as $w_i \cdot u_j$, where $w_i$ is a private value of advertiser $i$ and $u_j$ is a public weight associated with slot $j$. Then, there exists an allocation algorithm running in $\OTild(n + m)$ time that achieves an $O(\log m)$-approximation. Furthermore, the algorithm can be converted to a truthful mechanism.
\end{Theorem}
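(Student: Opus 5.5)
The plan is to separate the choice of \emph{which slots to open} from the choice of \emph{which advertiser goes where}. The first choice is made without looking at the private values $w_i$; the second is a sorted greedy assignment. Fix an open set $S$ and write $a_j(S) := u_j \cdot \delta_{\min}(j,S)$ for $j \in S$. Given $S$, the welfare is $\sum_{(i,j)\in M} w_i a_j(S)$. By the rearrangement inequality this is maximized by matching the advertisers sorted by decreasing $w_i$ to the slots of $S$ sorted by decreasing $a_j(S)$. This step takes $\OTild(n+m)$ time and is clearly monotone in each $w_i$: raising $w_i$ can only move advertiser $i$ to a slot with larger $a_j(S)$. So once $S$ is fixed independently of the bids, Myerson's Lemma \cite{Myerson1981} gives truthfulness, and this still holds if $S$ is drawn at random independently of the bids. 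The only real question is how to choose $S$ so that it is good for every $w$ at once, i.e.\ simultaneous optimization over the family of ordered objectives.

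\textbf{Reduction to top-$k$ objectives.} Write $w_{(1)}\ge w_{(2)}\ge\cdots$ for the sorted advertiser values, and for a set $S$ let $\mathrm{top}_k(S)$ be the sum of its $k$ largest values $a_j(S)$. Let $\mathrm{OPT}_k := \max_S \mathrm{top}_k(S)$, which is exactly the Nearest-Neighbor optimum with $k$ identical advertisers of value $1$ and slot values $u_j$. I will use two facts.

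First, $\mathrm{OPT}_{2k}\le 2\,\mathrm{OPT}_k$. Inside the maximizing set for $\mathrm{OPT}_{2k}$, the top $k$ entries carry at least half of the top-$2k$ sum.

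Second, let $S^*$ be an optimal set with sorted values $a^*_{(j)}$. Grouping indices into dyadic blocks $[2^t,2^{t+1})$ gives
\[
\mathrm{OPT}(w) ~\le~ \sum_{t=0}^{\log m} w_{(2^t)} \sum_{j\in[2^t,2^{t+1})} a^*_{(j)} ~\le~ \sum_{t=0}^{\log m} w_{(2^t)}\,\mathrm{OPT}_{2^{t+1}} ~\le~ 2\sum_{t=0}^{\log m} w_{(2^t)}\,\mathrm{OPT}_{2^t}.
\]
In the other direction, for any set $S$ and any $t$, the sorted greedy assignment yields welfare at least $w_{(2^t)}\,\mathrm{top}_{2^t}(S)$, since the top $2^t$ advertisers each have value at least $w_{(2^t)}$.

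\textbf{Choosing $S$.} Offline, for each scale $t=0,\dots,\lceil\log m\rceil$, I compute a set $S_t$ with $\mathrm{top}_{2^t}(S_t)\ge \mathrm{OPT}_{2^t}/18$. To do this I invoke \Cref{thm:nn-alloc-constant} on the instance with $2^t$ unit-value advertisers and values $v_{i,j}=u_j$. That instance uses at most $2^t$ slots, so its welfare is exactly a top-$2^t$ sum. The mechanism then draws $t$ uniformly from the $L=O(\log m)$ scales and runs the sorted greedy on $S_t$. The expected welfare is at least
\[
\frac1L\sum_t w_{(2^t)}\,\mathrm{top}_{2^t}(S_t) ~\ge~ \frac{1}{18L}\sum_t w_{(2^t)}\,\mathrm{OPT}_{2^t} ~\ge~ \frac{\mathrm{OPT}(w)}{36L},
\]
which is an $O(\log m)$-approximation. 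If a deterministic set is required, I would instead try to merge the $S_t$ into one set; merging can shrink nearest-neighbor distances, though, so I would keep the randomization over $t$.

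\textbf{Main obstacle.} The delicate points are the offline step and the accounting of the running time. For the offline step I need to confirm that the reduction of \Cref{lma:reduction} together with \Cref{alg:rounding} really produces a set in which only the opened slots count, so that the quantity being approximated is a top-$2^t$ sum. It is also possible that \Cref{alg:rounding} opens fewer slots than $2^t$, which is harmless here. For the running time, the $\OTild(n+m)$ bound should refer to the bid-dependent part: sorting, assignment, and payments computed by binary search over thresholds in the sorted order. The sets $S_t$ and the values $a_j(S_t)$ are computed in advance, since they do not depend on the bids.
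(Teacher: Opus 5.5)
Your proof is correct, but it takes a different route from the paper.

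\textbf{The paper's route.} \Cref{alg:simple-general} randomizes over \emph{distance} scales. It draws $r=2^{-l}$ with $l\le\lceil\log_2 m^2\rceil$ and greedily opens slots in decreasing order of $u_j$, keeping opened slots pairwise at least $r$ apart. The analysis writes the welfare as the monotone symmetric norm $f(Z(S);W^\downarrow)$ and buckets the optimal set by nearest-neighbor distance. Each bucket $S_l$ is charged to $\widetilde S_{l+1}$ through a triangle-inequality injection (\Cref{clm:connect}), and slots with tiny nearest-neighbor distance are charged to $\widetilde S_0$.

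\textbf{Your route.} You randomize over \emph{rank} scales $k=2^t$ instead. The dyadic decomposition of $\sum_j w_{(j)}a^*_{(j)}$ into top-$k$ objectives reduces simultaneous optimization to the unit-value, cardinality-constrained problems $\mathrm{OPT}_k$. You solve each of these with \Cref{thm:nn-alloc-constant} as a black box.

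\textbf{What each approach buys.} Your argument is more modular. It never touches the metric directly, and it would give $O(\log m)$ for any non-increasing discount function whose unit-value problem admits a constant approximation. It also needs only about $\log_2 m$ scales rather than $2\log_2 m$. Your truthfulness argument, that each advertiser's coefficient $a_j(S)$ is monotone in his own bid, is stated directly in the form Myerson's Lemma requires. What you give up is the point of this section. Your offline phase solves $O(\log m)$ LP relaxations and uses randomized rounding, whereas the paper's slot selection is an elementary LP-free greedy. Both proofs read the $\widetilde O(n+m)$ bound as referring to the bid-dependent phase only.

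\textbf{One correction.} \Cref{thm:nn-alloc-constant} only guarantees $\E[\mathrm{top}_{2^t}(S_t)]\ge \mathrm{OPT}_{2^t}/18$, not the deterministic bound you state. This suffices, because $S_t$ is drawn independently of the bids and your final lower bound is linear in $\mathrm{top}_{2^t}(S_t)$.
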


To prove \Cref{thm:simple-general}, we show \Cref{alg:simple-general} is the desired algorithm for \Cref{thm:simple-general}.  Throughout the subsection, we assume the number of slots $m$ is at most the number of advertisers $n$, i.e., $m \leq n$. This is without loss of generality, as we may arbitrarily add dummy advertisers with value $0$ into the instance.

\begin{algorithm}[tbh]
\caption{\textsc{$O(\log m)$ Approx. for Factorized Nearest-Neighbor Model}}
\label{alg:simple-general}
\KwIn{Number of slots $m$ and advertisers $n$, distance function $d$}
\tcc{Slot Selection Phase}
Input slot weights $u_1, u_2, \cdots, u_m$, such that WLOG $u_1 \geq u_2 \geq \cdots \geq u_m$. \\
Let $L = \lceil \log_2(m^2) \rceil $. \\
Choose $r \in \{2^{-0}, 2^{-1}, 2^{-2}, \cdots, 2^{-L}\}$ uniformly at random. \\
Initiate slot set $\widetilde S = \varnothing$. \\
\For{$j = 1 \to m$}
{
\If {$d(j, j') \geq r$ holds for every $j' \in \widetilde S$}
{
Add $j$ into $\widetilde S$
}
}
\tcc{Allocation Phase}
Advertisers report values $w_1, w_2, \cdots, w_n$.\\
For $i \in |\widetilde S|$, assign the advertiser with $i$-th largest value to the slot in $\widetilde S$ such that the value of $u_j \cdot \min_{j' \in \widetilde S \setminus \{j\}} d(j, j')$ is the $i$-largest.  \\
Let $\widetilde M$ be the corresponding matching. \\
\KwOut{Matching $\widetilde M$.}
\end{algorithm}





\paragraph{Representing the objective as a norm.} To show that  \Cref{alg:simple-general} achieves an $O(\log m)$ approximation, we first introduce some extra notations that will be used in our proof.

Given an instance of the factorized Nearest-Neighbor model where $w_1, w_2, \cdots , w_n$ are values of advertisers and $u_1 \geq u_2 \geq \cdots \geq u_m \geq 0$ are weights of slots. We define
\[
W ~:= (w_1, \cdots, w_n)
\]
to be the value vector of advertisers. 

For an $m$-dimensional vector $Z = (z_1, z_2, \cdots, z_m) \in \R^m_{\geq 0}$ that only contains non-negative values, we define notation
\[
Z^{\downarrow} ~:=~ (z^\downarrow_1, \cdots, z^\downarrow_m)
\]
to be a vector denoting the sequence obtained by ordering values in $Z$ in non-increasing order, where notation $z^\downarrow_i$ represents the $i$-th largest value in $Z$. Then, for $Z \in \R^m_{\geq 0}$, we define function $f: \R^m_{\geq 0} \to \R_{\geq 0}$ parametrized by $W^\downarrow$ as follows (recall that $n \geq m$):
\begin{align*}
    f(Z; W^\downarrow) ~:=~ \sum_{j = 1}^{m} z^\downarrow_j \cdot w^\downarrow_j.
\end{align*}

Then, the following claim shows that the function $f$ is a norm:

\begin{Claim}
\label{clm:f-norm}
    Function $f(Z, W^\downarrow)$ is a monotone symmetric norm. 
\end{Claim}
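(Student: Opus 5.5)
The plan is to verify the norm axioms one at a time. The domain is $\R^m_{\geq 0}$, so we interpret $f$ on all of $\R^m$ through absolute values, $f(Z;W^\downarrow)=\sum_j |z|^\downarrow_j w^\downarrow_j$. We also assume $w_i\ge 0$, so that $w^\downarrow_1\ge\cdots\ge w^\downarrow_m\ge 0$ are the $m$ largest advertiser values.

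Most of the properties are immediate.
\begin{itemize}
\item \emph{Symmetry.} $f$ depends on $Z$ only through the sorted vector $Z^\downarrow$, so it is invariant under permutations of coordinates.
\item \emph{Absolute homogeneity.} For $\lambda\ge 0$ we have $(\lambda Z)^\downarrow=\lambda Z^\downarrow$, which gives $f(\lambda Z)=\lambda f(Z)$.
\item \emph{Nonnegativity.} This holds because every term is nonnegative.
\item \emph{Definiteness.} This holds when $w^\downarrow_m>0$. Otherwise $f$ is a seminorm, which is all the paper needs; we would note this caveat or assume positive values.
\item \emph{Monotonicity.} Suppose $0\le Z\le Z'$ coordinatewise. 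Then $z^\downarrow_j\le z'^\downarrow_j$ for every $j$, since the $j$-th largest entry is a max over $j$-subsets of a min, and that expression is monotone. Because the weights are nonnegative, $f(Z)\le f(Z')$.
\end{itemize}

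The only real step is the triangle inequality. We use a variational (rearrangement) characterization: for nonnegative $Z$,
\[
f(Z;W^\downarrow)=\max_{\sigma}\sum_{j=1}^m z_j\, w^\downarrow_{\sigma(j)},
\]
where $\sigma$ ranges over permutations of $[m]$. This follows from the rearrangement inequality: pairing the largest $z$ with the largest $w$ maximizes the sum of products.

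Given this, let $\sigma^*$ be the maximizer for $Z+Z'$. Then
\[
f(Z+Z')=\sum_j z_j w^\downarrow_{\sigma^*(j)}+\sum_j z'_j w^\downarrow_{\sigma^*(j)}\le f(Z)+f(Z').
\]
Each sum on the right is at most the corresponding maximum, which gives the inequality. The same argument works with absolute values on general $\R^m$, using $|z+z'|\le|z|+|z'|$ together with monotonicity.

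The main obstacle is only justifying the rearrangement characterization cleanly. We would do this with an exchange argument: if $z_a>z_b$ but $w_{\sigma(a)}<w_{\sigma(b)}$, swapping the two assignments changes the sum by $(z_a-z_b)(w_{\sigma(b)}-w_{\sigma(a)})\ge 0$. Repeating such swaps reaches the sorted pairing without ever decreasing the sum. Alternatively, we could note that $f$ is a nonnegative combination of top-$k$ sum norms, $f=\sum_k (w^\downarrow_k-w^\downarrow_{k+1})\,\mathrm{top}_k(Z)$ with $w^\downarrow_{m+1}:=0$, where $\mathrm{top}_k(Z)$ denotes the sum of the $k$ largest entries of $Z$. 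Each $\mathrm{top}_k$ is a maximum of linear functions and hence subadditive, and the weights $w^\downarrow_k-w^\downarrow_{k+1}$ are nonnegative.
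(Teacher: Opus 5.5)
Your proposal is correct and follows essentially the paper's route. Like the paper, you prove the triangle inequality by ordering coordinates according to the sum, splitting linearly, and bounding each part by its sorted value; your exchange argument (or your top-$k$ decomposition, which is the paper's Abel-summation display) justifies the rearrangement step the paper only asserts. One small correction: definiteness needs only $w^\downarrow_1>0$, as the paper assumes, since $f(Z;W^\downarrow)\ge w^\downarrow_1 z^\downarrow_1$. Your caveat requiring $w^\downarrow_m>0$ is therefore stronger than necessary.
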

\begin{proof}
    The symmetricity of function $f$ follows from the fact that $f$ is applied to the sorted vector $Z^\downarrow$. Therefore, the order of values in $Z$ does not effect the value of function $f$. For the monotonicity, consider two $n$-dimensional vectors $(z_1, \cdots, z_m)$ and $(y_1, \cdots, y_m)$ that satisfies $z_i \geq y_i$.Then, there must be $z^\downarrow_i \geq y^\downarrow_i$: there are at least $i$ values in vector $y$ being at least $y^\downarrow_i$, and therefore there are at least $i$ values in vector $z$ being at least $y^\downarrow_i$, which implies $z^\downarrow_i \geq y^\downarrow_i$. Then, we have
    \[
    f(Z; W^\downarrow) - f(Y; W^\downarrow) ~=~ \sum_{i \in [m]} (z_i - y_i) \cdot w^\downarrow_i ~\geq~ 0,
    \]
    and therefore the monotonicity holds.

    It remains to show that function $f$ is a norm. We have the following:
    \begin{itemize}
        \item Positivity: The non-negativity of $f$ is guaranteed by the non-negativity of vector $Z$ and the vector $W^\downarrow$. We further assume that $w^\downarrow_1 > 0$, as otherwise it implies all advertisers have value $0$, making the problem trivial. Then, $f(Z, W^\downarrow) \geq z^\downarrow_1 \cdot w^\downarrow_1$, and therefore $f(Z, W^\downarrow) = 0$ implies vector $Z$ only contains value $0$.
        \item Absolute Homogeneity: For any $\alpha \geq 0$, we have
        \[
        f(\alpha\cdot Z; W^\downarrow) ~=~ \sum_{i = 1}^m (\alpha z^\downarrow_i) \cdot w^\downarrow_i ~=~ \alpha \cdot f(Z, W^\downarrow).
        \]
        \item Triangle Inequality: Let $(z_1, \cdots, z_m)$ and $(y_1, \cdots, y_m)$ be two $n$-dimensional vector that are not necessarily ordered. As function $f$ is symmetric, we assume without loss of generality that $z_1 + y_1 \geq z_2 + y_2 \geq \cdots \geq z_m + y_m$. We assume without loss of generality that $w^\downarrow_{m+1}:= 0$, as $w^\downarrow_{m+1}$ is not needed in function $f$. Since $w^\downarrow_i \geq w^\downarrow_{i+1}$ for every $i \in [n]$, we have
        \begin{align*}
            f(Y + Z; W^\downarrow) &= \sum_{i = 1}^m (y_i + z_i) \cdot w^\downarrow_i \\
            &= \sum_{i = 1}^m \left((w^\downarrow_i - w^\downarrow_{i+1}) \cdot \sum_{k = 1}^i (y_k + z_k) \right)\\
            &= \sum_{i = 1}^m \left((w^\downarrow_i - w^\downarrow_{i+1}) \cdot \sum_{k = 1}^i y_k \right)+ \sum_{i = 1}^m \left((w^\downarrow_i - w^\downarrow_{i+1}) \cdot \sum_{k = 1}^i z_k \right) \\
            &= \sum_{i = 1}^m y_i \cdot w^\downarrow_i + \sum_{i = 1}^m z_i \cdot w^\downarrow_i \\
            &\leq f(Y; W^\downarrow) + f(Z; W^\downarrow),
        \end{align*}
        where the inequality holds as it becomes an equality only when vector $Y$ and $Z$ are both sorted.
    \end{itemize}
    Therefore, function $f$ is a norm.
\end{proof}

Now, we show how to represent the objectives of the optimal matching and the matching obtained by \Cref{alg:simple-general} as a function of $f$. For any subset $S \subseteq [m]$ of slots, we define vectors $U(S)$ and $Z(S)$, such that
\[
\big(U(S)\big)_j ~:=~ u_j \cdot \one[j \in S] ~~ \text{and} ~~  \big(Z(S)\big)_j ~:=~ \one[j \in S] \cdot u_j \cdot \min_{j' \in S \setminus \{j\}} d(j, j'),
\]
where notation $\one[\cdot]$ represents the indicator function,
and similarly $U(S), Z(S)$ to be vectors denoting the sequence obtained by ordering values in $U(S)$ and $Z(S)$ in non-increasing order, respectively.

Let $M^*$ be the optimal matching of the factorized Nearest-Neighbor instance. With the above notations, we can simplify the value of $\sw(M^*)$ as
\begin{align}
    \sw(M^*) ~=~ \max_{S \subseteq [m]} f(Z(S); W^\downarrow). \label{eq:optimal-matching-simplify}
\end{align}

On the other hand, For $l \in \{0, 1, 2, \cdots, L = \lceil \log_2(m^2) \rceil \}$, we let $\widetilde S_l$ be the set of slots selected in the slot selection phase of \Cref{alg:simple-general} and $\widetilde M_l$ be the corresponding matching, assuming the parameter $r$ from Line 3 of \Cref{alg:simple-general} equals to $2^{-l}$. Then, we have
\begin{align}
    \sw(\widetilde M_l) ~=~ f(Z(\widetilde S_l); W^\downarrow). \label{eq:alg-matching-simplify}
\end{align}

\paragraph{Proof of the approximation ratio.}
Since the parameter $r$ in \Cref{alg:simple-general} is chosen uniformly from $\{0, 1, \cdots, L \}$, the expected performance of \Cref{alg:simple-general} is exactly $\sum_{l = 0}^L \sw(\widetilde M_l)$. Since $L = \lceil \log_2(m^2) \rceil  = O(\log m)$, to show that \Cref{alg:simple-general} achieves an $O(\log m)$ approximation, it's sufficient to show 
\begin{align*}
    \E[\sw(\widetilde M)] ~=~ \frac{1}{L+1} \cdot \sum_{l = 0}^L \sw(\widetilde M_l) ~\geq~ \frac{1}{L+1} \cdot \frac{1}{4} \cdot  \sw(M^*). 
\end{align*}
By applying \eqref{eq:optimal-matching-simplify} and \eqref{eq:alg-matching-simplify} to the above inequality, it's equivalent to show that for any $S \subseteq [m]$, we have
\begin{align}
    4\cdot \sum_{l = 0}^L f(Z(\widetilde S_l); W^\downarrow) ~\geq~ f(Z(S); W^\downarrow). \label{eq:logm-approx}
\end{align}

Fix the subset $S$ on the RHS of \eqref{eq:logm-approx}. We first partition $S$ into
\[
S = \left(\bigcup_{l = 0}^{L - 1} S_l \right) \cup S_L,
\]
such that for $l \in \{1, 2, \cdots, L-1\}$, we have
\[
 S_l := \left\{j \in S: \min_{j' \in S \setminus \{j\}} d(j, j') \in (2^{-l}, 2^{-l+1}] \right\},
\]
and $S_L := S \setminus \bigcup_{l = 1}^{L - 1} S_l$ includes all remaining items $j$ that satisfies $\min_{j' \in S \setminus \{j\}} d(j, j') \leq 2^{-L+1}$.

For $j \in S_l$, as the value of $\min_{j' \in S \setminus \{j\}} d(j, j')$ is upper-bounded by $2^{-l+1}$, its contribution to $\sw(M^*)$ can be bounded by $u_j \cdot 2^{-l+1}$ times the value of the advertiser matched to $j$. Applying this inequality to the function $f(Z(S); W^\downarrow)$, we have
\begin{align}
    f(Z(S); W^\downarrow) ~\leq~ \sum_{l = 1}^L f(U(S_l); W^\downarrow) \cdot 2^{-l+1}. \label{eq:RHS-upper}
\end{align}

Similarly, for $\widetilde S_l$ obtained by \Cref{alg:simple-general}, note that \Cref{alg:simple-general} adds $j$ into $\widetilde S_l$ only when $j$ is at least $r = 2^{-l}$ away from existing slots in $\widetilde S_l$. Therefore, for $j \in \widetilde S_l$, there must be $\min_{j \in \widetilde S_l \setminus \{j\}} d(j, j') \geq 2^{-l}$, which further implies 
\begin{align}
    f(Z(\widetilde S_l); W^\downarrow) ~\geq~  f(U(\widetilde S_l); W^\downarrow) \cdot 2^{-l}\label{eq:LHS-lower}
\end{align}

Then, the following claim, which serves as the key idea to prove \eqref{eq:logm-approx}, relates the value of $f(U(S_l); W^\downarrow)$ to the value of $f(U(S_{l+1}); W^\downarrow)$:

\begin{Claim}
\label{clm:connect}
    For $l \in \{1, 2, \cdots, L - 1\}$, we have
    \[
    f(U(S_l); W^\downarrow) ~\leq~ 2 \cdot f(U(\widetilde S_{l+1}); W^\downarrow).
    \]
\end{Claim}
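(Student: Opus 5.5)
We plan to build an injective map $\phi: S_l \to \widetilde S_{l+1}$ with $u_{\phi(j)} \geq u_j$ for every $j \in S_l$, except that each element of $\widetilde S_{l+1}$ may have up to two preimages. Recall that $\widetilde S_{l+1}$ is produced by the greedy pass over slots in non-increasing order of $u_j$ with threshold $r = 2^{-(l+1)}$.

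\textbf{Constructing the map.} Fix $j \in S_l$. Either $j \in \widetilde S_{l+1}$, or $j$ was rejected because some earlier $j' \in \widetilde S_{l+1}$ has $d(j,j') < 2^{-l-1}$. Since earlier means larger weight, in the second case $u_{j'} \geq u_j$. We set $\phi(j) = j$ in the first case and $\phi(j) = j'$, choosing any such $j'$, in the second. Either way $u_{\phi(j)} \geq u_j$ and $d(j,\phi(j)) < 2^{-l-1}$.

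\textbf{Bounding the fibers.} Suppose $j_1 \neq j_2 \in S_l$ with $\phi(j_1) = \phi(j_2) = c$. The triangle inequality gives $d(j_1,j_2) < 2 \cdot 2^{-l-1} = 2^{-l}$. This contradicts $j_1 \in S_l$, since then $\min_{j' \in S \setminus \{j_1\}} d(j_1,j') > 2^{-l}$. Hence $\phi$ is in fact injective, which is stronger than the factor $2$ we need.

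\textbf{Concluding via monotonicity.} Write $U(S_l)$ and $U(\widetilde S_{l+1})$ as vectors. Because $\phi$ is injective and weight-dominating, for each $t$ the $t$-th largest entry of $U(S_l)$ is at most the $t$-th largest entry of $U(\widetilde S_{l+1})$. To see this, take the $t$ largest entries of $S_l$; their images are $t$ distinct elements of $\widetilde S_{l+1}$, each at least as large. Then
\[
f(U(S_l); W^\downarrow) = \sum_t U(S_l)^\downarrow_t\, w^\downarrow_t \leq \sum_t U(\widetilde S_{l+1})^\downarrow_t\, w^\downarrow_t = f(U(\widetilde S_{l+1}); W^\downarrow),
\]
which is the monotonicity in Claim~\ref{clm:f-norm} applied to sorted vectors. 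This is at most twice the right-hand side, as required.

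If strict versus non-strict inequalities at the boundary cause trouble, for instance if $S_l$ were defined with a closed lower endpoint, injectivity could fail. In that case I would fall back on fibers of size at most $2$. Splitting $S_l$ into two injectively mapped subsets $A$ and $B$ and using the triangle inequality of $f$ together with monotonicity, $f(U(S_l)) \leq f(U(A)) + f(U(B)) \leq 2 f(U(\widetilde S_{l+1}))$. This boundary bookkeeping, namely checking the strict inequalities in the greedy rule ($d \geq r$ to be added) against the definition of $S_l$, is the only delicate step I anticipate.
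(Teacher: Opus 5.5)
Your proof is correct, and it takes a slightly different and sharper route than the paper. The paper splits $S_l$ into $S_l\cap\widetilde S_{l+1}$ and $S_l\setminus\widetilde S_{l+1}$. It bounds the first piece by inclusion and the second by the injective blocking map, then adds the two bounds using the triangle inequality of $f$ from \Cref{clm:f-norm}. That addition is where the factor $2$ comes from. You instead define a single map on all of $S_l$ (identity on kept slots, blocker on rejected slots). The same metric argument rules out every collision, including the cross case where a kept slot $j_1\in S_l$ is the blocker of another $j_2\in S_l$, since then $d(j_1,j_2)<2^{-l-1}$ already contradicts $j_1\in S_l$. Your version therefore gives the claim with constant $1$, and it uses only the sorted-domination (monotonicity) property of $f$, not its triangle inequality. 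The paper's split is marginally more modular but pays the factor $2$ for it.

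The gain is not only cosmetic. When the claim is plugged in with the index shift $l\mapsto l+1$, the weight $2^{-l+1}$ equals $2^{-(l+1)+2}$. So with constant $2$, the chain leading to \eqref{eq:logm-approx} actually yields $8$ rather than the stated $4$, while your constant $1$ gives exactly $4$.

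Your fallback paragraph is essentially the paper's argument, and you do not need it. Rejection in the greedy step is strict ($d<2^{-l-1}$), so you get $d(j_1,j_2)<2^{-l}$ strictly. That contradicts membership in $S_l$ even if its lower endpoint were closed.
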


\begin{proof}
    For subset $S_l$, we partition it into $S^{(1)}_l \cup S^{(2)}_l$, such that $S^{(1)}_l = S_l \cap \widetilde S_{l+1}$, and $S^{(2)}_l = S_l \setminus \widetilde S_{l+1}$. We further define vector $U_1 = U(S^{(1)}_l)$, and $U_2 = U(S^{(2)}_l)$. 
    
    Since $S^{(1)}_l \subseteq U(\widetilde S_{l+1})$, the vector $U_1$ is dominated by $U(\widetilde S_{l+1})$. By monotonicity of function $f$, this implies
    \[
    f(U_1; W^\downarrow) ~\leq~ f(U(\widetilde S_{l+1}); W^\downarrow).
    \]

    Next, we show $f(U_2; W^\downarrow)$ is also upper-bounded by $f(U(\widetilde S_{l+1}); W^\downarrow)$. We achieve this by building up a mapping from $S^{(2)}_l$ to $\widetilde S_{l+1}$. For $j \in S^{(2)}_l$, since we have $j \notin \widetilde S_{l+1}$, there must exist $j' \in \widetilde S_{l+1}$ that satisfies $d(j, j') < r = 2^{-l-1}$. Furthermore, $j'$ should satisfy $u_{j'} \geq u_j$, as \Cref{alg:simple-general} considers each slot in decreasing order of $u_j$. In this case, we call $j \in S^{(2)}_l$ is ``blocked'' by $j' \in \widetilde S_{l+1}$.

    The key observation of our proof is that for $j_1, j_2 \in S^{(2)}_l$ such that $j_1 \neq j_2$, let $j'_1, j'_2 \in \widetilde S_{l+1}$ be the slots that block $j_1$ and $j_2$ respectively. Then, there must be $j'_1 \neq j'_2$. We prove via contradiction: assuming $j'_1 = j'_2$, which means $d(j_1, j'_1) < 2^{-l-1}$, and $d(j_2, j'_1) < 2^{-l-1}$. By triangle inequality, there must be $d(j_1, j_2) \leq 2 \cdot 2^{-l-1} = 2^{-l}$, which is in contrast to the assumption that $j_1, j_2 \in S_l$, as it implies $d(j_1, j_2) > 2^{-l}$.

    The above observation implies that there exists a mapping from non-zero values in $U_2$ to non-zero values in $U(\widetilde S_{l+1})$, such that each non-zero value in $U_2$ is mapped to a value that is no smaller than itself, and each value in $U(\widetilde S_{l+1})$ is mapped at most once. By rearranging the values in $U_2$ (which is allowed as $f$ is symmetric) and monotonicity of function $f$, we have
    \[
    f(U_2; W^\downarrow) ~\leq~ f(U(\widetilde S_{l+1}); W^\downarrow).
    \]

    Combining the above two inequalities together gives
    \begin{align*}
        f(U(S_l); W^\downarrow) ~\leq~  f(U_1; W^\downarrow) + f(U_2; W^\downarrow) ~\leq~ 2\cdot f(U(\widetilde S_{l+1}); W^\downarrow),
    \end{align*}
    where the first inequality follows from the triangle inequality of function $f$.
\end{proof}

Combining \eqref{eq:RHS-upper}, \eqref{eq:LHS-lower}, and \Cref{clm:connect}, we get
\begin{align*}
    f(Z(S); W^\downarrow) \leq~& \sum_{l = 1}^L f(U(S_l); W^\downarrow) \cdot 2^{-l+1} \\
    \leq~& f(U(S_L); W^\downarrow) \cdot 2^{-L+1} + 2\cdot \sum_{l = 2}^L f(U(\widetilde S_l); W^\downarrow) \cdot 2^{-l+1} \\
    \leq~& f(U(S_L); W^\downarrow) \cdot 2^{-L+1} + 4 \cdot \sum_{l = 2}^L f(Z(\widetilde S_l); W^\downarrow) \\
    \leq~& 4\cdot \sum_{l = 0}^L f(Z(\widetilde S_l); W^\downarrow) \\
    &+ \left(f(U(S_L); W^\downarrow) \cdot 2^{-L+1} - 4\cdot f(Z(\widetilde S_0); W^\downarrow)\right).
\end{align*}

Therefore, to show \eqref{eq:logm-approx}, it only remains to show
\begin{align}
    f(U(S_L); W^\downarrow) \cdot 2^{-L+1} ~\leq~ 4\cdot f(Z(\widetilde S_0); W^\downarrow). \label{eq:edge-case}
\end{align}
By \eqref{eq:LHS-lower}, we have
\[
f(Z(\widetilde S_0); W^\downarrow) ~\geq~ f(U(\widetilde S_0); W^\downarrow) \cdot 1 ~\geq~ u_1\cdot w_1,
\]
where we recall $u_1$ represents the highest slot weight, and $w_1$ represents the highest advertiser's value. On the other hand, we have
\[
f(U(S_L); W^\downarrow) \cdot 2^{-L+1} ~\leq~ 2 \cdot \frac{1}{m^2} \cdot m \cdot (u_1 \cdot w_1) ~\leq~ 2u_1 \cdot w_1,
\]
where the first inequality uses the fact that each value in vector $U(S_L)$ is upper-bounded by $u_1$, and similarly each value in vector $W^\downarrow$ is upper-bounded by $w_1$. Combining the above two inequalities together shows \eqref{eq:edge-case}.

\paragraph{Truthfulness and Efficiency.} Finally, we finish the proof of  \Cref{thm:simple-general} by showing that \Cref{alg:simple-general} is efficient, and can be converted into a truthful mechanism via Myerson's Lemma \cite{Myerson1981}. By \Cref{alg:simple-general}, the allocation phase only requires to sort the advertisers and the slots respectively , and match the advertisers and slots in order. Therefore, the algorithm runs in $\OTild(n + m)$ time.

For the truthfulness of the algorithm, we only need to show that the objective achieved in the allocation phase is  monotone. Fix $\tilde S$ to be the set of pre-selected slots, and let $W = (w_1, w_2, \cdots, w_n)$ be the vector of advertisers' values.  The objective achieved in the allocation phase is exactly
\[
\sum_{i = 1}^m \big(Z^\downarrow(\tilde S) \big)_i \cdot w^\downarrow_i, 
\]
where $w^\downarrow_i$ represents the $i$-th largest value in vector $W$. Now suppose we increase the value of $w_i$ to $w'_i$, while leave the remaining values unchanged. Let $W'$ be the modified value vector, and $w'^\downarrow_i$ be the $i$-th largest value in vector $W'$. A crucial observation is that there must be $w'^\downarrow_i \geq w^\downarrow_i$ for every $i \in [n]$. Then, we have
\[
\sum_{i = 1}^m \big(Z^\downarrow(\tilde S) \big)_i \cdot w'^\downarrow_i ~\geq~ \sum_{i = 1}^m \big(Z^\downarrow(\tilde S) \big)_i \cdot w^\downarrow_i, 
\]
i.e., the objective achieved by vector $W'$ is at least the objective achieved by vector $W$. Therefore, the allocation phase of \Cref{alg:simple-general} is monotone, which implies the algorithm can be converted into a truthful auction algorithm via Myerson's Lemma.

\subsection{Constant Approximation for Stochastic Advertiser Values}

Now, we show that the approximation ratio can be further improved to a constant, assuming each $w_i$ is drawn from distribution $\D_i$. To be specific, we prove the following:

\begin{Theorem}
\label{thm:simple-stochastic}
    For Nearest-Neighbor model, assuming each advertiser’s valuation factorizes as $w_i \cdot u_j$, where $w_i$ is a private value of advertiser $i$ independently drawn from a publicly known distribution $\D_i$, and $u_j$ is a public weight associated with slot $j$.  Then, there exists an allocation algorithm running in $\OTild(n + m)$ time that achieves a $504$-approximation. Furthermore, the algorithm can be converted to a truthful mechanism.
\end{Theorem}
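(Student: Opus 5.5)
The key observation is that, once the slot set is fixed, the expected welfare is a deterministic ordered norm. I would use the two-phase structure of \Cref{alg:simple-general}. The slot set $\widetilde S$ is computed offline from the distributions. Once the $w_i$ are reported, the allocation phase is exactly the sorted matching of \Cref{alg:simple-general}, so the $\OTild(n+m)$ running time and monotonicity (hence truthfulness via Myerson's Lemma) carry over verbatim from the proof of \Cref{thm:simple-general}.

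Define $\bar w_k := \E[w^\downarrow_k]$, where the expectation is over $w_i\sim\D_i$. The vector $\bar W=(\bar w_1,\dots,\bar w_n)$ is non-increasing, since $w^\downarrow_k\ge w^\downarrow_{k+1}$ pointwise. For any fixed $S$, linearity of expectation gives
\[
\E\big[f(Z(S);W^\downarrow)\big]=\sum_{k} z^\downarrow_k(S)\,\bar w_k=f(Z(S);\bar W),
\]
so the algorithm's expected welfare for a fixed $\widetilde S$ is $f(Z(\widetilde S);\bar W)$.

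This quantity is the welfare of a deterministic factorized Nearest-Neighbor instance with values $v_{k,j}=\bar w_k u_j$. I would therefore compute $\widetilde S$ by running the algorithm of \Cref{thm:nn-alloc-constant} on that instance and taking its set of occupied slots. Re-matching that set in sorted order can only help, so
\[
f(Z(\widetilde S);\bar W)\ \ge\ \tfrac1{18}\max_S f(Z(S);\bar W).
\]
This step can be done offline and need not be fast.

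It then remains to show $\E_W[\max_S f(Z(S);W^\downarrow)]\le 28\cdot \max_S f(Z(S);\bar W)$, which gives $18\cdot 28=504$. For this I would decompose the ordered norm into top-$k$ norms by Abel summation, as in the proof of \Cref{clm:f-norm}: $f(Z;W^\downarrow)=\sum_k (w^\downarrow_k-w^\downarrow_{k+1})\,\mathrm{Top}_k(Z)$. Let $g_k:=\max_S \mathrm{Top}_k(Z(S))$, a deterministic quantity. For every realization, $\max_S f(Z(S);W^\downarrow)\le \sum_k (w^\downarrow_k-w^\downarrow_{k+1})\,g_k$, and taking expectations yields the bound $\sum_k(\bar w_k-\bar w_{k+1})\,g_k$.

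So the whole problem reduces to one simultaneous-optimization statement: there is a single set $S^\dagger$ with $\mathrm{Top}_k(Z(S^\dagger))\ge g_k/28$ for every $k$. Given this, Abel summation in reverse gives $\max_S f(Z(S);\bar W)\ge f(Z(S^\dagger);\bar W)\ge\frac1{28}\sum_k(\bar w_k-\bar w_{k+1})g_k$.

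This simultaneous bound is the main obstacle, since \Cref{thm:simple-general} only yields it up to $O(\log m)$. My first attempt would build $S^\dagger$ from the greedy scale sets $\widetilde S_l$ of \Cref{alg:simple-general}. For a given $k$, the optimal top-$k$ set has its mass concentrated, up to a geometric series, on a few distance scales $2^{-l}$. I would merge scales greedily in decreasing order of $u_j$, processing coarse scales first and admitting finer points only if they stay at least a constant fraction of their own scale away from everything already chosen. I would then use the blocking argument of \Cref{clm:connect} to charge each optimal top-$k$ point to a distinct selected point of comparable $u$-weight and comparable nearest-neighbor distance. The losses then compound only as constants: factor $2$ from blocking, factor $2$ per scale rounding, and a geometric sum over scales. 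I would tune the merge threshold to land at $28$. If this charging fails for some intermediate $k$, the fallback is to replace exact simultaneity by a fractional version: show that $\sum_k(\bar w_k-\bar w_{k+1})g_k$ is within a constant of an LP in the style of \ref{lp:relax} with weights $\bar W$, and round that LP as in \Cref{alg:rounding}.
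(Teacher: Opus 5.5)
\textbf{There is a genuine gap in the final inequality.} Several parts of your proposal match the paper: the slot-selection step (running the $18$-approximation of \Cref{thm:nn-alloc-constant} on values $\bar w_k u_j$), the identity $\E[f(Z(S);W^\downarrow)]=f(Z(S);\bar W)$ for fixed $S$, and the truthfulness and running-time argument. The gap is in $\E_W[\max_S f(Z(S);W^\downarrow)]\le 28\max_S f(Z(S);\bar W)$. You bound the left side by $\sum_k(\bar w_k-\bar w_{k+1})g_k$, and then need this quantity to be at most $28\max_S f(Z(S);\bar W)$. Point-mass distributions are allowed, so $\bar W$ can be any non-increasing vector, and your requirement becomes a deterministic claim about the metric. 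That claim is false.

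\textbf{A counterexample in the paper's own 2D setting.} Take $u_j=1$ on a $\sqrt{m}\times\sqrt{m}$ grid in a square of diameter $1$, and deterministic values $w_i=1/\sqrt{i}$. Sub-grids of about $k$ points with spacing $\Theta(1/\sqrt{k})$ give $g_k=\Omega(\sqrt{k})$. Hence $\sum_k(\bar w_k-\bar w_{k+1})g_k=\Omega(\sum_k k^{-3/2}\sqrt{k})=\Omega(\log m)$. On the other hand, for every $S$ we have $d(j,j')\ge\max(z_j,z_{j'})$. So the open disks of radius $z_j/2$ around the points $j\in S$ are pairwise disjoint, which gives $\sum_{j\in S}z_j^2=O(1)$. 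By Cauchy--Schwarz, $f(Z(S);\bar W)\le(\sum_j z_j^2)^{1/2}(\sum_{i\le m}1/i)^{1/2}=O(\sqrt{\log m})$.

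Consequences of this example:
\begin{itemize}
\item No set $S^\dagger$ is within a constant of every $g_k$ simultaneously. Not even a random $S^\dagger$ is: average its Top-$k$ bounds against the weights $\bar w_k-\bar w_{k+1}$.
\item The loss is $\Omega(\sqrt{\log m})$ however you tune merge thresholds.
\item Your LP fallback fails for the same reason. Any LP that rounding in the style of \Cref{alg:rounding} turns into a slot set with constant loss is within a constant of $\max_S f(Z(S);\bar W)$; for \ref{lp:relax} the factor is $18$. So such an LP cannot be within a constant of $\sum_k(\bar w_k-\bar w_{k+1})g_k$.
\end{itemize}

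\textbf{Where the loss occurs and what is missing.} The loss comes from pushing $\max_S$ inside the Abel sum. In the deterministic instance above, the true left side equals $\max_S f(Z(S);\bar W)$, yet your intermediate bound is already larger by $\Omega(\sqrt{\log m})$. The missing idea is to keep the maximum outside. The function $h(W)=\max_S g_S(W)$, with $g_S(W)=\sum_i w^\downarrow_i\,(Z^\downarrow(S))_i$, is itself a monotone symmetric norm in $W$. It is a pointwise maximum of ordered norms whose weight vectors $Z^\downarrow(S)$ are non-increasing.

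The paper applies the Ibrahimpur--Swamy bound (\Cref{lma:focs20}) directly to $h$, using only the independence of the $w_i$, to get $\E[h(W)]\le 28\,h(\E[W^\downarrow])$. The constant $28$ is a property of independent random vectors. It is not a simultaneous-optimization property of the slot metric, so it cannot be recovered by a combinatorial charging argument. With it the chain closes:
\[
\E[\sw(M^*)]\le 28\,h(\bar W)\le 28\cdot 18\, g_{\widetilde S}(\bar W)=504\,\E[g_{\widetilde S}(W)].
\]
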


Now, we prove \Cref{thm:simple-stochastic}. Similar to \Cref{sec:simple-general}, we assume without loss of generality that $m \leq n$, as we can arbitrarily add advertisers with zero value into the instance. 

The main idea of the algorithm is that the slots can be pre-selected by resolving the instance with slot values $\{u_j\}$ and advertiser values $\{\gamma^{(k)}\}$, where we define
\begin{align}
    \gamma^{(k)} ~:=~ \mathop{\E}\limits_{X_i \sim \D_i}[X^\downarrow_k] \label{eq:muk-def}
\end{align}
to be the expectation of the $k$-th largest value $X^\downarrow_k$ of samples $X_1, \cdots, X_n$. Formally, we present our algorithm in \Cref{alg:simple-stochastic}.

\begin{algorithm}[tbh]
\caption{\textsc{Constant Approximation for Factorized Nearest-Neighbor Model}}
\label{alg:simple-stochastic}
\KwIn{Number of slots $m$ and advertisers $n$, distributions $\D_1, \cdots, \D_n$, distance function $d$}
\tcc{Slot Selection Phase}
Input slot weights $u_1, u_2, \cdots, u_m$, such that WLOG $u_1 \geq u_2 \geq \cdots \geq u_m$. \\
For $k \in [n]$, calculate $\gamma^{(k)}$ defined in \eqref{eq:muk-def}. \\
Solve the Nearest-Neighbor instance via the algorithm from \Cref{thm:nn-alloc-constant} with values $v_{i, j} = u_j \cdot \gamma^{(i)}$ and distance function $d(\cdot, \cdot)$. Let $\widehat M$ be the returned matching.\\
Let $\widetilde S = S(\widehat M)$ be the pre-selected slots. \\
\tcc{Allocation Phase}
Advertisers report values $w_1, w_2, \cdots, w_n$, such that $w_i \sim \D_i$.\\
For $i \in \min\{|\widetilde S|, n\}$, assign the advertiser with $i$-th largest value to the slot in $\widetilde S$ such that the value of $u_j \cdot \min_{j' \in \widetilde S \setminus \{j\}} d(j, j')$ is the $i$-largest. \\
Let $\widetilde M$ be the corresponding matching. \\
\KwOut{Matching $\widetilde M$.}
\end{algorithm}

\paragraph{Representing the objective as a norm.} Similar to \Cref{sec:simple-general}, we first represent the objective of $\widetilde M$ given by \Cref{alg:simple-stochastic} as a norm.

Given an instance of the factorized Nearest-Neighbor model where $w_1, w_2, \cdots, w_n$ are values of advertisers such that $w_i \sim \D_i$, and $u_1 \geq u_2 \geq \cdots \geq u_m \geq 0$ are weights of slots. For $n$-dimensional vector $W = (w_1, w_2, \cdots, w_n) \in \R^n_{\geq 0}$ that only contains non-negative values and subset $S \subseteq [m]$, we define function $g_S: \R^n_{\geq 0} \to \R_{\geq 0}$ as follows:
\begin{align*}
    g_S(W) = \sum_{i = 1}^m w^\downarrow_i \cdot \big(Z^\downarrow(S) \big)_i,
\end{align*}
where we recall notation $w^\downarrow_i$ represents the $i$-th largest value of vector $W$, function $Z(S)$ returns a vector, such that $\big(Z(S)\big)_j = \one[j \in S] \cdot u_j \cdot \min_{j' \in S \setminus \{j\}} d(j, j')$, and notation $\big(Z^\downarrow(S) \big)_i$ represents the $i$-th largest value of vector $Z(S)$. Then, the value of $g_S(W)$ represents the objective of the optimal matching for the factorized Nearest-Neighbor instance, assuming vector $W$ contains advertisers' values, and subset $S$ is the selected subset of slots, i.e., let $M^*$ be the optimal matching of the factorized Nearest-Neighbor instance, we can simplify the value of $\sw(M^*)$ as
    $\sw(M^*) ~=~ \max_{S \subseteq [m]} g_S(W).$
We further define the above expression as a function of $W$. That is, for vector $W \in \R^n_{\geq 0}$, we define
\begin{align*}
    h(W) ~:=~ \max_{S \subseteq [m]} g_S(W).
\end{align*}

Then, the following claim suggests that both $g_S(W)$ and $h(W)$ are norm functions:

\begin{Claim}
\label{clm:gh-norm}
    Function $g_S(W)$ with any $S \subseteq [m]$ and $h(W)$ are both monotone symmetric norms.
\end{Claim}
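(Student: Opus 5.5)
The plan is to reduce everything to \Cref{clm:f-norm}, exchanging the roles of the two vectors, and then to use the standard fact that a pointwise maximum of monotone symmetric norms is again a monotone symmetric norm.

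\textbf{Step 1: $g_S$ is a norm.} Fix $S \subseteq [m]$ and write $z_i := \big(Z^\downarrow(S)\big)_i$, a fixed non-increasing nonnegative sequence. Then $g_S(W) = \sum_{i=1}^m z_i\, w^\downarrow_i$ is a weighted sum of the $m$ largest entries of $W$ with non-increasing weights. This is exactly the function $f$ from \Cref{clm:f-norm} with the two vectors exchanged: $W$ is now the variable and $Z^\downarrow(S)$ is the parameter. I will pad $z$ with $z_{m+1} = \dots = z_n = 0$ so that the sum runs over all $n$ coordinates.

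Symmetry is immediate, since $g_S$ depends on $W$ only through $W^\downarrow$. Monotonicity follows from the coordinatewise domination argument already given: $W \ge W'$ implies $w^\downarrow_i \ge w'^\downarrow_i$ for every $i$. Homogeneity is immediate.

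For the triangle inequality, I will use summation by parts:
\[
g_S(W) = \sum_{i=1}^m (z_i - z_{i+1}) \sum_{k\le i} w^\downarrow_k .
\]
Here each coefficient $z_i - z_{i+1}$ is nonnegative, and each partial sum $\sum_{k\le i} w^\downarrow_k$ is a top-$i$ sum, hence a maximum of linear functions. So this partial sum is subadditive, and therefore $g_S$ is subadditive. This route is cleaner than the proof of \Cref{clm:f-norm}.

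\textbf{Step 1, positivity.} This is the one real caveat. If $Z(S) = 0$, for instance $S = \varnothing$ or all $u_j = 0$, then $g_S \equiv 0$ and $g_S$ is only a seminorm. I will handle this the same way \Cref{clm:f-norm} handles $w^\downarrow_1 > 0$: assume $z_1 > 0$, which excludes the trivial case. Then $g_S(W) \ge z_1 w^\downarrow_1$, which is positive whenever $W \neq 0$.

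\textbf{Step 2: $h$ is a norm.} I will write $h = \max_{S} g_S$, a maximum over finitely many functions. Symmetry, monotonicity and homogeneity all pass through a pointwise maximum directly. Subadditivity also passes through:
\[
h(W+W') = g_{S^\star}(W+W') \le g_{S^\star}(W) + g_{S^\star}(W') \le h(W) + h(W'),
\]
where $S^\star$ is the maximizer for $W+W'$.

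For positivity of $h$, I take a singleton $S = \{j\}$ with $u_j > 0$. Then $Z(\{j\}) = u_j e_j$, using the convention that the discount is $1$ when there is no other slot, which matches $\delta_{\min}(j,\{j\}) = 1$. This gives $h(W) \ge u_1 w^\downarrow_1 > 0$ whenever $W \neq 0$, assuming $u_1 > 0$.

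The main obstacle is only this degenerate positivity issue. For a fixed $S$ with $Z(S) = 0$, the function $g_S$ is genuinely only a seminorm, so I will state that nondegeneracy assumption explicitly. All the other properties are routine.
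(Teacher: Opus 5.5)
Your proposal is correct and follows the paper's argument. The paper likewise treats $g_S$ as the function $f$ of \Cref{clm:f-norm} with the roles of the two vectors exchanged, and concludes for $h$ because a pointwise maximum of finitely many monotone symmetric norms is again one; your top-$i$-sum proof of subadditivity, and your remark that $g_S$ is only a seminorm when $Z(S)=0$ (e.g.\ $S=\varnothing$) while $h$ stays definite via a singleton $S$, are details the paper omits but they do not change the route.
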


\begin{proof}
    Note that the definition of $g_S(W)$ is identical to the definition of function $f(Z; W^\downarrow)$ (with the only difference of value weights). Then, proving $g_S(W)$ is a monotone symmetric norm is identical to the proof of \Cref{clm:f-norm}; we omit the detailed proofs for simplicity. Furthermore, Since the pointwise maximum of a collection of monotone symmetric norms remains a monotone symmetric norm, it follows that function $h$ is still a monotone symmetric norm, as it is the maximum over $\{g_S\}$.
\end{proof}

\paragraph{Proof of the approximation ratio.} Now, we are ready to prove the approximation ratio for \Cref{alg:simple-stochastic}. Our proof relies on the following lemma:

\begin{Lemma}[Theorem 5.1 in \cite{IS-FOCS20}]
\label{lma:focs20}
    Let vector $W \in \R^n_{\geq 0}$ follow a product distribution on $\R^n_{\geq 0}$, and $f: \R^n \to \R_{\geq 0}$ be a monotone symmetric norm. Then $f(\E[W^\downarrow]) \leq \E[f(W)] \leq 28 \cdot f(\E[W^\downarrow])$.
\end{Lemma}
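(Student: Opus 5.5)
The plan is to prove the two inequalities separately; the lower bound is routine and the upper bound is the real content.

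\textbf{Lower bound.} Since $f$ is symmetric, $f(W)=f(W^\downarrow)$ pointwise. Since $f$ is a norm, it is convex. Jensen's inequality applied to the random vector $W^\downarrow$ then gives $\E[f(W)]=\E[f(W^\downarrow)]\geq f(\E[W^\downarrow])$. This part does not use independence.

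\textbf{Reduction of the upper bound to Top-$k$ norms.} Let $\mathrm{Top}_k(x):=\sum_{i\le k}x^\downarrow_i$. I will use the standard majorization fact for monotone symmetric norms (Hardy--Littlewood--P\'olya, in weak-majorization form): if $\mathrm{Top}_k(y)\le\alpha\,\mathrm{Top}_k(x)$ for every $k$, then $f(y)\le\alpha f(x)$. This holds because $f$ is a supremum of ordered norms $x\mapsto\sum_k a_k x^\downarrow_k$ with $a_1\ge a_2\ge\cdots\ge0$, and each such ordered norm is a nonnegative combination of Top-$k$ norms.

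The difficulty is that $\E[\sup]\neq\sup\E$, so bounding each $\E[\mathrm{Top}_k(W)]$ separately is not enough. Instead I will decompose $W$ deterministically relative to the profile $\mu:=\E[W^\downarrow]$. Set thresholds $\tau_k:=\mu_k$ at dyadic indices $k=2^s$. Write $W=A+B$, where:
\begin{itemize}
\item $A$ is $W$ capped so that its $k$-th largest entry is at most a constant times $\mu_{\lceil k/2\rceil}$;
\item $B$ is the excess above the caps.
\end{itemize}
By construction $\mathrm{Top}_k(A)\le c\,\mathrm{Top}_k(\mu)$ for all $k$ deterministically, so majorization gives $f(A)\le c\,f(\mu)$.

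\textbf{Controlling the excess (expected main obstacle).} The hard step is bounding $\E[f(B)]$. I will proceed as follows.
\begin{enumerate}
\item Use the triangle inequality and monotonicity to write $f(B)\le\sum_s f(B^{(s)})$, where $B^{(s)}$ collects the excess coordinates at dyadic level $s$.
\item Use independence of the product distribution to bound the number of coordinates exceeding $\mu_{2^s}$ by $O(2^s)$, in expectation and with geometric tails. The tool is the characterization $\mathrm{Top}_k(x)=\min_t\bigl(kt+\sum_i(x_i-t)^+\bigr)$ with $t=\mu_k$, combined with a Chernoff bound on $\#\{i:W_i>t\}$.
\item Show $\E[f(B^{(s)})]$ decays geometrically relative to $f$ evaluated on the level-$s$ block of $\mu$.
\end{enumerate}
Summing over $s$ then gives $\E[f(B)]=O(f(\mu))$.

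The constant $28$ should come from combining the cap constant, the factor-$2$ loss from dyadic rounding, and the geometric-tail sum. I expect the delicate part to be making the Chernoff-type bound for the counts hold uniformly enough that the dyadic sum converges with that constant. The first thing I would try is the $\mathrm{Top}_k$ min-formula with $t=\mu_k$, because it converts the problem into sums of independent variables $(W_i-t)^+$, which are linear in $W$ and so can be handled coordinatewise.
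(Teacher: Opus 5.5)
The paper does not prove this lemma. It imports it as Theorem~5.1 of \cite{IS-FOCS20}, so your proposal has to stand on its own. Your lower bound (symmetry plus Jensen) is fine, as are the Top-$k$ majorization reduction and the deterministic bound on the capped part $A$ (the $\lceil k/2\rceil$ shift costs an extra factor $2$). The gap is the excess $B$, which is the whole content of the theorem, and the concrete steps you give for it fail as stated.

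First, the number of coordinates exceeding $\mu_{2^s}$ need not be $O(2^s)$. Let the $W_i$ be i.i.d.\ with $\Pr[W_i=1]=\lambda/n$ and $W_i=0$ otherwise. Then $\mu_k=\Pr[\mathrm{Bin}(n,\lambda/n)\ge k]<1$, so every coordinate equal to $1$ exceeds $\mu_k$, and the count has mean $\lambda$, arbitrarily large compared with $k$. The thresholds must be $c\mu_k$, and independence must first enter through a \emph{lower}-tail bound rather than an upper-tail Chernoff bound. Let $N(t):=\#\{i: W_i>t\}$. Markov gives $\mu_k\ge t\Pr[N(t)\ge k]$, so $\Pr[N(c\mu_k)\ge k]\le 1/c$. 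Chernoff's lower tail gives $\Pr[N(t)\ge k]\ge 1-e^{-1/4}$ whenever $\E N(t)\ge 2k$. Hence $\E N(c\mu_k)<2k$ once $c\ge 5$.

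Second, step~3 is false as stated, and counting exceedances does not by itself control how far they exceed the caps. Let $k=2^s$ coordinates be independently equal to $M$ with probability $q$ and $0$ otherwise, and let all other coordinates be $0$. Then $\mu_k=Mq^k$, $\mu_j=0$ for $j>k$, and $\E B_k=(M-c\mu_{\lceil k/2\rceil})q^k=(1-o(1))\mu_k$ as $q\to 0$. So $\E f(B^{(s)})$ does not decay relative to $f$ of the level-$s$ block of $\mu$.

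What is true is decay relative to $\mu$ at a strictly smaller index. Take caps $t_k:=c\mu_{k'}$ with $k':=\lceil k/8\rceil$, and write $\E B_k=\int_{t_k}^{\infty}\Pr[N(x)\ge k]\,dx$. Apply the van den Berg--Kesten inequality to the independent indicators $\one[W_i>x]$: for $x\ge t_k$, $\Pr[N(x)\ge k]\le\Pr[N(x)\ge k']\cdot\Pr[N(t_k)\ge k-k']$.
\begin{itemize}
\item The second factor is $e^{-\Omega(k)}$ by an upper-tail Chernoff bound, since $\E N(t_k)<2k'$.
\item The first factor integrates to $\int_{t_k}^\infty\Pr[N(x)\ge k']\,dx\le\mu_{k'}\le\mu_1$.
\end{itemize}
Now use $f(B)\le f(e_1)\sum_k B_k$ (with $e_1$ the first unit vector) and $f(e_1)\mu_1\le f(\mu)$. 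This gives $\E f(B)=O(f(\mu))$ with no dyadic blocks needed. Meanwhile $f(A)\le 8c\,f(\mu)$ by your majorization step.

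Two smaller points. The $\mathrm{Top}_k$ min-formula addresses a non-issue: $\E[\mathrm{Top}_k(W)]=\sum_{i\le k}\mu_i=\mathrm{Top}_k(\mu)$ holds exactly by linearity, and the formula does nothing about $\E\sup$ versus $\sup\E$. Also, nothing in your outline yields the specific constant $28$. The repaired argument gives some absolute constant. That suffices for the constant-factor claim of \Cref{thm:simple-stochastic}, but not for the stated $504$.
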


We apply \Cref{lma:focs20} with $W = (w_1, \cdots, w_n)$ being the vector of advertisers' values. Note that vector $W$ follows a product distribution, such that $w_i \sim \D_i$, and vector $\E[W^\downarrow]$ is exactly $(\gamma^{(1)}, \gamma^{(2)}, \cdots, \gamma^{(n)})$. For simplicity of the notation, we define vector $\Gamma = \E[W^\downarrow] = (\gamma^{(1)}, \cdots, \gamma^{(n)})$.

Recall that when advertiser's value vector $W$ is fixed, the objective $\sw(M^*)$ is defined as $h(W)$. By taking the expectation over randomness of $W$ and applying \Cref{lma:focs20}, we have
\begin{align*}
    \E[\sw(M^*)] ~=~ \E[h(W)] ~\leq~ 28 \cdot h(\Gamma),
\end{align*}
where the inequality uses the fact that function $h$ is a monotone symmetric norm. A key observation is that $h(\Gamma)$ is exactly the objective of the optimal matching of the instance we constructed in Line 3 of \Cref{alg:simple-stochastic}. Let function $\widehat{\sw}(M)$ be the objective of matching $M$ for this instance, and $\widehat M^*$ be the optimal matching of this instance. Then, we have
\begin{align*}
    h(\Gamma) ~=~ \widehat{\sw}(\widehat M^*) ~\leq~ 18 \cdot \widehat{\sw}(\widehat M) ~\leq~ 18 \cdot g_{\widetilde S}(\Gamma),
\end{align*}
where the first inequality follows from the fact that the algorithm in \Cref{thm:nn-alloc-constant} gives a $18$-approximation, and the second inequality follows from the fact that $\widetilde S = S(\widehat M)$, and when $\widetilde S$ is fixed, the optimal matching is nothing but matching values in $\Gamma$ and $Z(\widetilde S)$ in descending order. 

Finally, since function $g_{\widetilde S}$ is a monotone symmetric norm,
\begin{align*}
    g_{\widetilde S}(\Gamma) ~\leq~ \E[g_{\widetilde S}(W)].
\end{align*}
Merging the above three inequalities together gives
\[
\E[\sw(M^*)] ~\leq~ 28 \cdot h(\Gamma) ~\leq~ 28 \cdot 18 g_{\widetilde S}(\Gamma) ~\leq~ 504 \cdot \E[g_{\widetilde S}(W)],
\]
which proves the approximation ratio of \Cref{alg:simple-stochastic}, as $\E[g_{\widetilde S}(W)]$ exactly represents the performance of \Cref{alg:simple-stochastic}.

\paragraph{Truthfulness and efficiency.} It remains to show that the allocation phase of \Cref{alg:simple-stochastic} runs in $\OTild(n + m)$ time, and the objective achieved in allocation phase is monotone, which implies that \Cref{alg:simple-stochastic} can be converted into a truthful auction algorithm via Myerson's Lemma. We omit the proofs of these two arguments: as the allocation phase of \Cref{alg:simple-stochastic} is identical to the allocation phase of \Cref{alg:simple-general}, the proofs are are identical to the proofs of truthfulness and efficiency for \Cref{alg:simple-general}.

\section{PTAS Algorithm for Unweighted Advertisers}
\label{appx:ptas_2d}

In this section, we present a PTAS algorithm that achieves a $(1 + \epsilon)$-approximation for the factorized Nearest-Neighbor model with unweighted advertisers and 2D Euclidean metric, that is, each valuation $v_{i, j}$ equals to $u_j$, which only depends on the identity of a slot, and the metric space is the Euclidean distance in the 2D plane. To be specific, we prove the following:

\begin{Theorem}
\label{thm:ptas}
    For Nearest-Neighbor model with $n$ advertisers and $m$ slots, assuming each $v_{i, j} = u_j$ and the metric $d (\cdot, \cdot)$ is  given by Euclidean distances in the 2D plane, given $\epsilon \in (0, 0.5)$, there exists an algorithm that runs in $O(m^{O(\epsilon^{-2})} \cdot \poly(n, \epsilon^{-1}))$ time and achieves a $(1 + \epsilon)$-approximation against the optimal allocation.
\end{Theorem}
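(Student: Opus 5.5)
Since every advertiser values a slot identically ($v_{i,j}=u_j$), the matching is irrelevant: any injective assignment of advertisers to the selected slots gives the same welfare. It suffices to choose $S\subseteq[m]$ with $|S|\le n$ maximizing
\[
\Phi(S)=\sum_{j\in S}u_j\cdot\delta_{\min}(j,S).
\]
The plan is to recast this as a weighted disk-packing problem in the plane and then apply a shifted-grid dynamic program in the style of \cite{EJS-SICOMP05}.

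\textbf{Step 1 (reduction to disks).} Following the idea of \Cref{lma:reduction}, guess for each slot $j\in S$ its nearest-neighbour distance $\rho_j=\delta_{\min}(j,S)$. This takes one of at most $m$ values $d(j,j')$, or $1$ if $S=\{j\}$. Place a disk of radius $\rho_j/2$ around $j$. These disks have pairwise disjoint interiors, since $d(j,j')\ge\max(\rho_j,\rho_{j'})\ge(\rho_j+\rho_{j'})/2$. Conversely, take any family of disks centered at distinct slots, with radius $r_j/2$ and weight $u_jr_j$, that is pairwise interior-disjoint and whose radii are among the candidate values. Then $d(j,j')\ge(r_j+r_{j'})/2$, so the true welfare is at least half the total weight. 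This loses a factor of $2$, which is too much for a PTAS. I would therefore keep the \emph{exact} constraint $d(j,j')\ge\max(r_j,r_{j'})$ for all selected pairs, i.e.\ Constraint (II) of \Cref{def:gpds}: no center lies inside another selected disk of radius $r$. Under this constraint the objective $\sum_j u_jr_j$ lower-bounds the welfare, and the optimum attains it with equality. The target is therefore a PTAS for maximum-weight selection of centered disks satisfying Constraint (II), with cardinality at most $n$.

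\textbf{Step 2 (shifted hierarchical grid).} Round radii down to powers of $(1+\epsilon)$; this loses only a $(1+\epsilon)$ factor. Build a quadtree-like hierarchy: at level $\ell$, the grid cell side is $k\cdot(1+\epsilon)^{\ell}$ with $k=\Theta(1/\epsilon)$, and the whole hierarchy is shifted by a random or enumerated offset. A shifting argument shows that, for some shift, the disks of radius at most the level-$\ell$ scale that cross level-$\ell$ cell boundaries carry at most an $\epsilon$ fraction of the weight; discard them. Each remaining cell interacts with the outside only through disks whose radius is comparable to or larger than the cell. The DP at a cell stores the set of selected large disks meeting that cell. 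By the separation constraint, only $O(k^2)=O(\epsilon^{-2})$ disks of each radius scale can meet a cell of side $k$ times that scale, and disks of much larger radius meeting the cell number $O(1)$ per scale. With the geometric rounding this gives $m^{O(\epsilon^{-2})}$ boundary configurations. The cardinality budget $n$ is tracked as an extra DP coordinate, costing a $\poly(n)$ factor. Combining children's tables consistently yields the claimed running time $m^{O(\epsilon^{-2})}\poly(n,\epsilon^{-1})$.

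\textbf{Main obstacle.} The hard part is the interaction across scales. A small disk's feasibility depends on large disks far up the hierarchy, via the constraint $d(j,j')\ge r_{j'}$. Moreover, a large disk's radius is "certified" only if no small selected slot lies within it. The DP state must therefore carry, downward, the large disks intersecting the cell, and I must argue that this is few. I would bound it by a packing argument: among selected slots, those with $r\ge R$ are pairwise $R$-separated, so only $O((s/R)^2)$ of them meet a cell of side $s$. Summing over the geometric scales above the cell's level gives $O(k^2)$ overall. I expect verifying that this bound holds under Constraint (II), rather than true disjointness, to be the delicate step.
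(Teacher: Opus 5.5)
Your architecture is the paper's. Step~1 is exactly the reduction of Lemma~\ref{lma:reduction-wds}: a disk at $c_j$ of radius $d(j,j')$ and weight $u_j\,d(j,j')$, Constraint~(II), at most $n$ disks. Step~2 is the shifted hierarchical-grid DP of \cite{EJS-SICOMP05}, with state (level, cell, selected larger disks meeting the cell, number of disks still to pick).

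\textbf{The gap.} It lies in the step you flag yourself, and the resolution you propose does not work. The packing bound ``only $O((s/R)^2)$ selected disks of radius at least $R$ meet a cell of side $s$'' counts only disks whose centers lie within $O(s)$ of the cell. A selected disk of radius $R\gg s$ can meet the cell with its center at distance about $R$. So for $R\gtrsim s$ the per-scale count is $\Theta(1)$, not $(s/R)^2$, and the sum over scales does not collapse to $O(k^2)$. Your ``$O(1)$ per scale'' bound for much larger disks only caps a DP state at $O(\epsilon^{-2})$ disks plus the number of scales above the cell. Under your rounding that number is of order $\epsilon^{-1}\log(d_{\max}/d_{\min})$, which is not bounded by any function of $\epsilon$. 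Even after the radius cutoff described below, this gives quasi-polynomial time $m^{O(\epsilon^{-2}+\epsilon^{-1}\log(m/\epsilon))}$, not $m^{O(\epsilon^{-2})}$.

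\textbf{The missing ingredient.} You need a bound on very large disks that does not depend on the number of scales; this is the second half of Claim~\ref{clm:intersect-bounded}. Let $D_i,D_j$ be selected disks meeting a cell with center $c_g$, with $r_i\ge r_j>6s$.
\begin{itemize}
\item Constraint~(II) gives $d(c_i,c_j)\ge r_i$, while $d(c_i,c_g)\le r_i+s$ and $d(c_j,c_g)\le r_j+s$.
\item If both centers are at distance at least $6s$ from $c_g$, the law of cosines bounds $\angle c_ic_gc_j$ below by an absolute constant.
\item Centers closer than $6s$ to $c_g$ are pairwise more than $6s$ apart, so there are only $O(1)$ of them.
\end{itemize}
Hence only $O(1)$ such disks meet a cell in total, however many scales exist.

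The remaining larger disks, with radius between the cell's level scale and $6s$, need no summation over scales either. They are pairwise separated by the level scale and have centers within $O(s)$ of the cell, so a single packing argument gives $O(k^2)$. Constraint~(II) versus true disjointness is not an obstacle: Constraint~(II) makes the half-radius disks pairwise disjoint, which is all the area argument needs.

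\textbf{Two smaller repairs.} First, cells whose side changes by a factor $1+\epsilon$ per level do not nest. Use ratio $k+1$ between levels with line offsets fixed mod $k$, as in \cite{EJS-SICOMP05}; this also makes rounding the radii unnecessary. Second, to keep $\log(d_{\max}/d_{\min})$ out of the running time, first drop slots whose nearest-neighbour distance is below $\epsilon/m$, as the paper does. This costs at most $\epsilon\max_j u_j\le\epsilon\cdot\mathrm{OPT}$.
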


Our algorithm consists of two steps. First, we reduce the problem to a Weighted Disk Selection problem. This problem is similar to the Grouped Pseudo-Disk Selection problem defined in \Cref{def:gpds}, where the metric is the Euclidean distance on a 2D plane. The main difference is that, in the Weighted Disk Selection problem, the disks are not partitioned into multiple groups that restrict the selection to at most one disk per group. Instead, we only require that the total number of selected disks be bounded. Next, we solve the grouped pseudo-disk selection problem via a PTAS algorithm. Our main algorithm is a dynamic programming algorithm based on \cite{EJS-SICOMP05}, which we modify to accommodate the requirements of the Weighted Disk Selection problem, and the detailed proofs are included for completeness. 

\subsection{Reduction to the  Weighted Disk Selection Problem}

To begin, we give the definition of the Weighted Disk Selection problem and the reduction:

\begin{Definition}[Weighted Disk Selection Problem]
\label{def:wds}
Define a disk $D = (c, r, w)$ on a 2D plane to be the set of points with distance \emph{strictly smaller} than $r$ from $c$, and is associated with weight $w$. Given $N$ disks $D_1 = (c_1, r_1, w_1), \cdots, D_N = (c_N, r_N, w_N)$ and parameter $M$, the \emph{Weighted Disk Selection Problem} asks to find a selection $T \subseteq [N]$ that chooses at most $Q$ disks from the collection with the objective of maximizing
\[
\val(T) ~:=~ \sum_{t \in T} w_t,
\]
such that the center of each selected disk $c_t: t \in T$ is not covered by another selected disk
\end{Definition}

\begin{Lemma}
\label{lma:reduction-wds}
For Nearest-Neighbor model with $n$ advertisers and $m$ slots, assuming each $v_{i, j} = u_j$ and the metric $d (\cdot, \cdot)$ is  given by Euclidean distances in the 2D plane. Given $\epsilon \in (0, 0.5)$, the allocation of this Nearest-Neighbor model admits a polynomial-time reduction to a Weighted Disk Selection instance with $N \leq m^2$ disks, such that the Weighted Disk Selection instance is a $(1 + \epsilon)$-approximation of the Nearest-Neighbor model. To be specific, the reduced instance satisfied the following:
\begin{enumerate}
    \item The maximum radius of the constructed disk is at most $1$, and the minimum radius of the constructed disk is at least $\epsilon/m$.
    \item Let $M^*$ be the optimal matching of the Nearest-Neighbor instance and $T^* \subseteq [N]$ be the optimal selection of the Weighted Disk Selection instance. Then, $(1 - \epsilon) \cdot \sw(M^*) \leq \val(T^*)$.
    \item Given a feasible selection $T \subseteq [N]$ of the Weighted Disk Selection instance, we can find a feasible matching $M$ for the Nearest-Neighbor instance in polynomial time, such that  $\sw(M) \geq \val(T)$.
\end{enumerate}
\end{Lemma}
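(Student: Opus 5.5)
The plan is to encode each candidate (slot, nearest-neighbour distance) pair as a disk centred at that slot, with radius equal to the discount and weight equal to the discounted value. Since $v_{i,j}=u_j$ for every advertiser, a matching is determined up to welfare by its occupied set $S=S(M)$, subject only to $|S|\le \min\{n,m\}$. Its welfare is $\sum_{j\in S}u_j\,\delta_{\min}(j,S)$. The reduction therefore needs only to encode the choice of $S$, and it will set $Q:=\min\{n,m\}$.

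\textbf{Construction.} For every slot $j$, with $c=$ the point of slot $j$:
\begin{itemize}
\item for every $j'\neq j$ with $d(j,j')\ge \epsilon/m$, create the disk $(c,\,d(j,j'),\,u_j\, d(j,j'))$;
\item additionally create $(c,\,1,\,u_j)$, which covers the case where $j$ is alone.
\end{itemize}
This gives at most $m(m-1)+m=m^2$ disks. Since the diameter is at most $1$, every radius lies in $[\epsilon/m,1]$, which is Property~1. The construction is clearly polynomial.

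\textbf{Property 3 (disks to matchings).} Let $T$ be feasible. Two selected disks cannot share a centre: a disk contains its own centre because $0<r$, so the second disk's centre would be covered by the first. Hence the set $S$ of centres of $T$ consists of distinct slots, and $|S|=|T|\le Q$, so we can assign arbitrary distinct advertisers to $S$.

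Now take any $j\in S$ and any other $k\in S$. Feasibility says that $c_k$ is not covered by the disk of $j$, that is, $d(j,k)\ge r_j$. Taking the minimum over $k$ gives $\delta_{\min}(j,S)\ge r_j$. If $S=\{j\}$, then $\delta_{\min}=1\ge r_j$. Therefore $\sw(M)\ge\sum_{t\in T}u_{c_t}r_t=\val(T)$.

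\textbf{Property 2 (optimum to disks).} Let $S^*=S(M^*)$ and $\delta_j=\delta_{\min}(j,S^*)$.
\begin{itemize}
\item If $|S^*|=1$, select the radius-$1$ disk at that slot; this recovers $\sw(M^*)$ exactly.
\item Otherwise, let $S'=\{j\in S^*:\delta_j\ge\epsilon/m\}$. For each $j\in S'$, select the disk centred at $j$ with radius $\delta_j$; this radius equals $d(j,j')$ for the nearest $j'\in S^*$, so the disk exists.
\end{itemize}

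Feasibility: for distinct $j,k\in S'$ we have $d(j,k)\ge \delta_k=r_k$, because $\delta_k$ is a minimum over $S^*\ni j$. So $c_j$ is not covered by disk $k$. Also $|S'|\le|S^*|\le Q$.

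Value: the discarded slots contribute at most $m\cdot u_{\max}\cdot\epsilon/m=\epsilon u_{\max}$. Placing a single ad at the top slot gives $\sw(M^*)\ge u_{\max}$. Hence $\val(T^*)\ge\sw(M^*)-\epsilon u_{\max}\ge(1-\epsilon)\sw(M^*)$.

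The main obstacle is the truncation of small radii required by Property~1: one must check that dropping low-discount slots neither breaks feasibility nor loses more than an $\epsilon$ fraction. The first issue is handled by the observation that radii are computed in the full $S^*$, and the second by the single-slot lower bound on OPT.
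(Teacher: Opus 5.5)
Your proof is correct and follows the paper's route. The paper also builds slot-centred disks with radius equal to a pairwise distance and weight $u_j$ times the radius, truncates at $\epsilon/m$ using $\sw(M^*)\ge\max_j u_j$, and converts back by occupying the disk centres. You deviate in two places, and both repair genuine gaps in the paper's argument.

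\emph{The radius-$1$ disks.} The paper creates no disk for a slot that is displayed alone, and its Property~2 then fails when the optimum uses a single slot. Take $A=(0,0)$ with $u_A=1$, and $B=(-\tfrac12,0)$, $C=(\tfrac12,0)$ with $u_B=u_C=0$. Then $\sw(M^*)=1$, attained by $A$ alone. Every disk the paper centres at $A$ has radius $\tfrac12$, and at most one disk per centre can be selected. Hence every feasible selection has value at most $\tfrac12<(1-\epsilon)\sw(M^*)$. Your extra disk $(c_j,1,u_j)$ fixes this without harming Property~3. Feasibility forces every other selected centre to lie at distance at least $1$ from $c_j$, hence exactly $1$, so that slot's discount really is $1$.

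\emph{Where the radii are computed.} The paper recomputes nearest neighbours inside the pruned set $S(M')$. That nearest neighbour is undefined when $S(M')$ is a singleton, and then the paper's identity $\val(T')=\sw(M')$ breaks. You keep the radii $\delta_j$ computed in the full $S^*$. This gives feasibility immediately, since $d(j,k)\ge\delta_k$ for $j\in S^*$. It also yields $\val(T)=\sum_{j\in S'}u_j\delta_j$ directly, with no detour through $\sw(M')$ and the monotonicity of $\delta_{\min}$.
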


\begin{proof}

 Given the Nearest-Neighbor instance, let $n$ be the number of advertisers, $m$ be the number of slots, $c_j \in \R^2$ be the location of slot $j$ on the 2D plane, and weights of slots $u_1, \cdots, u_m$, 
we build an instance of the Weighted Disk Selection problem as follows:  For each slot pair $(j, j')$ that satisfies $j, j' \in [m]$ and $j \neq j'$, we create a disk $D_{j, j'} = \big(c = c_j, r = d(j, j'), w = u_j \cdot d(j, j')\big)$, i.e., the center of the disk is the location of slot $j$, the radius of the disk is the distance between $j$ and $j'$, and the weight of the disk is the weight of the slot times the distance between $j$ and $j'$. Then, we add $D_{j, j'}$ into the Weighted Disk Selection instance if it satisfies $d(j, j') \geq \epsilon/m$. Note that this guarantees the correctness of the first statement in \Cref{lma:reduction-wds}, as the condition $r \leq 1$ is already satisfied by the assumption that the diameter of the metric space is $1$. Finally, we set parameter $Q = n$, i.e., the Weighted Disk Selection instance can select at most $n$ disks.

\paragraph{Proof of the second statement.} To prove the second statement, let $T^*$ be the optimal selection of the constructed Weighted Disk Selection instance, and let $M^*$ be the optimal matching of the   We aim to show
\begin{align}
    ( 1- \epsilon) \cdot \sw(M^*) ~=~ ( 1- \epsilon) \cdot \sum_{(i, j) \in M^*} u_j \cdot \min_{j' \in S(M^*) \setminus \{j\}} d(j, j') ~\leq~ \sum_{(j, j'): D_{j,j'} \in T^*}  u_j \cdot d(j, j') . \label{eq:ptas-disk-opt-to-opt}
\end{align}

To show \eqref{eq:ptas-disk-opt-to-opt}, we first construct a feasible matching $M'$ that removes the slots in $M^*$ that are too close to another matched slot. We define
\[
M' = \{(i, j): (i, j) \in M^* \land \min_{j' \in S(M^*) \setminus \{j\}} d(j, j') \geq \epsilon/m\}.
\]
Note that
\begin{align*}
    \sw(M') ~&\geq~ \sw(M^*) - \sum_{j \in S(M^*) \setminus S(M')} u_j \cdot \epsilon/m \\
    ~&\geq~ \sw(M^*) -  m \cdot \epsilon/m \cdot \max_{j \in [m]} u_j \\
    ~&\geq~ (1 - \epsilon) \cdot \sw(M^*),
\end{align*}
where the first inequality uses the fact that each slot $j$ removed from $M^*$ is at most $\epsilon/m$ away from another slot, the second inequality upper-bounds the value of each $u_j$ by the highest weight slot, and the last inequality uses the fact that only matching one advertiser to the highest weight slot forms a feasible solution with objective $\max_{j \in [m]} u_j$, and the value of $\sw(M^*)$ could only be higher. 

As we construct a feasible matching $M'$, such that $\sw(M') \geq (1 - \epsilon) \cdot \sw(M^*)$, to prove \eqref{eq:ptas-disk-opt-to-opt}, it's sufficient to show that $\val(T^*) \geq \sw(M')$. Define
\[
T' ~:=~ \left \{D_{j, j'}: \exists i \text{ s.t. }(i, j) \in M' \land j' = \arg \min_{j' \in S(M') \setminus \{j\}} d(j, j')\right \}.
\]
to be a selection of pseudo disks in the Weighted Disk Selection problem. Note that $T'$ is a feasible selection, because
\begin{itemize}
    \item Disk $D_{j, j'}$ should exist, as the value of $\min_{j' \in S(M') \setminus \{j\}} d(j, j')$ is at least $\epsilon/m$.
    \item $T'$ selects at most $n$ disks, as matching $M'$ contains at most $n$ edges.
    \item If a center $c_j$ of the disk $D_{j, j'} \in T'$ is covered by another disk $D_{j_1, j'_1}$, by the definition of disks, we have $d(j_1, j) < d(j_1, j'_1)$. Since $D_{j, j'} \in T'$, there must be $j \in S(M^*)$. Then, the inequality $d(j_1, j) < d(j_1, j'_1)$ is in contrast to the assumption that $j'_1$ minimizes the distance between $j_1$ and vertices in $S(M^*) \setminus \{j_1\}$.
\end{itemize}
Since $T'$ is feasible, By arguing that $\val(T') \leq \val(T^*)$, we have
\begin{align*}
     \sum_{(j, j'): D_{i,j,j'} \in T^*}  u_j \cdot d(j, j') ~\geq~ \sum_{(j, j'): D_{j,j'} \in T'}  u_j \cdot d(j, j') ~=~ \sum_{(i, j) \in M'} u_j \cdot \min_{j' \in S(M') \setminus \{j\}} d(j, j'),
\end{align*}
where the equality follows from the definition of $T'$. Combining the above inequality with the fact that $\sw(M') \geq (1 - \epsilon) \cdot \sw(M^*)$ proves \eqref{eq:ptas-disk-opt-to-opt}.

\paragraph{Proof of the third statement.} To prove the third statement, we aim to show for any feasible selection $T$, there exists a feasible matching $M$ such that
\begin{align}
        \sw(M) ~\geq~  \sum_{(j, j_1): D_{j,j_1} \in T} u_j \cdot d(j, j_1). \label{eq:ptas-disk-any-to-any}
\end{align}

To prove \eqref{eq:ptas-disk-any-to-any}, we construct matching $M$ via the following process: we match an arbitrarily advertiser to a slot $j$ that satisfies $\exists j_1 \neq j: D_{j, j_1} \in T$. The process stops until we match $|T|$ advertisers.

Since the Weighted Disk Selection instance requires $|T| \leq n$, matching $M$ is feasible, and its objective is 
\[
\sw(M) ~=~ \sum_{j: \exists j_1 \neq j \text{ s.t. } D_{j, j_1} \in T} u_j \cdot \min_{j' \in S(M) \setminus \{j\}} d(j, j').
\]
Note that for a fixed $j$, if there exists $j_1$ that satisfies $D_{j, j_1} \in T$, such $j_1$ must be unique, as the center $c_j$ of $D_{j, j_1}$ can be covered only once. Then, to show $\sw(M) \geq \val(T)$, we only need to show
\[
u_j \cdot d(j, j_1) ~\leq~ u_j \cdot \min_{j' \in S(M) \setminus \{j\}} d(j, j').
\]
We prove via contradiction: when the above inequality is not satisfied, there must exist $j, j_1, j', j'_1$ that satisfies the following:
\begin{itemize}
    \item $D_{j, j_1}, D_{j', j'_1} \in T$
    \item $j' = \arg \min_{j' \in S(M) \setminus \{j\}} d(j, j')$
    \item $d(j, j_1) > d(j, j')$.
\end{itemize}
This leads to a contradiction: as $d(j, j_1) > d(j, j')$, the center $c_{j'}$ of $D_{j', j'_1}$ is also covered by the disk $D_{j, j_1}$, which is impossible when $T$ is feasible. Then, the inequality $u_j \cdot d(j, j_1) ~\leq~ u_j \cdot \min_{j' \in S(M) \setminus \{j\}} d(j, j')$ holds, and therefore $\sw(M) \geq \val(T)$.
\end{proof}

\subsection{PTAS Algorithm via Dynamic Programming}

In this subsection, we provide the PTAS algorithm for the Weighted Disk Selection problem. We will show the following:

\begin{Theorem}
\label{thm:ptas-wds}
    For the Weighted Disk Selection problem with $N$ disks, given $\epsilon \in (0, 0.5)$, there exists an algorithm that runs in $O(\log (d_{max}/d_{min}) \cdot N^{O(\epsilon^{-2})})$ time, where $d_{max}$ and $d_{min}$ are the maximum and the minimum diameters of disks, respectively, and achieves at least $(1 - \epsilon)$ times the value of the optimal selection.
\end{Theorem}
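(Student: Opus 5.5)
We adapt the shifted hierarchical grid dynamic program of \cite{EJS-SICOMP05}, with two extra ingredients: the constraint is \emph{center-non-coverage} rather than disjointness, and the DP also tracks a cardinality budget $Q$. Set $k=\lceil 1/\epsilon\rceil$.

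\textbf{Step 1: levels and shifted quadtree.} Partition the disks into levels by diameter, putting level $\ell$ as diameters in $(2^{-\ell-1}d_{max},2^{-\ell}d_{max}]$. This gives $O(\log(d_{max}/d_{min}))$ levels. At level $\ell$, use a grid of side $\Theta(k\cdot 2^{-\ell}d_{max})$, nested across levels so that it forms a quadtree. Choose a random shift $(a,b)\in\{0,\dots,k-1\}^2$ of the grid lines. Also discard every disk whose level-$\ell$ line, taken modulo $k$, hits the disk.

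\textbf{Step 2: the loss is small.} A disk of level $\ell$ meets $O(1)$ of the $k$ residue classes of lines at its own level. Hence, for a uniformly random shift, it is discarded with probability $O(1/k)$. Consider the optimal selection $T^*$ restricted to the surviving disks. It is still feasible, since removing disks only shrinks coverage and shrinks the count. Its expected value is at least $(1-O(\epsilon))\val(T^*)$. Averaging over the shift, some shift achieves this bound. Try all $k^2$ shifts and rescale $\epsilon$.

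\textbf{Step 3: DP over cells, the main obstacle.} For the surviving disks, every level-$\ell$ disk lies strictly inside a single level-$\ell$ cell. Interactions across a cell boundary therefore involve only a larger disk, of coarser level, that covers a center inside the cell, or a smaller disk inside the cell whose center is covered by a larger disk from outside.

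The difficulty is to bound the \emph{interface} of a cell. We need the number of selected disks of level $\le\ell$ that meet a level-$\ell$ cell to be $O(k^2)$. Under the plain disjointness constraint of \cite{EJS-SICOMP05} this follows from a packing bound. Here we first prove a packing lemma. Suppose a selection satisfies center-non-coverage and contains disks $D,D'$ of radii within a factor $2$ of each other, with $r\le r'$. Then $d(c,c')\ge r'\ge r$, so the centers are $\Omega(r)$-separated. Any cell of side $O(kr)$ therefore contains $O(k^2)$ centers of that level. For larger disks meeting the cell, the same argument gives $O(1)$ disks per level, whose centers are $\Omega(r_\ell)$-separated, within distance $r_\ell$ of the cell. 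Summing over the geometrically many coarser levels, these disks may not all be bounded in number. The saving point is that only disks that actually cover some center relevant to the cell matter, and center-separation at each scale bounds these.

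Given the packing lemma, the DP state for a level-$\ell$ cell $C$ is a pair $(\mathcal I,q)$. Here $\mathcal I$ is the set of $O(k^2)$ selected larger-or-equal disks meeting $C$, guessed from $N^{O(k^2)}$ options, and $q\le Q$ is the number of disks used inside $C$. The value is the best total weight of disks strictly inside $C$ such that their centers avoid $\mathcal I$ and they cover no center in $\mathcal I$, consistently. The recursion combines the four children (or more) by enumerating the child-level interfaces and splitting $q$ among the children, using knapsack-style merging of the counts in $\poly(N)$ time. Consistency is checked pairwise among the guessed interface disks.

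\textbf{Step 4: running time.} The number of nonempty cells per level is at most $N$, the number of states per cell is $N^{O(\epsilon^{-2})}\cdot Q$, and the number of levels is $O(\log(d_{max}/d_{min}))$. This gives the claimed $O(\log(d_{max}/d_{min})\cdot N^{O(\epsilon^{-2})})$ time. Correctness holds because the value of the DP's final solution equals that of the surviving optimal selection.
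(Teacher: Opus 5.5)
Your architecture is the same as the paper's: a shifted hierarchical grid, discarding cut disks, and a DP over cells whose state is the set of selected coarser disks meeting the cell plus a count $q$ split knapsack-style among the children. But there is a genuine gap exactly at the step you call the main obstacle. You never show that the interface $\mathcal I$ has size $O(k^2)$.

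\textbf{The missing interface bound.} Your packing lemma only compares disks whose radii are within a factor $2$. For disks much larger than the cell it therefore gives $O(1)$ selected disks \emph{per level}. That means $|\mathcal I|=O(k^2+\log(d_{max}/d_{min}))$ and $N^{O(k^2+\log(d_{max}/d_{min}))}$ states, which is not the claimed bound. Your suggested rescue (``only disks that cover a relevant center matter, and center-separation at each scale bounds these'') is still a per-scale count, so the sum over scales remains. It also does not say which large disks may be left out of the state, since which centers get selected inside the cell is exactly what the DP is deciding.

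What is missing is a scale-free bound. Consider a center-non-covering family of disks, each meeting a cell of side $s$ and each of diameter at least a large constant times $s$. This family has $O(1)$ members, no matter how many levels it spans. The paper (Claim~\ref{clm:intersect-bounded}) proves this by an angular argument at the cell center $c_g$:
\begin{itemize}
\item Any two members satisfy $d(c_i,c_j)\ge\max(r_i,r_j)$, while meeting the cell forces $r_i> d(c_g,c_i)-O(s)$.
\item By the law of cosines, two such centers then subtend an angle bounded away from $0$ at $c_g$. The paper uses $\pi/10$, giving at most $20$ disks.
\item Disks whose centers lie within $O(s)$ of $c_g$ (e.g.\ ones swallowing the cell) need a separate count. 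There are $O(1)$ of them, since their centers are $\Omega(s)$-separated.
\end{itemize}
The smaller coarser disks are then handled all at once, without summing over levels. Halving the radii of a center-non-covering family makes it pairwise disjoint for arbitrary radii, because $d(c_i,c_j)\ge\max(r_i,r_j)\ge(r_i+r_j)/2$. Their total area then fits inside an enlarged cell, which gives the $O(k^2)$ bound.

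\textbf{Step 1 also fails as literally stated.} With diameter classes of ratio $2$ and the same residue $a \bmod k$ at every level, the grids do not nest. The level-$\ell$ lines $(a+ik)2^{-\ell}$ are level-$(\ell+1)$ lines $(a+i'k)2^{-\ell-1}$ only if $a\equiv 0\pmod k$. Suppose you instead shift one quadtree globally by one of $k$ amounts. Then the cut probability need not be $O(1/k)$ at every level. For example, with shift unit $d_{max}$ and $k$ a power of $2$, all $k$ shifts induce the same grid at every level $\ell\ge\log_2 k$.

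The paper follows \cite{EJS-SICOMP05} in using ratio $k+1$ between levels. Since $k+1\equiv 1\pmod k$, the residue class is preserved across levels. Each level-$j$ cell is then a union of $(k+1)^2$ level-$(j+1)$ cells (Lemma~\ref{lma:grid-structure}), and every disk is still cut for only $O(1)$ of the $k$ residues per axis.
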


To prove \Cref{thm:ptas-wds}, we first give the PTAS algorithm. The algorithm is similar the PTAS algorithm in \cite{EJS-SICOMP05} for the Weighted Independent Disk Selection problem, which is similar to our Weighted Disk Selection problem with the following two differences:
\begin{itemize}
    \item It requires the selected disks to be disjoint.
    \item The number of selected disks is not restricted.
\end{itemize}

Our proof follows the outline of the algorithm in \cite{EJS-SICOMP05}, and we omit the proofs of some lemmas when it is identical to the one in \cite{EJS-SICOMP05}.

\paragraph{Pre-processing.} Given $\epsilon$, we set $k = 1/\epsilon$ and assume without loss of generality that $\alpha$ is an integer. We also re-scale the 2D plane, so that the largest disk has diameter $1$. Let $d_{min}$ be the diameter of the smallest disk, we define $l := \log_{k+1} \lfloor 1/d_{min}\rfloor$ to be the number of levels we wish to partition. Then, we partition the set of disks into $l+1$ levels, such that for $j \in \{0, 1, \cdots, l\}$, level $j$ includes all disks $D_i$ with diameter $2r_i \in \big((k+1)^{-j-1}, (k+1)^{-j} \big]$. We also require the smallest disk with diameter $d_{min}$ being in level $l$.

For level $j$, we cut the plane into multiple grids with lines that are $k \cdot (k+1)^{-j}$ away from each other, both vertically and horizontally. For each vertical line, we require its x-coordinate equals to a multiple of $(k+1)^{-j}$, and that this multiple module $k$ equals a fixed value $\alpha \in \{0, 1, 2\cdots, k-1\}$. Similarly, for each horizontal line, we require its y-coordinate equals to a multiple of $(k+1)^{-j}$, and that this multiple module $k$ equals a fixed value $\beta \in \{0, 1, 2\cdots, k-1\}$. The above process partitions level $j$ into multiple grids. For a pair of fixed $(\alpha, \beta)$,  we rule out all disks that are intersecting the lines used for gridding from consideration. Then, the following lemma suggests that this gridding step does not harm our optimal solution a lot:

\begin{Lemma}
\label{lma:gridding}
    Let $T^*$ be the optimal selection of the Weighted Disk Selection instance, and let $T^*_{\alpha, \beta}$ be the optimal selection of the Weighted Disk Selection instance after removing the disks intersecting the lines used for gridding parameterized by $\alpha$ and $\beta$. Then, there exists a pair $(\alpha, \beta)$ that satisfies $\val(T^*_{\alpha, \beta}) \geq (1 - \epsilon)^2 \cdot \val(T^*)$.
\end{Lemma}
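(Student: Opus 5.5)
We will use the standard shifting argument of \cite{EJS-SICOMP05}, averaging over the $k^2$ choices of $(\alpha,\beta)$. Fix the optimal selection $T^*$. Removing disks from a feasible selection keeps it feasible: the constraint ``no selected center is covered by another selected disk'' and the cardinality bound $Q$ are both monotone under deletion. So $T^*\setminus R_{\alpha,\beta}$ is feasible for the gridded instance, where $R_{\alpha,\beta}$ is the set of disks of $T^*$ that hit a gridding line for $(\alpha,\beta)$. Hence $\val(T^*_{\alpha,\beta})\ge \val(T^*)-\val(R_{\alpha,\beta})$, and it suffices to find $(\alpha,\beta)$ with $\val(R_{\alpha,\beta})\le (1-(1-\epsilon)^2)\val(T^*)$. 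Since $1-(1-\epsilon)^2=2\epsilon-\epsilon^2$, a bound of roughly $2\epsilon$ on the average is the target.

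\textbf{Step 1 (vertical lines).} Fix a disk $D\in T^*$ at level $j$. Its diameter is at most $(k+1)^{-j}$. The candidate vertical line positions at level $j$ are the multiples of $(k+1)^{-j}$, and an open disk of diameter at most $(k+1)^{-j}$ intersects at most one such multiple (two only in the boundary case, which is excluded since the disk is open and strictly smaller than the spacing plus boundary). The lines actually used for a given $\alpha$ are the multiples congruent to $\alpha \bmod k$. So $D$ is hit by a vertical line for at most one value of $\alpha\in\{0,\dots,k-1\}$. The analogous statement holds for horizontal lines and $\beta$.

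\textbf{Step 2 (averaging).} Combining the two, $\sum_{\alpha}\val(\text{disks hit vertically})\le \val(T^*)$ and similarly for horizontal. Rather than averaging jointly, choose $\alpha$ first so that the vertically hit disks weigh at most $\val(T^*)/k=\epsilon\val(T^*)$; this leaves a set $T_1$ of value at least $(1-\epsilon)\val(T^*)$. Then choose $\beta$ by the same argument applied to $T_1$, losing at most $\epsilon\val(T_1)$. The remaining set $T_2\subseteq T^*$ avoids all gridding lines, is feasible, and satisfies $\val(T_2)\ge(1-\epsilon)^2\val(T^*)$. Therefore $\val(T^*_{\alpha,\beta})\ge \val(T_2)$, as required.

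\textbf{Main obstacle.} The delicate part is Step 1: checking that, with the paper's level boundaries (diameter in $((k+1)^{-j-1},(k+1)^{-j}]$) and line spacing, each disk meets a line for only one residue class. The key point is that the disk is open, so a disk of diameter exactly $(k+1)^{-j}$ touching two consecutive multiples does not intersect either boundary line. Once this is verified, the rest is a direct averaging argument, with feasibility under deletion being immediate.
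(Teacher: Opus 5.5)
Your proof is correct and follows the same route as the paper, which simply defers to the shifting argument of Lemma~2.1 in \cite{EJS-SICOMP05}. Each open level-$j$ disk meets at most one level-$j$ vertical line and at most one horizontal line, so choosing $\alpha$ and then $\beta$ by averaging loses at most a $1/k=\epsilon$ fraction each time, and you correctly add the check this problem needs: feasibility (no selected center covered, at most $Q$ disks) survives deletion. Your Step~1 relies, without saying so, on the convention from \cite{EJS-SICOMP05} that a level-$j$ disk is discarded only when it meets an active line of its own level $j$. This is worth stating explicitly, because under the literal reading ``discard every disk meeting any gridding line'' a diameter-$1$ disk meets a finer-level line for every $(\alpha,\beta)$ whenever finer levels exist, and the lemma would fail.
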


We omit the proof of \Cref{lma:gridding}, as it is identical to the proof of Lemma 2.1 in \cite{EJS-SICOMP05}. With \Cref{lma:gridding}, we can enumerate all pairs $(\alpha, \beta)$, and solve the gridded instance. This is efficient since there are 
only $O(k^2) = O(\epsilon^{-2})$ pairs of parameters $(\alpha, \beta)$. Furthermore, after fixing $(\alpha, \beta)$, for a grid formed by the gridding, we call it a $j$-grid if it is constructed in level $j$. Then, we have the following:

\begin{Lemma}
    \label{lma:grid-structure}
    For any $j \in \{0, 1, \cdots, l - 1\}$, every $(j+1)$-grid is fully contained in a $j$-grid. Furthermore, every $j$-grid is the union of $(k+1)^2$ number of $(j+1)$-grids.
\end{Lemma}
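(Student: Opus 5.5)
The plan is to work one coordinate at a time. Each grid is a product of a horizontal interval and a vertical interval, and the vertical lines (parameter $\alpha$) and horizontal lines (parameter $\beta$) are treated in exactly the same way. So it suffices to prove the one-dimensional statement for the $x$-coordinate with parameter $\alpha$. The result for cells then follows by taking products of intervals.

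First, I will show that every level-$j$ line is also a level-$(j+1)$ line. A level-$j$ vertical line sits at $x = q\,(k+1)^{-j}$ for some integer $q$ with $q \equiv \alpha \pmod k$. Rewriting,
\[
x ~=~ q(k+1)\cdot (k+1)^{-(j+1)}.
\]
So $x$ is a multiple of $(k+1)^{-(j+1)}$ with multiplier $p = q(k+1)$. Since $k+1 \equiv 1 \pmod k$, we get $p \equiv q \equiv \alpha \pmod k$. Hence $x$ is exactly one of the level-$(j+1)$ lines for the same $\alpha$. The same argument with $\beta$ handles the horizontal lines.

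Next, I will make the interval structure explicit. Consecutive level-$j$ lines have multipliers differing by $k$, so their spacing is $k(k+1)^{-j}$. A level-$j$ interval between consecutive lines therefore has the form $\bigl[q(k+1)^{-j},\,(q+k)(k+1)^{-j}\bigr]$. In units of $(k+1)^{-(j+1)}$ its endpoints are $q(k+1)$ and $q(k+1)+k(k+1)$.

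The level-$(j+1)$ lines are exactly the multipliers congruent to $\alpha$ modulo $k$. By the previous step both endpoints are such multipliers, and these lines occur with step $k$. So the lines lying in this closed interval are at
\[
q(k+1) + tk \qquad \text{for } t = 0,1,\dots,k+1.
\]
They cut the level-$j$ interval into exactly $k+1$ level-$(j+1)$ intervals. Since no level-$(j+1)$ line is missed and none crosses outside, each level-$(j+1)$ interval lies inside a unique level-$j$ interval.

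Finally, I will take products. Every $(j+1)$-grid is the product of a $(j+1)$-interval in $x$ and one in $y$, and each factor is contained in a $j$-interval. Hence the $(j+1)$-grid lies in the product of those $j$-intervals, which is a $j$-grid. Conversely, each $j$-grid is split into $(k+1)\times(k+1) = (k+1)^2$ many $(j+1)$-grids.

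The only real point, and the step I expect to be the main obstacle, is the congruence $k+1 \equiv 1 \pmod k$. It is what keeps the residue class $\alpha$ consistent across levels, so that coarser lines reappear among finer ones. Without it the grids would not nest. Everything else is bookkeeping of interval endpoints.
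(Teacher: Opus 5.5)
Your proof is correct and is essentially the argument the paper defers to (Lemma 2.2 and Corollary 2.3 of \cite{EJS-SICOMP05}). There, as in your proposal, the congruence $k+1 \equiv 1 \pmod k$ shows that every level-$j$ line is also a level-$(j+1)$ line, and the nesting of grids and the count of $(k+1)^2$ subgrids then follow coordinate by coordinate.
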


We omit the proof of \Cref{lma:grid-structure}, as it is identical to the proofs of Lemma 2.2 and Corollary 2.3 in \cite{EJS-SICOMP05}. With \Cref{lma:grid-structure}, we may build a tree structure for all grids, such that each $j$-grid contains $(k+1)^2$ children, while each child is a $(j+1)$-grids. We also construct a root node for this tree structure, which includes all $0$-grids. This root node refers to the boundary of the whole instance. For simplicity, we assume the root node is at level $-1$, and therefore the root is a virtual $-1$-grid, such that it has at most $N$ children.

\paragraph{The dynamic programming algorithm.} Now, we present our dynamic programming algorithm on the tree structure. The dynamic programming table contains the following dimensions:

\begin{itemize}
    \item $j \in \{-1, 0, 1, \cdots, l\}$: the level we are in.
    \item A $j$-grid $g$ that contains at least one disk.
    \item A subset $S$ of disks with level \textbf{lower} than $j$, such that each disk in subset $S$ should intersect grid $g$. Furthermore, subset $S$ should be feasible, such that no center of disk in subset $S$ is covered by another disk in $S$.
    \item $q \in \{0, 1, \cdots, Q\}$: the number of disks with level \textbf{at least} $j$ that we want to select.
\end{itemize}

When $(j, g, S, q)$ is determined, the dynamic programming table $dp(j, g, S, q)$ records the maximum weight (and the corresponding selection) of the $q$ disks with level at least $j$, such that the union of $S$ and $q$ selected disks is feasible. Our final output is exactly the maximum weight of $(j = -1, g = \text{virtual root note}, S = \varnothing, q \in \{0, 1, \cdots, Q\})$.

We run the dynamic programming algorithm in order of non-decreasing level $j$. When reaching $(j, g, S, q)$, the dynamic programming algorithm is as follows:

\begin{itemize}
    \item Enumerate subset $S'$ that contains level-$j$ disks in $g$, such that $S \cup S'$ is still feasible.
    \item Use a dynamic-programming subroutine to determine the optimal selection of the remaining $q - |S| - |S'|$ disks. To be specific, let $C_g$ be the set of grid $g$'s children grid.  The first dimension of the dynamic-programming table of the subroutine is the identity of the current child $g' \in C_g$, and the second dimension is an integer between $0$ and $q - |S| - |S'|$, which indicates the number of disks we have selected till the children we are currently enumerating. With this subroutine, we may update
    \begin{align*}
        dp(j, g, S, q) = \max_{S': S \cup S' \text{ is feasible}} ~\max \left\{\sum_{g' \in C_g}dp\left(j+1, g', (S \cup S') \cap \{D_i: D_i \cap g' \neq \varnothing\}, q_{g'}\right)  \right\},
    \end{align*}
    where the values $\{q_{g'}\}$ should satisfy
    \[
     \sum_{g' \in C_g} q_{g'} = q - |S| - |S'|.
    \]
\end{itemize}

\paragraph{Feasibility and optimality of the algorithm.} To verify the feasibility of the dynamic programming algorithm, it only needs to be noted that whether the selection of disks with level at least $j$ contained in a $j$-grid $g$ would effect the feasibility of the selection only depends on those selected disks with level lower than $j$, such that the center of the selected disk is in $g$ (and therefore the selected disk intersects $g$). This is because for a selected disk with center outside of $g$, no disk in $g$ can possibly cover its center. On the other hand, to feasibly select disks in $g$ with level at least $j$, it's sufficient when only the selected disks that intersect grid $g$ are revealed. Then, the dynamic programming provides sufficient information, and therefore the algorithm provides a feasible solution.

Furthermore, since $dp(j, g, S, q)$ records the maximum weight of the $q$ disks with level at least $j$ such that the union of $S$ and $q$ selected disks is feasible, the algorithm guarantees that the optimal solution $T^*_{\alpha, \beta}$ can be found by the algorithm. Therefore, the dynamic programming algorithm is also optimal.

\paragraph{Running time analysis.} Finally, we analyze the running time of the dynamic programming algorithm.

We first bound the size of the dynamic programming table. Dimension $j$ is at most $O(l) = O(\log (d_{max}/d_{min})$. Since each grid $g$ must contain at least one disk, there are at most $N$ grids to be enumerated. The dimension $q$ is bounded by $Q$, which is at most $N$. It only remains to bound the number of possible subsets $S$. We provide the following claim:

\begin{Claim}
\label{clm:intersect-bounded}
    Let $S$ be a subset of disks with level lower than $j$, such that $S$ is feasible, and each disk in $S$ intersects a $j$-grid $g$. Then, there must be $|S| \leq O(k^2)$.
\end{Claim}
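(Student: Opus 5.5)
The plan is a packing argument. I would combine a bounded-ply property that comes from feasibility with a lower bound on the radii that comes from the level structure. Throughout, write $\rho := (k+1)^{-j}/2$. Every disk of level lower than $j$ has diameter greater than $(k+1)^{-j}$, so every disk in $S$ has radius $r > \rho$. By construction, a $j$-grid $g$ is a square of side $k(k+1)^{-j} = 2k\rho$.

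\textbf{Step 1 (feasibility gives bounded ply).} Take two disks $D_a=(c_a,r_a,w_a)$ and $D_b=(c_b,r_b,w_b)$ in $S$. Feasibility says $c_b \notin D_a$ and $c_a \notin D_b$, so $d(c_a,c_b) \ge \max(r_a,r_b)$. I would show that any point $x$ of the plane lies in at most $6$ disks of $S$ (in fact at most $5$). Suppose $x \in D_a \cap D_b$ with $r_a \ge r_b$. Then $|xc_a| < r_a$ and $|xc_b| < r_b \le r_a$. If the angle $\angle c_a x c_b$ were at most $60^\circ$, the law of cosines would give $|c_ac_b| \le \max(|xc_a|,|xc_b|) < r_a$, which contradicts feasibility. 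So the directions from $x$ to the centers of the disks containing $x$ are pairwise more than $60^\circ$ apart, and there are at most $6$ such directions. (If $x$ equals a center, $x$ cannot lie in any other disk of $S$, so that case is fine.)

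\textbf{Step 2 (each disk swallows a small ball near $g$).} Fix $D=(c,r,w)\in S$ and pick a point $p \in D \cap g$, so $|pc|<r$. If $|pc| \ge \rho/2$, let $q$ be the point on segment $pc$ at distance $\rho/2$ from $p$. Then $|cq| + \rho/2 = |cp| < r$, so $B(q,\rho/2) \subseteq D$. If $|pc| < \rho/2$, let $q=c$; then $B(q,\rho/2) \subseteq D$ because $r>\rho$. In both cases $q$ lies within distance $\rho/2$ of $g$, and $D$ contains the ball of radius $\rho/2$ around $q$.

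\textbf{Step 3 (counting).} Put an axis-parallel lattice $\Lambda$ of spacing $\rho/2$ on the plane. Every closed ball of radius $\rho/2$ contains a point of $\Lambda$, since the covering radius of the lattice is $\rho/(2\sqrt2) < \rho/2$. By Step 2, each disk of $S$ contains a lattice point lying inside $g$ enlarged by $\rho$ on every side. That enlarged square has side $(2k+2)\rho$, so it contains at most $\bigl((2k+2)\rho/(\rho/2)+1\bigr)^2 = (4k+5)^2$ lattice points. Charge each disk to one lattice point it contains. By Step 1, each lattice point is charged at most $6$ times. Hence $|S| \le 6(4k+5)^2 = O(k^2)$.

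I expect the main obstacle to be the fact that disks of much lower level are huge. Their centers can be far from $g$, and many of them may meet $g$ near the same spot, so a naive center-packing argument inside $g$ fails. Steps 1 and 2 handle this by replacing center packing with ply: the feasibility condition, which is weaker than disjointness, still forces an angular separation at every point. The only delicate computations are the law-of-cosines inequality in Step 1 and the choice of $q$ in Step 2.
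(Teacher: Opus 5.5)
Your proof is correct, and it takes a genuinely different route from the paper.

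The paper splits $S$ at diameter $12k(k+1)^{-j}$. The small disks lie in a square of side $25k(k+1)^{-j}$ around $g$. Feasibility makes their half-radius copies pairwise disjoint, so an area count bounds them by $O(k^2)$. The large disks are handled by a law-of-cosines estimate: any two of their centers are separated by an angle of more than $\pi/10$ as seen from the center $c_g$ of $g$, so there are at most $20$ of them.

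You replace this size split with one uniform mechanism. You use angular separation at \emph{every} point, which gives ply at most $5$. You find a ball of radius $\rho/2$ inside each disk within distance $\rho/2$ of $g$. Then you count lattice points.

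The paper's split buys the extra information that only $O(1)$ disks much larger than $g$ can meet it. Yours buys robustness, because you never need to know where the centers lie. This matters. The paper's large-disk step asserts $|d(c_i,c_g)-r_i|\le k(k+1)^{-j}$ merely because $D_i$ meets $g$, but only the upper bound follows from that. A huge disk with its center in or near $g$ violates the lower bound, and the final estimate $\cos\theta\le 0.9$ relies on it through $b'\ge b-1$ and $a'\ge b-1$.

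Your Step 1 is exactly the missing ingredient. At most $5$ disks of $S$ contain $c_g$, and every other disk has $d(c_i,c_g)\ge r_i$. With that, the paper's angular computation goes through.
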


\begin{proof}
    Recall that the side length of $j$-grid $g$ is $k \cdot (k+1)^{-j}$. We prove \Cref{clm:intersect-bounded} by considering disks in $S$ with diameter at most $12k \cdot (k+1)^{-j}$ and at least $12k \cdot (k+1)^{-j}$ separately, i.e., let
    \[
    S_1 = \{D_i \in S: 2\cdot r_i \leq 12k \cdot (k+1)^{-j}\} \quad \text{and} \quad S_2 = \{D_i \in S: 2\cdot r_i > 12k \cdot (k+1)^{-j}\},
    \]
    we bound $|S_1|$ and $|S_2|$ separately.
    
    \paragraph{Number of disks with small diameter.} We first consider the size of $|S_1|$. Consider a grid $\hat g$ (which is not a grid given by the gridding process) with side length $(k + 2\cdot 12k) \cdot (k+1)^{-j} = 25k \cdot (k+1)^{-j}$, such that $g$ is at the center of $\hat g$. Then, for a disk $D \in S_1$ that intersects $g$, it must be contained in this expanded grid $\hat g$. 
    
    For disks $D_i, D_j \in S_1$, note that if we halve the radius of both disks, the halved disk must be disjoint. This is because if the halved $D_i$ and $D_j$ intersect, it implies $d(i, j) < \frac{1}{2} \cdot (r_i + r_j)$, and therefore the larger one of $r_i$ and $r_j$ must be at least $d(i, j)$, which is in contrast to the assumption that the center of both $D_i$ and $D_j$ can't be covered by any other disk.  Then, since the total area of halved disks in $S_1$ is upper-bounded by the size of expanded grid $\hat g$, we have
    \begin{align*}
        25^2 k^2 \cdot (k+1)^{-2j} ~\geq~ \sum_{i: D_i \in S_1} \pi \cdot \left(\frac{1}{2} r_i\right)^2 ~\geq~ |S_1| \cdot \frac{1}{16} \cdot (k+1)^{-2j},
    \end{align*}
    where the last inequality uses the fact that each disk in $S_1$ has level lower than $j$, and therefore the diameter is at most $(k+1)^{-2j}$. Rearranging the above inequality gives $|S_1| \leq 10000k^2/\pi$.

    \paragraph{Number of disks with large diameter.} Next, we consider the size of $|S_2|$. Let $D_i, D_j \in S_2$. For simplicity of the calculation, we define $a = \frac{r_i}{k \cdot (k+1)^{-j}}$ and $b = \frac{r_j}{k \cdot (k+1)^{-j}}$ and assume without loss of generality that $a \geq b$. Then, since $2r_i, 2r_j > 12k \cdot (k+1)^{-j}$, we have $a, b > 6$. Since the center of $D_j$ can't be covered by disk $D_i$, there must be $d(c_i, c_j) \geq a$.

    Let $c_g$ be the center of grid $g$, and let $a' = \frac{d(c_i, c_g)}{k \cdot (k+1)^{-j}}, b' = \frac{d(c_j, c_g)}{k \cdot (k+1)^{-j}}$. Since $D_i$ intersects $g$, there must be $a' \in [a - 1, a + 1]$: the difference between $r_i$ and $d(c_g, c_i)$ is at most half of the diameter of grid $g$, and we further relax the coefficient $1/\sqrt{2}$ to $1$. Similarly, we have $b' \in [b + 1, b - 1]$.

    Let $\theta$ be the angle between vector $c_j - c_g$ and $c_i - c_g$. By the law of cosines, we have
    \begin{align*}
        \cos \theta = \frac{a'^2 + b'^2 - d^2(c_i, c_j)}{2 a' \cdot b'} ~&\leq~ \frac{a'^2 + b'^2 - a^2}{2 a' \cdot b'} \\
        ~&\leq~ \frac{a'^2 + b'^2 - (a' - 1)^2}{2 a' \cdot b'} \\
        ~&\leq~ \frac{1}{b'} + \frac{b'}{2a'} \\
        ~&\leq~ \frac{1}{b - 1} + \frac{b + 1}{2(b - 1)} ~\leq~ 0.9,
    \end{align*}
    where the last line uses the fact that $b' \in [b - 1, b + 1]$, $a' \geq a - 1 \geq b - 1$, and $b > 6$.
    Therefore, we have $\theta > \arccos(0.9) > \frac{\pi}{10}$, i.e., for any two disks in $S_2$, the angle between their centers and $c_g$ is at least $\pi/10$, and there can be at most $2\pi / (\pi/10) \leq 20$ disks in $S_2$.

    Combining two cases together, we have $|S| \leq 10000k^2 + 20 = O(k^2)$.
\end{proof}

With \Cref{clm:intersect-bounded}, the number of feasible subsets $S$ is at most $N^{O(k^2)} = N^{O(1/\epsilon^2)}$. Therefore, the size of the dynamic programming table is at most $O(l \cdot N^2 \cdot N^{O(1/\epsilon^2)}) = O(\log(d_{max}/d_{min}) \cdot N^{O(1/\epsilon^2)})$.

Next, we bound the running time of computing one dynamic programming table entry. The algorithm needs two steps: the first step is to enumerate a feasible subset $S'$, and the next step is to update the value of the entry via a dynamic programming subroutine. Since the subroutine needs to enumerate over all children of $g$, the number of selected disks in all enumerated children, and the the number of selected disks in current children, the running time is at most $O(\max\{N, (k+1)^2\} \cdot N^2) \leq O(N^3 \cdot \epsilon^{-2|})$. Then, it remains to compute the number of feasible subsets $S'$. We provide the following claim:

\begin{Claim}
\label{clm:intersect-bounded-sprime}
    Given $j, g, S, q$, let $Ss$ be a subset of disks with level $j$, such that $S \cup S'$ is feasible, and each disk in $S'$ is inside $g$. Then, there must be $|S'| \leq O(k^2)$.
\end{Claim}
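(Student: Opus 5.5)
The plan is a packing argument, as in the bound on $|S_1|$ in the proof of \Cref{clm:intersect-bounded}. First I would show that, after shrinking, the disks of $S'$ become pairwise disjoint. Next I would show that they all lie in a region whose area is proportional to the area of $g$. Comparing areas then bounds $|S'|$ by the ratio of the grid area to the minimum area of a shrunken level-$j$ disk.

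\textbf{Step 1 (separation).} The set $S\cup S'$ is feasible, so $S'$ is feasible too: no center of a disk in $S'$ is covered by another disk of $S'$. Take two disks $D_a,D_b\in S'$ with $r_a\ge r_b$. The center $c_b$ is not covered by $D_a$, so $d(c_a,c_b)\ge r_a\ge \tfrac12(r_a+r_b)$. Hence the halved disks $(c_a,r_a/2)$ and $(c_b,r_b/2)$ are disjoint. This is the same computation used for $S_1$ in \Cref{clm:intersect-bounded}.

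\textbf{Step 2 (containment).} Every disk of $S'$ lies inside $g$, and $g$ is a square of side $k(k+1)^{-j}$. So all the halved disks lie in $g$, and their total area is at most $k^2(k+1)^{-2j}$.

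\textbf{Step 3 (area per disk).} Each disk of $S'$ has level $j$, so its diameter is at least $(k+1)^{-j-1}$. Its halved disk therefore has radius at least $\tfrac14(k+1)^{-j-1}$ and area at least $\tfrac{\pi}{16}(k+1)^{-2j-2}$. Dividing the area bound of Step 2 by this gives
\[
|S'| \le \tfrac{16}{\pi}\,k^2(k+1)^2 .
\]

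\textbf{Main obstacle.} Step 3 as written only gives $O(k^4)$, because within a level the diameters can differ by a factor of $k+1$. To reach the stated $O(k^2)$ I need a per-disk lower bound of order $(k+1)^{-2j}$, not $(k+1)^{-2j-2}$. That is, the diameters of level-$j$ disks must be within a constant factor of the top of the level, $(k+1)^{-j}$.

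I have not found a way to obtain this bound from the level definition as stated. A plain packing count is genuinely about $k^4$: about $k^4$ disks of the minimum level-$j$ diameter $(k+1)^{-j-1}$, placed on a lattice of spacing equal to that diameter, fit inside $g$. Each center is at distance one diameter from its neighbours, which exceeds the radius, so no center is covered. This configuration is feasible, so $O(k^2)$ cannot follow from Steps 1 to 3 alone.

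The fix I would try is to sub-partition each level into $O(\log k)$ sublevels in which the diameters vary by a factor of at most $2$. Within one sublevel, Step 3 gives $O(k^2)$ times the sublevel's squared scale factor, so the count for that sublevel is $O(k^2)$ only relative to its own scale. Counting over all $O(\log k)$ sublevels, the finest one dominates and can still hold about $k^4$ disks, as the lattice example shows. So this refinement does not by itself prove the claim as worded. It would have to be combined with a restructured DP in which the enumeration of $S'$ is done per sublevel, so that only $O(k^2)$ disks are enumerated at each step.

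I cannot close this gap with the current level definition. This plan therefore establishes only $|S'|\le O(k^2(k+1)^2)$. That bound still yields an $N^{O(\epsilon^{-4})}$ running time, which remains a PTAS. The stated $O(k^2)$ holds only under a finer level definition.
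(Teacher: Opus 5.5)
Your reasoning is correct. The obstacle you hit is an error in the paper, not a gap in your argument.

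\textbf{Your Steps 1--3 are the paper's proof.} The paper halves the radii, packs the halved disks into $g$, and divides by a per-disk area lower bound. It writes the chain as $k^2(k+1)^{-2j} \ge |S'|\cdot\frac{1}{16}\cdot\frac{1}{k^2}\cdot(k+1)^{-2j}$. The factor $\frac{1}{k^2}$ there stands in for exactly the $(k+1)^{-2}$ loss you identified, since level-$j$ diameters go down to $(k+1)^{-j-1}$. The paper then says that rearranging gives $|S'|\le 16k^2/\pi$. It does not: the displayed inequality rearranges to $|S'|\le 16k^4$, which matches your $\frac{16}{\pi}k^2(k+1)^2$.

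\textbf{Your lattice example shows the $O(k^2)$ statement is false.} One detail needs fixing. Level $j$ is the half-open range $\bigl((k+1)^{-j-1},(k+1)^{-j}\bigr]$, so the minimum diameter is not attained. Take diameter $(1+\eta)(k+1)^{-j-1}$ with tiny $\eta>0$, and lattice spacing equal to that diameter. The example then goes through as follows:
\begin{itemize}
\item Disks are open, so a center at distance one diameter (which exceeds the radius) from another disk's center is uncovered. The set is therefore feasible.
\item The $j$-grid has side $k(k+1)\cdot(k+1)^{-j-1}$, so about $(k(k+1)-1)^2=\Theta(k^4)$ such disks fit strictly inside it.
\item These disks survive the gridding step, and $S=\varnothing$ is a legal DP state.
\end{itemize}
The same configuration also arises from the Nearest-Neighbor reduction. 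Put slots on a square lattice and select each slot's disk $D_{j,j'}$ toward a lattice neighbor. So the correct statement is $|S'|=O(k^4)$, which is what you proved.

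\textbf{Consequence.} Enumerating $S'$ costs $N^{O(k^4)}$ per DP entry. \Cref{thm:ptas-wds} and \Cref{thm:ptas} therefore hold with exponent $O(\epsilon^{-4})$ in place of $O(\epsilon^{-2})$, which is still a PTAS. Your sublevel refinement is not needed for this. Because the $\Theta(k^4)$ count is tight, recovering an $\epsilon^{-2}$ exponent would require restructuring how the DP enumerates level-$j$ disks inside a grid, not a sharper packing bound.
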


\begin{proof}
    For disks $D_i, D_j \in S'$, note that if we halve the radius of both disks, the halved disks must be disjoint, as otherwise it implies either the center of $D_i$ is covered by $D_j$, or the center of $D_j$ is covered by $D_i$, which is in contrast to the assumption that $S \cup S'$ is feasible. Then, since the total area of halved disks in $S'$ is upper-bounded by the size of grid $g$, we have
    \[
    k^2 \cdot (k+1)^{-2j} ~\geq~ \sum_{i: D_i \in S'} \pi \cdot \left(\frac{1}{2} r_i \right)^2 ~\geq~ |S'| \cdot \frac{1}{16} \cdot \frac{1}{k^2} \cdot (k+1)^{-2j}.
    \]
    Rearranging the above inequality gives $|S'| \leq 16k^2/\pi$.
\end{proof}

\Cref{clm:intersect-bounded-sprime} implies that the number of feasible $S'$ is at most $N^{O(k^2)} = N^{O(1/\epsilon^2)}$. Therefore, each dynamic programming table entry can be computed in $N^3 \cdot \epsilon^{-2} \cdot N^{O(1/\epsilon^2)} = N^{O(1/\epsilon^2)}$ time.

Multiplying the size of the dynamic programming table and the time for computing one entry together, we have the total running time to be $O(\log (d_{max}/d_{min}) \cdot N^{O(1/\epsilon^2)})$, which finishes the proof of \Cref{thm:ptas-wds}.

\subsection{Combining Everything Together}

To end this section, we combine \Cref{lma:reduction-wds} and \Cref{thm:ptas-wds} together and prove \Cref{thm:ptas}.

\begin{proof}[Proof of \Cref{thm:ptas}]
    Given a Nearest-Neighbor instance that satisfies the assumptions in \Cref{thm:ptas}, we first apply \Cref{lma:reduction} to reduce the problem to a Weighted Disk Selection instance in $\poly(n, m)$ time. Furthermore, the Weighted Disk Selection instance should have $N = O(m^2)$ disks, and the ratio between the maximum diameter and the minimum diameter of the instance is at most $m/\epsilon$. Next, we run the algorithm in \Cref{thm:ptas-wds}, and get a feasible selection $T$. By \Cref{lma:gridding} and \Cref{thm:ptas-wds}, we have $\val(T) \geq (1 - \epsilon)^2 \cdot \val(T^*)$.  Finally, we apply \Cref{lma:reduction} to convert the disk selection back to a feasible matching $M$ of the Nearest-Neighbor instance in $\poly(n, m)$ time. Therefore, by \Cref{lma:reduction-wds} and \Cref{thm:ptas-wds}, the total running time is 
    \[
    O\big(\poly(n, m) + \log(\frac{d_{max}}{d_{min}}) \cdot N^{O(\epsilon^{-2})}\big) ~\leq~ O\big(\poly(n) \cdot\log(\frac{m}{\epsilon}) \cdot (m^2)^{O(\epsilon^{-2})}\big) ~\leq~O\big(\poly(n, \epsilon^{-1}) \cdot m^{O(\epsilon^{-2})} \big).
    \]

    It remains to verify the value of $\sw(M)$. We have
    \[
    \sw(M) ~\geq~ \val(T) ~\geq~ (1 - \epsilon)^2 \cdot \val(T^*) ~\geq~ (1 - \epsilon)^3 \cdot \sw(M^*),
    \]
    where the first and the third inequality follows \Cref{lma:reduction-wds}, and the second inequality follows \Cref{thm:ptas-wds} and \Cref{lma:gridding}. Rescaling the value of $\epsilon$ gives a $(1 +\epsilon)$-approximation algorithm that satisfies \Cref{thm:ptas}.
\end{proof}
\section{Hardness of Product-Distance Model}
\label{appx:product_distance_model}

In this section, we prove the hardness result for the Product-Distance model. To be specific, we prove the following:

\begin{Theorem}
\label{thm:pd-hard}
The \textsc{Max-Independent-Set} (MIS) problem with $m$ vertices can be reduced to the  Product-Distance model in $O(\poly(m))$ time. Therefore, for every $\epsilon > 0$, no polynomial time algorithm can be approximate the Product-Distance model within a factor of $m^{1 - \epsilon}$ unless $P = NP$.
\end{Theorem}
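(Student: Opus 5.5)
Given a graph $G=(V,E)$ with $|V|=m$, we build a Product-Distance instance with one slot per vertex. The metric is
\[
d(u,v) ~=~ \begin{cases} \eta, & (u,v)\in E,\\ 1, & (u,v)\notin E,\ u\neq v,\end{cases}
\]
with $\eta>0$ a tiny parameter, for instance $\eta = m^{-2}$ or smaller. It is a valid metric of diameter $1$ provided $\eta \ge 1/2$ fails only the triangle inequality. That inequality can fail: $d(u,w)=1 > 2\eta = d(u,v)+d(v,w)$ when $u\sim v\sim w$ and $u\not\sim w$. So the pure two-valued choice is not a metric, and this is the main obstacle.

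My first fix is to use distances in $\{1/2, 1\}$, which are always metric since $1 \le 1/2+1/2$. Set $d(u,v)=1/2$ on edges and $1$ on non-edges. Then $\delta_{\mul}(j,S) = 2^{-\deg_S(j)}$, where $\deg_S(j)$ is the number of neighbors of $j$ inside $S$. A constant factor per edge is too weak, since a dense $S$ could still collect value. To boost it, take $n=m$ advertisers with $v_{i,j}=1$ for all $i,j$ (or a single value per slot). Then $\sw$ equals $\sum_{j\in S} 2^{-\deg_S(j)}$.

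The two directions of the reduction go as follows:
\begin{itemize}
\item Any independent set $I$ gives welfare $|I|$.
\item Conversely, from any $S$ we extract an independent set of size at least $\sum_{j\in S}2^{-\deg_S(j)}$. This follows from a Caro–Wei-style bound: $\alpha(G[S]) \ge \sum_{j\in S} 1/(\deg_S(j)+1) \ge \sum_{j\in S} 2^{-\deg_S(j)}$.
\end{itemize}
The extraction is made constructive by the greedy min-degree algorithm, which achieves the Caro–Wei bound deterministically in polynomial time.

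Hence $\mathrm{OPT}_{\mathrm{PD}} = \alpha(G)$ exactly. The lower bound comes from independent sets. The upper bound holds because every $S$ has value at most $\alpha(G[S]) \le \alpha(G)$.

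Moreover, an $\rho$-approximate allocation for the Product-Distance instance yields, via greedy, an independent set of size at least $\alpha(G)/\rho$. So the reduction is approximation-preserving with ratio $1$ and has the same parameter $m$ (number of slots equals number of vertices). The reduction runs in $\poly(m)$ time.

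To conclude, an $m^{1-\epsilon}$ approximation for the Product-Distance model would give an $|V|^{1-\epsilon}$ approximation for MIS, contradicting the hardness of \cite{Zuckerman-ToC07} unless $P=NP$.

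The step I expect to need the most care is the choice of metric. The choice must keep the triangle inequality while making each conflicting pair cost enough. With factor $1/2$ per edge, the Caro–Wei comparison $2^{-d}\le 1/(d+1)$ is exactly what makes the bound tight, so I would verify that inequality and the greedy extraction carefully.
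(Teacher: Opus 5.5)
Your proposal is correct, and it builds exactly the paper's instance: one slot per vertex, distance $1/2$ on edges and $1$ on non-edges, and $n=m$ advertisers with $v_{i,j}=1$. This gives $\sw(M)=\sum_{j\in S(M)}2^{-\deg_{S(M)}(j)}$.

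The only real difference is the lemma that turns a slot set $S$ into an independent set of size at least $\sum_{j\in S}2^{-\deg_S(j)}$. The paper names this objective MSED and uses local pruning: while $U$ contains an edge $(i,j)$ with $\deg_U(i)\ge\deg_U(j)$, delete $i$. The loss $2^{-\deg_U(i)}$ is covered by $j$'s gain of $2^{-\deg_U(j)}$, so $\rho$ never decreases, and the final independent set $\widetilde U\subseteq U$ has $|\widetilde U|\ge\rho(U)$.

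You instead use Caro--Wei via greedy min-degree, together with $2^{d}\ge d+1$. Both routes are elementary and give the same bound. The paper's pruning is self-contained and tailored to the exponential weights. Yours goes through the stronger intermediate quantity $\sum_{j\in S}1/(\deg_S(j)+1)$ and states explicitly that the optimum equals $\alpha(G)$, which the paper only uses implicitly. Since you rely on the greedy guarantee, include its short proof. Take the potential $\sum_u 1/(\deg(u)+1)$ over surviving vertices, with current degrees. Picking a minimum-degree vertex $v$ and deleting its closed neighborhood removes $\deg(v)+1$ terms, each at most $1/(\deg(v)+1)$. The surviving terms can only grow, so each step costs at most $1$ of potential.

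Before writing it up, remove the opening paragraph about $\eta$. Its first sentence is garbled, and the tiny-$\eta$ metric is never used. Also remove the remark that a constant factor per edge is ``too weak'' and must be ``boosted'' by the valuations. The unit valuations boost nothing, and your own Caro--Wei argument shows the factor $1/2$ already suffices. In fact $1/2$ is the smallest edge distance that keeps a two-valued metric valid, and it is exactly small enough for $2^{-d}\le 1/(d+1)$ at $d=1$.
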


Instead of providing a direct reduction from the MIS problem to the Product-Distance model, we introduce a new Maximum Sum of Exponential Degree (MSED) problem on an unweighted graph, which serves as a bridge between the MIS problem and the Product-Distance model. The problem is defined as follows:

\begin{Definition}[Maximum Sum of Exponential Degree Problem]
\label{def:msed}
Given an unweighted and undirected graph $G = (V, E)$, the Maximum Sum of Exponential Degree (MSED) problem aims at finding 
\[
\rho^*(G) ~:=~ \max_{S \subseteq V} \rho(U), \quad \text{such that} \quad \rho(U) ~:=~ \sum_{i \in U} 0.5^{\deg_U(i)}
\]
is maximized, where $\deg_U(i)$ represents the number of vertices $j \in S$ such that edge $(i, j) \in E$.
\end{Definition}

To prove \Cref{thm:pd-hard}, we first reduce the MIS problem to the MSED problem, and then reduce the MSED problem to the Product-Distance model. Our first reduction is as follows:

\begin{Theorem}
\label{thm:mis-to-msed}
The \textsc{Max-Independent-Set} (MIS) problem on graph $G = (V, E)$ can be reduced to the  Maximum Sum of Exponential Degree (MSED) problem in $O(\poly(|V|, |E|)$ time. Therefore, for every $\epsilon > 0$, no polynomial time algorithm can  approximate the MSED problem within a factor of $|V|^{1 - \epsilon}$ unless $P = NP$.
\end{Theorem}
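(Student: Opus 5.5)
Given a MIS instance $G=(V,E)$ with $|V|=m$, I would build a graph $G'$ whose MSED value is within a constant factor of a fixed multiple of the independence number $\alpha(G)$, and from which a good MSED solution can be turned back into an independent set. The obvious first attempt, $G'=G$, already gives a weak version. Any independent set $I$ has $\rho(I)=|I|$, so $\rho^*(G)\ge\alpha(G)$. In the other direction, for any $U$ we can greedily extract an independent set: take the vertices of $U$ in increasing order of $\deg_U$, add one, and delete its neighbours. A Caro–Wei style argument then gives an independent set of size at least $\sum_{i\in U} 1/(\deg_U(i)+1)\ge\rho(U)$, since $0.5^{d}\le 1/(d+1)$ for all integers $d\ge 0$. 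So $\alpha(G)\le\rho^*(G)\le$ the size of the set we recover. This already shows $\rho^*(G)=\alpha(G)$ exactly, and the conversion is constructive.

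Concretely, the plan is as follows.
\begin{enumerate}
\item Set $G'=G$.
\item For the upper bound, show that for every $U\subseteq V$ there is an independent set $I\subseteq U$ with $|I|\ge\rho(U)$, computable in polynomial time. To get this cleanly, I would use the randomized Caro–Wei argument. Pick a uniformly random order on $U$ and keep each vertex that precedes all of its $U$-neighbours. The kept set is independent, and its expected size is $\sum_{i\in U}1/(\deg_U(i)+1)\ge\sum_{i\in U}0.5^{\deg_U(i)}$. To make it deterministic, either derandomize by conditional expectations, or use the greedy minimum-degree rule, which is known to achieve the Caro–Wei bound.
\item For the lower bound, take a maximum independent set $I$; then $\rho(I)=|I|=\alpha(G)$.
\end{enumerate}
Together these give $\rho^*(G)=\alpha(G)$.

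The hardness transfer is then immediate. Suppose an algorithm $A$ returns $U$ with $\rho(U)\ge\rho^*/|V|^{1-\epsilon}$. The conversion in step 2 yields an independent set of size at least $\alpha(G)/|V|^{1-\epsilon}$. This contradicts the hardness of approximating MIS within $|V|^{1-\epsilon}$ unless $P=NP$ (Zuckerman). The reduction preserves $|V|$ exactly, so the exponent carries over directly.

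The main point needing care is the inequality $0.5^d\le 1/(d+1)$ for all integers $d\ge 0$, equivalently $2^d\ge d+1$. This holds by induction, with equality at $d\in\{0,1\}$. The other point is making the extraction deterministic. If the Caro–Wei derandomization feels too heavy, a simpler fallback suffices: put each $i\in U$ in $I$ iff it has no $U$-neighbour earlier in the minimum-degree-first greedy order. I would verify through the standard potential argument that this loses nothing, since removing a vertex $v$ of minimum degree $d$ together with its neighbours decreases $\sum_{i\in U}1/(\deg_U(i)+1)$ by at most $1$.
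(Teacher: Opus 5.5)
Your proof is correct, but it takes a different route from the paper's. Both arguments use the identity reduction $G'=G$ and the lower bound $\rho(I^*)=|I^*|$. Where they differ is in how a set $U$ is turned back into an independent set of size at least $\rho(U)$.

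\textbf{The paper's route.} It uses a local-improvement argument. While $U$ contains an edge $(i,j)$ with $\deg_U(i)\ge\deg_U(j)$, delete $i$. This loses $0.5^{\deg_U(i)}$, but the term of $j$ doubles, gaining $0.5^{\deg_U(j)-1}-0.5^{\deg_U(j)}=0.5^{\deg_U(j)}\ge 0.5^{\deg_U(i)}$. No other term decreases. So $\rho$ never drops, and the process ends at an independent $\widetilde U\subseteq U$ with $|\widetilde U|=\rho(\widetilde U)\ge\rho(U)$.

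\textbf{Your route.} You apply the Caro--Wei bound to $G[U]$ and then use the pointwise inequality $0.5^{d}\le 1/(d+1)$.

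\textbf{What each buys.} The paper's argument is self-contained and exploits the halving structure of the objective directly. Yours is modular: it imports a classical lemma and then works verbatim for any per-vertex score dominated by $1/(\deg_U(i)+1)$. For example, with the score $1/(\deg_U(i)+1)$ itself, the paper's deletion step can decrease the objective (take $\deg_U(i)=\deg_U(j)=2$ with the other neighbour of $i$ of huge degree). Both routes give $\rho^*(G)=\alpha(G)$ and keep $|V|$ unchanged, so the $|V|^{1-\epsilon}$ hardness transfers identically.

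\textbf{A wording slip in your fallback.} You describe the deterministic fallback as ``put $i$ in $I$ iff it has no $U$-neighbour earlier in the order.'' Read with a fixed order, that is not the greedy algorithm, and it can fail badly. On a cycle $C_n$ with vertices ordered around the cycle (all degrees tie), only the first vertex is kept, while $\rho(V)=n/4$. The rule should be ``no earlier \emph{selected} neighbour.'' The static ascending-degree greedy also meets the Caro--Wei bound: charge each deleted vertex to the selected vertex that deleted it; all such vertices come later in the order and so have degree at least that of the selected vertex. Alternatively, use the dynamic minimum-degree rule, which is what your potential argument actually analyzes, and for which that argument is correct.
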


\begin{proof}
    Given an unweighted and undirected graph $G = (V, E)$ for the MIS problem, let $I^* \subseteq V$ be the maximum independent set of $G$.  Consider to solve the MSED problem on the same graph $G$. We first note that
    \[
    \rho(I^*) = \sum_{i \in I^*} 0.5^{\deg_{I^*}(i)} ~=~ |I^*|.
    \]
    Therefore, there must be $\rho^*(G) \geq \rho(I^*) = |I^*|$, i.e., for the same graph, the objective of the MSED problem is at least the objective of the MIS problem.

    We prove via contradiction to show that the MSED problem does not admit a good approximation algorithm. Suppose there exists an algorithm $\alg$ that achieves a $|V|^{1 - \epsilon}$-approximation for the MSED problem in $O(\poly(|V|, |E|)$ time for some $\epsilon > 0$. Given graph $G = (V, E)$, let $S \subseteq V$ be the output of $\alg$, i.e., we have
    \begin{align}
        \rho(U) = \sum_{i \in S} 0.5^{\deg_{S}(i)} ~\geq~ \frac{1}{|V|^{1 - \epsilon}} \cdot \rho^*(G) ~\geq~ \frac{1}{|V|^{1 - \epsilon}} \cdot |I^*| \label{eq:hardness-msed-to-mis}
    \end{align}
    Next, we refine the solution $U$ via the following process:
    \begin{itemize}
        \item Find $(i, j) \in E$ such that $i, j \in U$. Terminate the algorithm if such edge $(i, j)$ does not exist.
        \item Assume without loss of generality that $\deg_U(i) \geq \deg_U(j)$, update $U \gets S \setminus \{i\}$.
        \item Continue the process by going back to Step 1.
    \end{itemize}
    Let $\widetilde U \subseteq S$ be the subset we get after the above refinement process. A crucial observation is that the objective $\rho(\widetilde U)$ is at least the value of $\rho(U)$. This inequality holds because whenever a vertex $i$ is removed from $U$, the objective $\rho(U)$ is non-decreasing: After removing $i$, we lose $0.5^{\deg_U(i)}$; however, the value of $\deg_U(j)$ drops by one, so we gain 
    \[
    0.5^{\deg_U(j) - 1} - 0.5^{\deg_U(j)} ~=~ 0.5^{\deg_U(j)} ~\geq~ 0.5^{\deg_U(i)},
    \]
    where the last inequality follows the assumption that $\deg_U(i) \geq \deg_U(j)$. Furthermore, for other vertices $i' \in U$, removing $i$ from $U$ could only decrease the value of $\deg_U(i')$, and therefore the objective $\rho(U)$ is non-decreasing.

    Note that the final subset $\widetilde U$ after the refinement process is an independent set, as otherwise the refinement process should continue. Therefore, we have 
    \[
    \rho(\widetilde U)  = \sum_{i \in \widetilde U} 0.5^0 = |\widetilde U|.
    \]
    By combining with the inequality that $\rho(\widetilde U) \geq \rho(U)$ and \eqref{eq:hardness-msed-to-mis}, we have
    \[
    |\widetilde U| ~\geq~ \frac{1}{|V|^{1 - \epsilon}} \cdot |I^*|,
    \]
    i.e., algorithm $\alg$ together with the $\poly(|V|, |E|)$ time refinement process gives a polynomial time algorithm that $|V|^{1 - \epsilon}$-approximates the MIS problem, which is in contrast to the standard hardness of the MIS problem (e.g., see \cite{Zuckerman-ToC07}). Therefore, no polynomial time algorithm can approximate the MSED problem within a factor of $|V|^{1 - \epsilon}$.
\end{proof}

Our second reduction is as follows:

\begin{Theorem}
\label{thm:msed-to-pd}
The Maximum Sum of Exponential Degree (MSED) problem on graph $G = (V, E)$ with $|V| = m$ can be reduced to the  allocation problem of the Product-Distance model with $m$ slots in $O(\poly(m))$ time. Therefore, for every $\epsilon > 0$, no polynomial time algorithm can  approximate the allocation problem of the Product-Distance model within a factor of $m^{1 - \epsilon}$ unless $P = NP$.
\end{Theorem}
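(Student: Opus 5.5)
The reduction embeds $G$ directly into a metric on the slots. Take one slot per vertex, so there are $m = |V|$ slots, and set
\[
d(i,j) = \begin{cases} 0.5, & (i,j) \in E,\\ 1, & (i,j) \notin E,\ i \neq j,\\ 0, & i = j. \end{cases}
\]
This is a metric: every nonzero distance lies in $\{0.5, 1\}$, and $0.5 + 0.5 \geq 1$, so the triangle inequality always holds. The diameter is at most $1$.

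Take $n = m$ advertisers with $v_{i,j} = 1$ for all $i, j$. Construction is clearly $\poly(m)$ time.

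\emph{Key identity.} For any matching $M$ with occupied set $U = S(M)$,
\[
\delta_{\mul}(j, U) = \prod_{j' \in U \setminus \{j\}} d(j,j') = 0.5^{\deg_U(j)},
\]
since each non-neighbor contributes a factor $1$ and each neighbor a factor $0.5$. Hence
\[
\sw(M) = \sum_{j \in U} 0.5^{\deg_U(j)} = \rho(U).
\]
The singleton case $\delta_{\mul} = 1$ agrees with $\deg_U(j) = 0$, and the empty set gives value $0$ on both sides.

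Conversely, every $U \subseteq V$ is realized by a matching: since $n = m$ and all values are equal, assign distinct advertisers to the slots of $U$. So the optimal welfare equals $\rho^*(G)$.

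A solution of the allocation instance maps back to the set $S(M)$ with the same objective value. Therefore an $m^{1-\epsilon}$-approximation for the Product-Distance allocation problem yields a $|V|^{1-\epsilon}$-approximation for MSED. By \Cref{thm:mis-to-msed}, that would in turn give a $|V|^{1-\epsilon}$-approximation for MIS, which is impossible unless $P = NP$.

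The only delicate step is making sure the distance function is a genuine metric with diameter $\le 1$. The $\{0.5, 1\}$ choice handles this, and the fact that $0.5$ matches the base in the definition of $\rho$ is exactly what makes the welfare identity exact, so the argument needs no further estimates.
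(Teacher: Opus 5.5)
Your proposal is correct and follows essentially the same reduction as the paper. Both use one slot per vertex, unit valuations with $n=m$ advertisers, and distances $0.5$ on edges and $1$ otherwise. Both then rely on the identity $\sw(M)=\rho(S(M))$, realize an optimal MSED set $U^*$ as a matching to get $\sw(M^*)\ge\rho^*(G)$, and chain the result with \Cref{thm:mis-to-msed}.
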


Combining \Cref{thm:mis-to-msed} and \Cref{thm:msed-to-pd} together proves \Cref{thm:pd-hard}. Now, we prove \Cref{thm:msed-to-pd} to end the section:

\begin{proof}[Proof of \Cref{thm:msed-to-pd}]
    Let $G = (V, E)$ be an undirected graph with $|V| = m$ vertices. We construct a Product-Distance instance with $n = m$ advertisers and $m$ slots, such that each slot maps to one vertex in $V$. For vertex $u \in V$, we define $j(u)$ to be the slot in the Product-Distance instance corresponding to vertex $u$. Similarly, for slot $j \in [m]$, we define $u(j)$ to be the vertex in graph $G$ corresponding to slot $j$.

    The values and the distance function of the instance are as follows:
    \begin{itemize}
        \item Valuation: we have $v_{i, j} = 1$ for every advertiser $i$ and slot $j$.
        \item Metric: for slots $j_1, j_2 \in [m]$ such that $j_1 \neq j_2$, we set $d(j_1, j_2) = 0.5$ if edge $(j_1, j_2) \in E$; otherwise (i.e., $(j_1, j_2) \notin E$), we set $d(j_1, j_2) = 1$. This setting satisfies the triangle inequality, as for any $j_1, j_2, j_3$, we have
        \[
        d(j_1, j_2) ~\leq~ 1 ~=~ 0.5 + 0.5 ~\leq~ d(j_1, j_3) + d(j_3, j_2).
        \]
    \end{itemize}

    We prove via contradiction to show that the Product-Distance model does not admit a good approximation algorithm, i.e., a good approximation algorithm for the Product-Distance model implies a good approximation algorithm for the MSED problem.

    Suppose there exists an algorithm $\alg$ that achieves an $m^{1 - \epsilon}$-approximation for the Product-Distance model in $O(\poly(m))$ time for some $\epsilon > 0$. Given the above constructed Product-Distance instance, let $M^*$ be the optimal matching of the instance, and $M$ be the matching given by the algorithm $\alg$. Then, we have
    \begin{align}
        \sw(M) ~=~ \sum_{j \in S(M)} 0.5^{\sum_{j' \in S(M) \setminus \{j\}} \one[d(j, j') = 0.5]} ~\geq~ \frac{1}{m^{1 - \epsilon}} \cdot \sw(M^*), \label{eq:hardness-swm-good}
    \end{align}
    where the first equality follows from the fact that all advertisers are with valuation $1$. Therefore, the value from slot $j$ is exactly the product of $d(j, j')$ for every $j' \in S(M) \setminus \{j\}$, which is exactly $0.5$ to the power of number of $j'$'s that satisfies $d(j, j') = 0.5$.

    Now, we convert matching $M$ back to a feasible solution of the MSED problem: Let $U = \{u(j): j \in S(M)\}$ be the subset of vertices we construct for the MSED problem. Our goal is to show that $\rho(U)$ is a good approximation for the MSED problem. We have
    \begin{align}
        \rho(U) ~=~ \sum_{u \in U} 0.5^{\deg_U(u)} ~&=~ \sum_{u \in U} 0.5^{\sum_{u' \in U \setminus \{u\}} \one[(u, u') \in E]} \notag \\
        ~&=~ \sum_{j \in S(M)} 0.5^{\sum_{j' \in S(M) \setminus \{j\}} \one[d(j, j') = 0.5]} \notag \\
        ~&=~ \sw(M) ~\geq~ \frac{1}{m^{1 - \epsilon}} \cdot \sw(M^*),\label{eq:hardness-rhou-good}
    \end{align}
    where the second line uses the fact that by definition of subset $U$, every slot $j \in S(M)$ maps to a vertex $u(j) \in U$; conversely,
    every vertex $u \in U$ maps to slot $j(u) \in S(M)$; conversely. Furthermore, by the construction of the Product-Distance instance, condition $(u, u') \in E$ is equivalent to the condition $d(j(u), j(u')) = 0.5$. 

    \eqref{eq:hardness-rhou-good} gives a lower bound for $\rho(U)$. Let $U^* \subseteq V$ be the optimal solution of the MSED problem, i.e., $\rho^*(G) = \rho(U^*)$. It remains to give an upper-bound for $\rho(U^*)$. Consider the following matching $\widetilde M$ for the Product-Distance instance: we first select a subset of slots so that $S(\widetilde M) = \{j(u): u \in U^*\}$. Then, for every $j \in S(\widetilde M)$, we match an arbitrary advertiser $i \in [n]$ to slot $j$. This leads to a solution with objective
    \begin{align*}
        \sw(\widetilde M) ~&=~ \sum_{j \in S(\widetilde M)} 0.5^{\sum_{j' \in S(\widetilde M) \setminus \{j\}} \one[d(j, j') = 0.5]} \\
        ~&=~ \sum_{u \in U^*} 0.5^{\sum_{u' \in U^* \setminus \{u\}} \one[(u, u') \in E]} \\
        ~&=~ \sum_{u \in U^*} 0.5^{\deg_{U^*}(u)} ~=~ \rho(U^*),
    \end{align*}
    where the first line follows the same argument as \eqref{eq:hardness-swm-good}, and the second line follows the same argument as \eqref{eq:hardness-rhou-good}. Combining the above inequality with the fact that $\sw(\widetilde M) \leq \sw(M^*)$, we have
    \[
    \rho(U) ~\geq~ \frac{1}{m^{1 - \epsilon}} \cdot \sw(M^*) ~\geq~ \frac{1}{m^{1 - \epsilon}} \cdot \sw(\widetilde M) ~=~\frac{1}{m^{1 - \epsilon}} \cdot \rho(U^*),
    \]
    i.e., algorithm $\alg$ gives a polynomial time algorithm that $|V|^{1-\epsilon}$-approximates the MSED problem, which is in contrast to \Cref{thm:mis-to-msed}. Therefore, no polynomial time algorithm can approximate the Product-Distance model within a factor of $m^{1 - \epsilon}$.
\end{proof}

\bibliographystyle{alpha}
\bibliography{ref}
\appendix

\section{Omitted Proofs in \Cref{sec:LP}}
\label{sec:LP_Missing}

\subsection{Proof of \Cref{lma:reduction}}

\begin{proof}[Proof of \Cref{lma:reduction}]
Let the Nearest-Neighbor instance be given by advertisers $[n]$, slots $[m]$, nonnegative valuations $v_{i,j}$, and a normalized metric $d:[m]\times[m]\to[0,1]$ (page diameter $1$).

We build an instance of the Grouped Pseudo-Disk Selection Problem on the same metric space $(\X,d)$ with $\X=[m]$.  For each advertiser $i\in[n]$ (this becomes group $T_i$) and each  pair $(j,j')$ with $j\in[m]$ and $j'\in[m]\setminus\{j\}$, we create a pseudo-disk $ D_{i,j,j'} = \big(c_j, r = d(j, j')/2, w = v_{i,j} \cdot d(j, j')\big)$, where $c_j \in \X$  represents the point of slot $j$ in the metric space. Then, we have $|T_i| = m(m-1)\le m^2$ for all $i$, and there are $K = n$ groups of pseudo-disks.

\noindent \textbf{Proof of the first statement.} Let $T^*$ be the optimal selection that satisfies \Cref{cons:I}. We aim to show
\begin{align}
    \sw(M^*) ~:=~& \sum_{(i, j) \in M^*} v_{i,j} \cdot \min_{j' \in S(M^*) \setminus \{j\}} d(j, j') \notag \\
    ~\leq~& \sum_{(i,j, j'): D_{i,j,j'} \in T^*}  v_{i,j} \cdot d(j, j') . \label{eq:disk-opt-to-opt}
\end{align}

To show \eqref{eq:disk-opt-to-opt}, we first define
\[
T' ~:=~ \left \{D_{i, j, j'}: (i, j) \in M^* \land j' = \arg \min_{j' \in S(M^*) \setminus \{j\}} d(j, j')\right \}.
\]
to be a selection of pseudo disks in the Grouped Pseudo-Disk Selection problem. Note that $T'$ is a feasible selection, because if vertex $x \in \X$ is covered by two disks $D_{i_1, j_1, j'_1}, D_{i_2, j_2, j'_2} \in T'$, by the definition of pseudo disks, we have
\begin{align*}
    d(j_1, j_2) ~&\leq~ d(j_1, v) + d(j_2, v) \\
    ~&<~ 0.5 \cdot \big(d(j_1, j'_1) + d(j_2, j'_2)\big) \\
    ~&\leq~ 0.5 \cdot \big(d(j_1, j_2) + d(j_2, j_1)\big) ~=~ d(j_1, j_2),
\end{align*}
where the first line follows from triangle inequality, the second line follows from the definition of $D_{i_1, j_1, j'_1}, D_{i_2, j_2, j'_2}$, and the last line uses the fact that $j'_1$ is the vertex nearest to $j_1$ in $S(M^*)$. Since $j_2 \in S(M^*)$, there must be $d(j_1, j'_1) \leq d(j_1, j_2)$. Similarly, we have $d(j_2, j'_2) \leq d(j_2, j_1)$. Then, the assumption that $v$ is covered by two different disks leads to a contradiction, and therefore $T'$ is feasible. By arguing that $\val(T') \leq \val(T^*)$,
\begin{align*}
     \sum_{(i,j, j'): D_{i,j,j'} \in T^*}  v_{i,j} \cdot d(j, j') ~&\geq~ \sum_{(i,j, j'): D_{i,j,j'} \in T'}  v_{i,j} \cdot d(j, j') \\
     ~&=~ \sum_{(i, j) \in M^*} v_{i,j} \cdot \min_{j' \in S(M^*) \setminus \{j\}} d(j, j'),
\end{align*}
where the equality follows from the definition of $T'$.

\noindent \textbf{Proof of the second statement.} To prove the second statement, we aim to show for any feasible selection $T$ that satisfies \Cref{cons:II}, there exists a feasible matching $M$ such that
\begin{align}
        \sw(M) ~\geq~ 0.5 \cdot \sum_{(i,j, j_1): D_{i,j,j_1} \in T} v_{i,j} \cdot d(j, j_1). \label{eq:disk-any-to-any}
\end{align}
To prove \eqref{eq:disk-any-to-any}, we define
\[
M ~:=~ \left\{(i, j): \text{there exists } j' \text{ such that }D_{i,j,j'} \in T^*\right\}.
\]

Note that $M$ is a feasible matching, as if an advertiser $i$ is matched to two slots $j_1, j_2$, it is in contrast to the constraint that $|T \cup T_i| = 1$ for the Grouped Pseudo-Disk Selection Problem. On the other hand, if a slot $j$ is matched to two advertisers $i_1$ and $i_2$,  $c_j$ must be covered by two different disks in $T$ with $v$ serving as the center, which is in contrast to the assumption that $c_j$ can be covered by at most once. Therefore, $M$ is feasible. 

For $(i, j) \in M$, let $j_1$ be the point in the metric space such that $D_{i, j, j_1} \in T$. Such $j_1$ must be unique, as the center $c_j$ can be covered by at most once. Let $j_2 = \arg \min_{j' \in S(M) \setminus \{j\}} d(j, j')$. Note that there must be $0.5\cdot d(j, j_1) \leq d(j, j_2)$: since $j_2$ is a selected slot in $M$, there must be a corresponding disk in $T$ that uses $c_{j_2}$ as the center. If $0.5 \cdot d(j, j_1) > d(j, j_2)$ is also satisfied, it implies that $c_{j_2}$ is also covered by the disk $D_{i, j, j_1} \in T$, which is in contrast to \Cref{cons:II}. Then, $0.5\cdot d(j, j_1) \leq d(j, j_2)$ must be satisfied, which implies
\begin{align*}
    0.5\cdot \sum_{(i,j, j_1): D_{i,j,j_1} \in T} v_{i,j} \cdot d(j, j_1) ~&\leq~  \sum_{(i, j) \in M} v_{i,j} \cdot \min_{j' \in S(M) \setminus \{j\}} d(j, j') \\
    ~&=:~ \sw(M). \qedhere
\end{align*}
\end{proof}

\subsection{Proof of \Cref{thm:nn-auc}}

\begin{proof}[Proof of \Cref{thm:nn-auc}]
    By Myerson's Lemma \cite{Myerson1981}, it's sufficient to show that the algorithm for \Cref{thm:nn-alloc-constant} is \emph{monotone}, that is, given a Nearest-Neighbor instance, if we increase the values $\{v_{i, j}\}$ to $\{v'_{i,j}\}$ so that $v_{i, j} \leq v'_{i, j}$, the (expected) welfare from the matching with valuations $\{v_{i, j}\}$ is at most the (expected) welfare from the matching with valuations $\{v'_{i,j}\}$.

    We first show that after reduction, the Grouped Pseudo-Disk Selection problem is monotone. This follows from two observations: firstly, the objective of \eqref{lp:relax} is monotone, because when the optimal solution of \eqref{lp:relax} is fixed, increasing the weight $w_i$ of a disk does not change the feasibility of the LP solution, but the objective of the LP increases. Secondly, by \eqref{eq:equal}, the output $\widetilde T$ of \Cref{alg:rounding} guarantees that $\E[\val(\widetilde T)]$ exactly equals to $\frac{1}{9}$ times the objective of \eqref{lp:relax}. Therefore, the Grouped Pseudo-Disk Selection problem is monotone.

    Note that when reducing a Nearest-Neighbor instance to a Grouped Pseudo-Disk Selection instance, the monotonicity is preserved, i.e., increasing the value of $v_{i, j}$ leads to a larger weight $w_i$ for the Grouped Pseudo-Disk Selection instance. Then, to prove \Cref{thm:nn-auc}, it only remains to show that the monotonicity is preserved when converting the solution $\widetilde T$ back to a feasible matching. To achieve this, given $\widetilde T$, we perform the modified conversion: for every $D_{i, j, j'} \in \widetilde T$, 
    \begin{itemize}
        \item Add $(i, j)$ into $\widetilde M$.
        \item Add an empty slot $\hat j$ that satisfies $d(j, \hat j) = d(j, j')/2$ and a virtual advertiser $\hat i$ that has $0$ value for every slot into the instance, and add $(\hat i, \hat j)$  into   $\widetilde M$.
    \end{itemize}
    To verify the feasibility of $\widetilde M$, the feasibility of $\{(i, j): D_{i, j, j'} \in \widetilde T\}$ is guaranteed by \Cref{lma:reduction}, and the feasibility of $\{(\hat i, \hat j)\}$ follows from the fact that both $\hat i$ and $\hat j$ are newly added to the instance. Furthermore, note that after adding $(\hat i, \hat j)$ to the matching, each $(i, j): D_{i, j, j'}$ contributes exactly $v_{i, j} \cdot d(j, j')/2$ to $\sw(\widetilde M)$, which is $0.5$ times the weight of $D_{i, j, j'}$, and each $(\hat i, \hat j)$ contributes $0$ to $\sw(\widetilde M)$. Then, we have $\sw(\widetilde M) = 0.5 \cdot \val (\widetilde T)$, and therefore the monotonicity is preserved.
\end{proof}

\clearpage

\appendix

\end{document}